\def\eqref#1{equation~\ref{#1}}
\def\1{\bm{1}}
\DeclareMathAlphabet{\mathsfit}{\encodingdefault}{\sfdefault}{m}{sl}
\SetMathAlphabet{\mathsfit}{bold}{\encodingdefault}{\sfdefault}{bx}{n}
\def\gH{{\mathcal{H}}}
\def\gL{{\mathcal{L}}}
\newcommand{\E}{\mathbb{E}}
\newcommand{\KL}{D_{\mathrm{KL}}}
\newtheorem{theorem}{Theorem}
\newtheorem{proposition}{Proposition}
\newtheorem{lemma}{Lemma}
\newtheorem{definition}{Definition}
\newcommand{\sbra}{\left(}
\newcommand{\sket}{\right)}
\newcommand{\ind}[1]{\mathbbm{1}\left[#1\right]}
\newcommand{\hL}{\widehat{\gL}}
\newcommand{\hLg}{\hL^{\gamma}}
\newcommand{\Lg}{\gL^{\gamma}}
\newcommand{\Lgh}{\gL^{\gamma/2}}
\newcommand{\risk}{\gL^0}
\definecolor{tan}{rgb}{0.937, 0.902, 0.843}
\theoremstyle{plain}
\newtheorem{corollary}[theorem]{Corollary}
\theoremstyle{definition}
\theoremstyle{remark}
\icmltitlerunning{Homophily Enhanced Graph Domain Adaptation}
\begin{document}

\twocolumn[
\icmltitle{Homophily Enhanced Graph Domain Adaptation }

\icmlsetsymbol{equal}{*}

\begin{icmlauthorlist}
\icmlauthor{Ruiyi Fang}{sch,equal}
\icmlauthor{Bingheng Li}{sch1,equal}
\icmlauthor{Jingyu Zhao}{sch2}
\icmlauthor{Ruizhi Pu}{sch}
\icmlauthor{Qiuhao Zeng}{sch}
\icmlauthor{Gezheng Xu}{sch}
\icmlauthor{Charles Ling}{sch,sch3}
\icmlauthor{Boyu Wang}{sch,sch3}
\end{icmlauthorlist}

\icmlaffiliation{sch}{Western University, Ontario, Canada}
\icmlaffiliation{sch1}{Michigan State University, Michigan, USA}
\icmlaffiliation{sch2}{University of Electronic Science and Technology of China, Chengdu, Sichuan Province, China}
\icmlaffiliation{sch3}{Vector Institute, Ontario, Canada}

\icmlcorrespondingauthor{Boyu Wang}{bwang@csd.uwo.ca}



\icmlkeywords{graph domain adaptation, transfer learning, heterophily}

\vskip 0.3in
]





\printAffiliationsAndNotice{\icmlEqualContribution} 

\begin{abstract}
Graph Domain Adaptation (GDA) transfers knowledge from labeled source graphs to unlabeled target graphs, addressing the challenge of label scarcity. In this paper, we highlight the significance of graph homophily, a pivotal factor for graph domain alignment, which, however, has long been overlooked in existing approaches. Specifically, our analysis first reveals that homophily discrepancies exist in benchmarks. Moreover, we also show that homophily discrepancies degrade GDA performance from both empirical and theoretical aspects, which further underscores the importance of homophily alignment in GDA. Inspired by this finding, we propose a novel homophily alignment algorithm that employs mixed filters to smooth graph signals, thereby effectively capturing and mitigating homophily discrepancies between graphs. Experimental results on a variety of benchmarks verify the effectiveness of our method.
\end{abstract}

 \section{Introduction}

In the era of massive graph data collection, graphs often integrate both structural topology and node attributes, providing rich contexts for numerous real-world applications. However, they are often constrained by label scarcity, as annotating structured data remains challenging and costly~~\cite{xu2022graph,xurevisiting,zeng2024generalizing,zeng2023foresee}. To address this challenge, Graph Domain Adaptation (GDA) has emerged as an effective paradigm to transfer knowledge from labeled source graphs to unlabeled target graphs~~\cite{chen2019graph, shi2024graph}.
Conventional GDA methods primarily focus on aligning graph structures across domains by leveraging deep domain adaptation techniques~~\cite{shen2020network, wu2020unsupervised}. While these approaches have achieved promising results in GDA~~\cite{yan2020graphae, shen2023domain}, their efforts are centered on mitigating the discrepancy in structural and attribute distributions. 
\begin{figure}
    [!htbp]
    \centering
    \subfigure[Airport dataset]{
    \includegraphics[width=37mm]{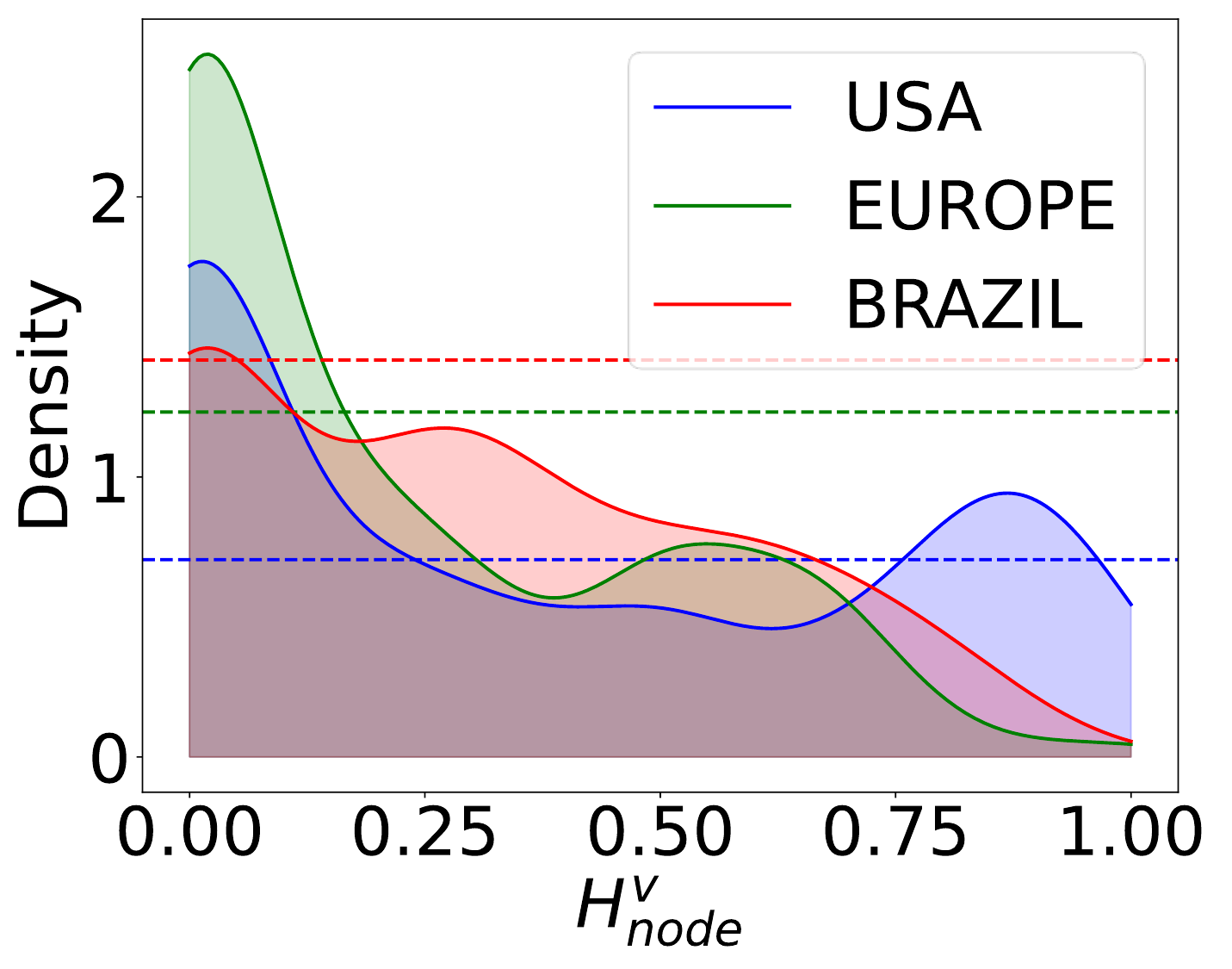}
    }
    \hspace{+1mm}
    \subfigure[ACM dataset]{
    \includegraphics[width=37mm]{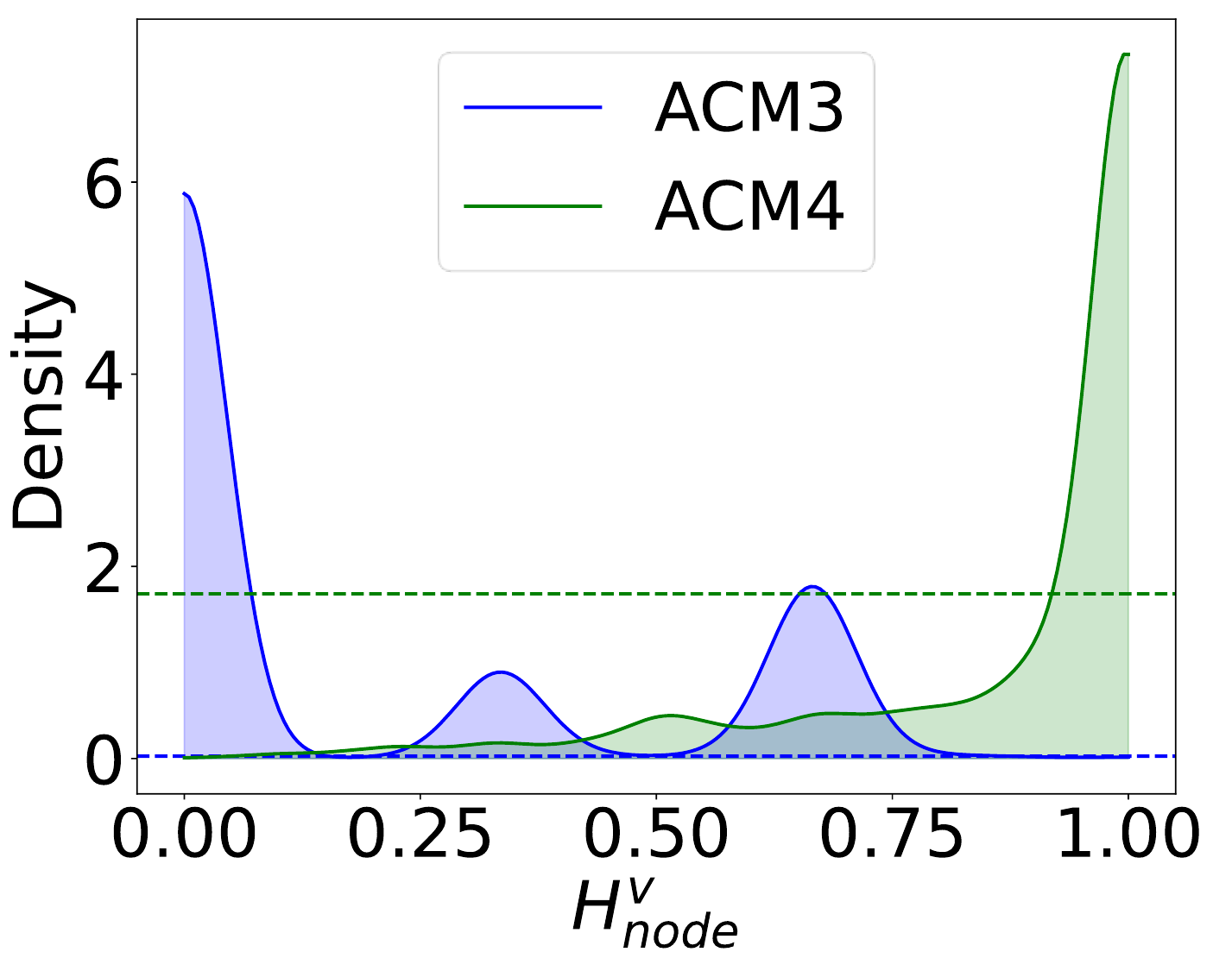}
    }
    \vspace{-2.mm}
     \caption{This represents node homophily distibution in two benchmark. The dotted line represents the overall node homophily ratio of the entire graph. This shows the local node homophily distibution shift is existing in various levels of homophilic groups. Homophily means that similar nodes are prone to connect to each other.}
  \label{fig: homo distribution}
\end{figure}

In this work, we highlight the role of graph homophily discrepancies and their impact on GDA performance. Specifically, we investigate the local homophily distribution of graph on two GDA benchmarks, as shown in Figure \ref{fig: homo distribution}. It can be observed that while the overall homophily distribution discrepancy of the graph is relatively small, the discrepancy within homophilic and heterophilic subgroups is more substantial. For example, as shown in Figure \ref{fig: homo distribution} (a), the overall graph node homophily ratio between ACM3 and ACM4 appears relatively similar, and the distribution of homophilic and heterophilic groups varies more substantially.
To further investigate its impact on GDA, we conduct another experiment to explore how the homophilic ratio affects the cross-network performances, as shown in Figure \ref{fig: homo performance}. It can be observed that the classification accuracy of target graph nodes exhibits a negative correlation with homophily divergence across different homophily ratios in datasets, which indicates the importance of separately dealing with heterophilic and homophilic groups.
Besides, this observation also suggests that the discrepancies in homophilic distributions between source and target graphs may significantly impact target node classification performance, thereby highlighting the importance of mitigating homophily divergence.

To further justify this implication, we theoretically investigate the generalization performance of GDA from the PAC-Bayes perspective. Specifically, we show that the target loss can be upper bounded in terms of homophilic signal shift, heterophilic signal shift, attribute signal shift and graph node heterophily distribution shift. This aligns with our empirical findings, suggesting that discrepancies in graph signals should be addressed separately. To this end, we propose a novel cross-channel graph homophily enhanced alignment (HGDA) algorithm for cross-network node classification, as shown in Figure \ref{fig:overview}. HGDA utilizes homophilic, full-pass, and heterophilic filters to separately extract and align the homophily signal, heterophily signal, and attribute signal for both the source and target domains.

Our contributions are summarized as follows:
\begin{itemize}
    \item {Empirical insights}: We first highlight the significance of homophily distribution discrepancy in GDA and empirically examine its impact on GDA performance.
    \item {Theoretical justification}: We theoretically justify the impact of homophily distribution shift on GDA and demonstrate that this discrepancy can be mitigated by addressing homophilic and heterophilic groups separately.
    \item {Algorithmic framework}: Inspired by theoretical insights, we propose HGDA, which mitigates homophily distribution shifts by aligning and capturing homophily, heterophily, and attribute signals.
\end{itemize}

\begin{figure*}
    [!htbp]
    \centering
    \subfigure[U $\leftrightarrow$ E ]{
    \includegraphics[width=37mm]{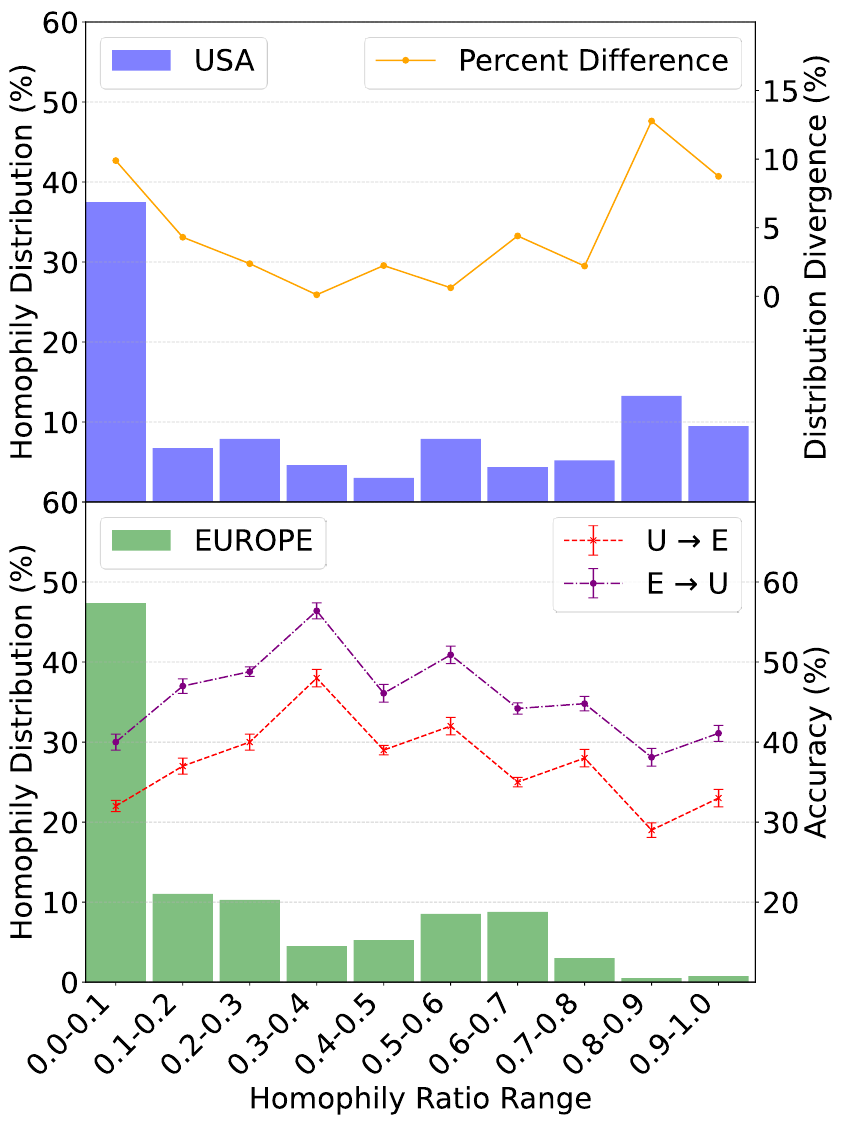}
    }
    \hspace{+1mm}
    \subfigure[U $\leftrightarrow$ B ]{
    \includegraphics[width=37mm]{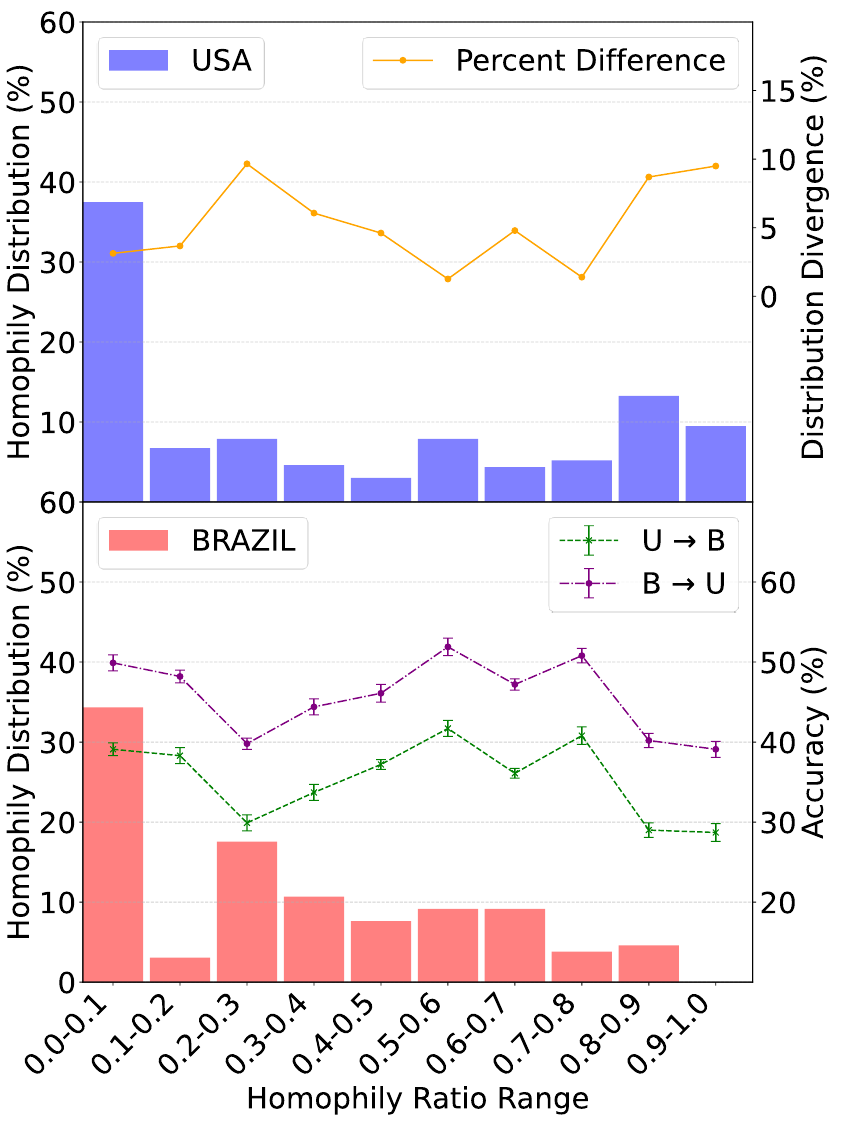}
    }
     \hspace{+1mm}
    \subfigure[E $\leftrightarrow$ B]{
    \includegraphics[width=37mm]{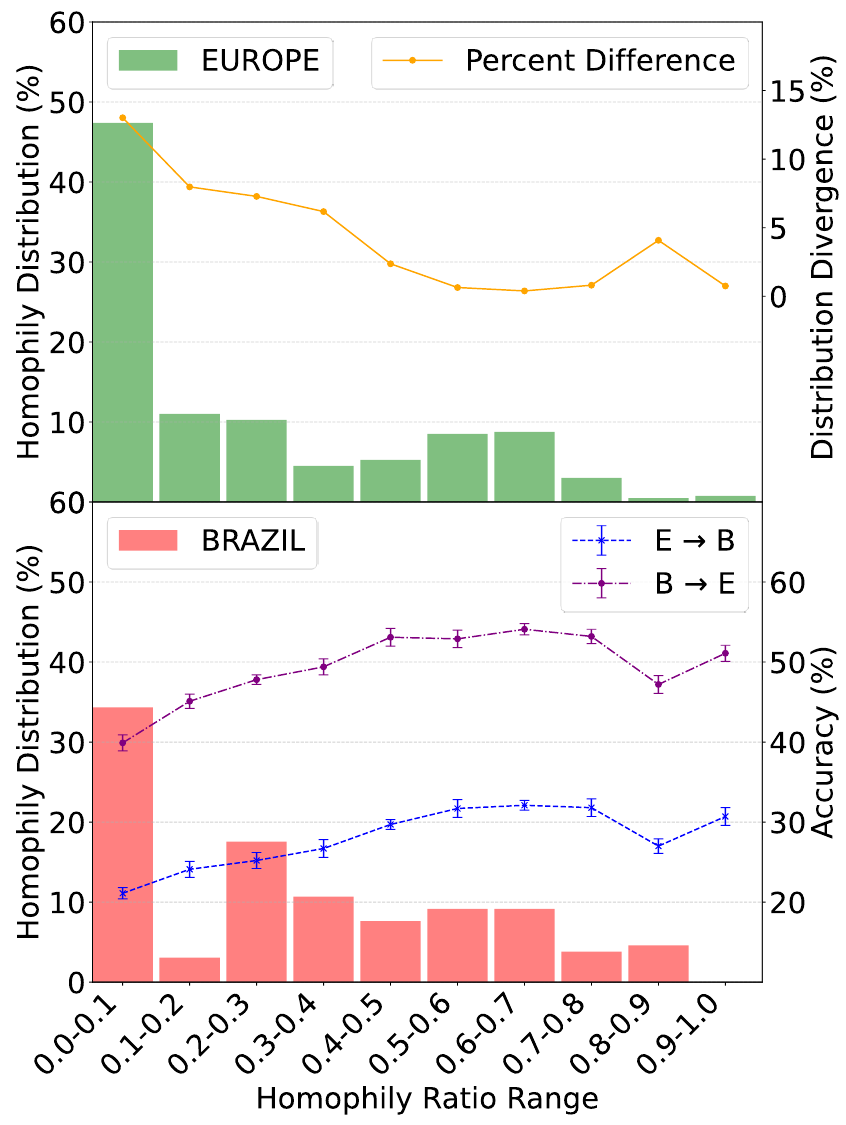}
    }
    \hspace{+1mm}
    \subfigure[A3 $\leftrightarrow$ A4]{
    \includegraphics[width=37mm]{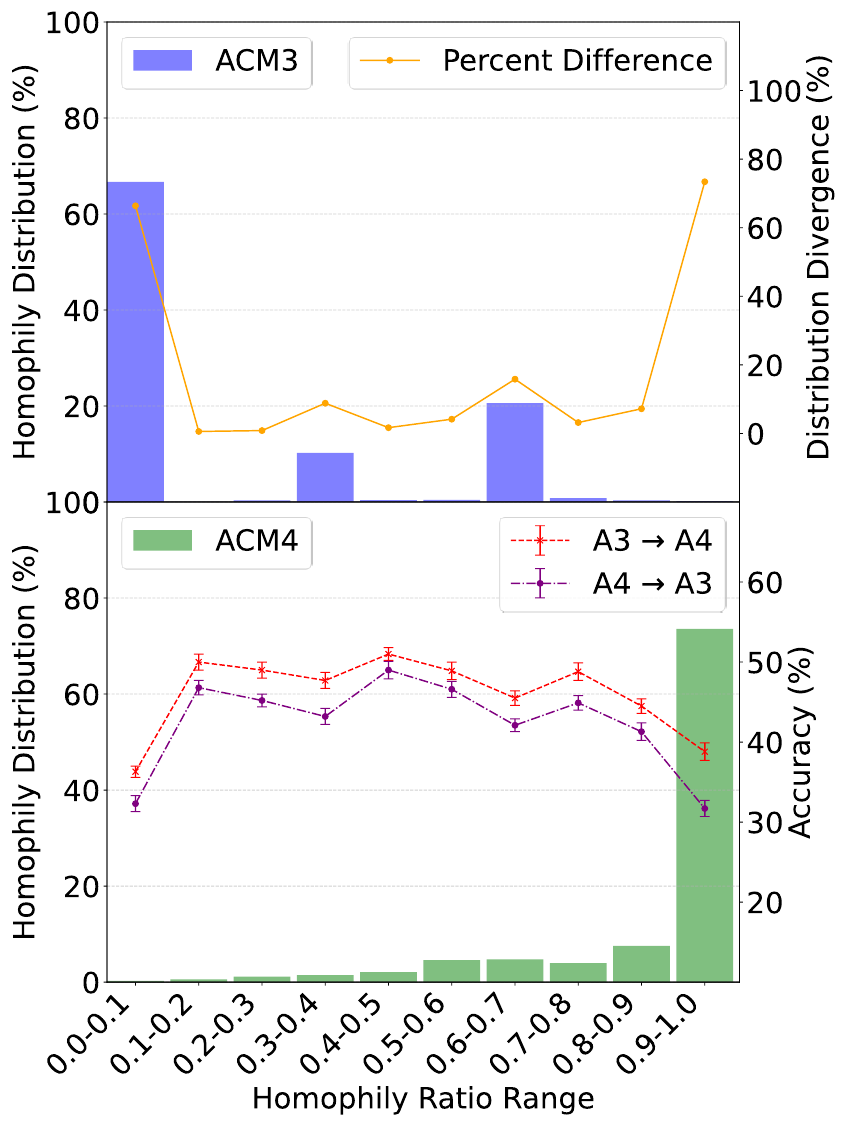}
    }
    \vspace{-2.mm}
     \caption{Performance of cross-network classification tasks in different homophilic ratio ranges. For a fair comparison, we use a two-layer GCN with standard unsupervised GDA settings as our evaluation model. The \textbf{X-axis} represents node subgroups categorized by specific ranges of homophily ratios. The bars in the vertical bar chart represent the proportion of nodes in each subgroup, defined by a specific homophily ratio range relative to the total number of nodes in the graph (left \textbf{Y-axis}). The line chart in the figure above represents the difference in the proportion of nodes in each subgroup between the corresponding upper and lower graphs (upper right \textbf{Y-axis}). The line graph in the figure below represents the target node classification accuracy for two corresponding tasks in each subgroup with different homophily ratios (lower right \textbf{Y-axis}). We observe that homophily divergence has a negative correlation with target node classification accuracy across various homophily ratios.}
  \label{fig: homo performance}
\end{figure*}
	
\section{Related Work}
\subsection{Heterophilic Graph Learning}
 Heterophilic structure is prevalent in practice, from personal relationships in daily life to chemical and molecular scientific study~\cite{fang2022structure,he2025mamba}. Developing powerful heterophilic GNN models is a hot research topic. ~\cite{lim1,qian2024upper,yang2024graphlearner,CONVERT,HEATS,li2025simplified} provide general benchmarks for heterophilic graph learning. In addition, many methods have been proposed to revise GNNs for heterophilic graphs. ~\cite{DMP} specifies propagation weight for each attribute to make GNNs fit heterophilic graphs and ~\cite{GloGNN} explores the underlying homophilic information by capturing the global correlation of nodes. ~\cite{H2GCN} enlarges receptive field via exploring high-order structure. ~\cite{GPR} adaptively combines the representation of each layer and ~\cite{GCNII} integrates embeddings from different depths with residual operation. Recent studies ~\cite{mao2024demystifying,liu2025higher,li2024pc,li2025unveiling,luan2022revisiting,huang2023robust,huangnodes,DUALFormer,GOUDA} reveal that graph homophilc and heterophilic patterns impact graph clustering performance between the test and training sets. However, the above methods failed to explore the role of homophily in GDA and minimize their discrepancy in many aspects~\cite {kang2024cdc,pu2025leveraging, xu2025unraveling,xie2024provable,xie2025one}.
 \subsection{Graph Domain Adaptation}
Recent Domain Adaptation works have differences from GDA methods ~\cite{li2024phrase,chen2024text,li2025let,li2024object,li2025simplified,zeng2025zeta,xu2024intersectional}. Identifying the differences between the target and source graphs in GDA is crucial. For graph-structured data, several studies have explored cross-graph knowledge transfer using graph domain adaptation (GDA) methods~\cite{shen2019network,dai2022graph,shi2024graph}. Some graph information alignment-based methods~\cite{shen2020adversarial,shen2020network,yan2020graphae,shen2023domain,CCGC} adapt graph source node label information by integrating global and local structures from both nodes and their neighbors. 
UDAGCN~\cite{wu2020unsupervised} introduces a dual graph convolutional network that captures both local and global knowledge, adapting it through adversarial training. Furthermore, ASN and GraphAE~\cite{zhang2021adversarial,guo2022learning} consider extracting and aligning graph specific information like node degree and edge shift, enabling the extraction of shared features across networks. SOGA~\cite{mao2021source} is the first to incorporate discriminability by promoting structural consistency between target nodes of the same class, specifically for source-free domain adaptation (SFDA) on graphs.
SpecReg~\cite{you2022graph} applies an optimal transport-based GDA bound and demonstrates that revising the Lipschitz constant of GNNs can enhance performance through spectral smoothness and maximum frequency response. 
JHGDA~\cite{shi2023improving} tackles hierarchical graph structure shifts by aggregating domain discrepancies across all hierarchy levels to provide a comprehensive discrepancy measure.
ALEX~\cite{yuan2023alex} creates a label-shift-enhanced augmented graph view using a low-rank adjacency matrix obtained through singular value decomposition, guided by a contrasting loss function. SGDA~\cite{qiao2023semi} incorporates trainable perturbations (adaptive shift parameters) into embeddings via adversarial learning, enhancing source graphs and minimizing marginal shifts. PA~\cite{liu2024pairwise} mitigates structural and label shifts by recalibrating edge weights to adjust the influence among neighboring nodes, addressing conditional structure shifts effectively.

\section{Empirical and Theoretical Study}
\subsection{Notation}
An undirected graph $ G = \left\{\mathcal{V}, \mathcal{E}, A, X, Y\right\} $ consists of a set of nodes $ \mathcal{V} $ and edges $ \mathcal{E} $, along with an adjacency matrix $ A $, a feature matrix $ X $, and a label matrix $ Y $. The adjacency matrix $ A \in \mathbb{R}^{N \times N} $ encodes the connections between $ N $ nodes, where $ A_{ij} = 1 $ indicates an edge between nodes $ i $ and $ j $, and $ A_{ij} = 0 $ means the nodes are not connected. The feature matrix $ X \in \mathbb{R}^{N \times d} $ represents the node features, with each node described by a $ d $-dimensional feature vector. Finally, $ Y \in \mathbb{R}^{N \times C} $ contains the labels for the $ N $ nodes, where each node is classified into one of $ C $ classes. Thus, the symmetric normalized adjacency matrix is $\tilde{A}=(D+I)^{-\frac{1}{2}} (A+I) (D+I)^{-\frac{1}{2}}$ and the normalized Laplacian matrix is $\tilde{L}=I-\tilde{A}$. 
In this work, we explore the task of node classification in a unsupervised setting, where both the node feature matrix $X$ and the graph structure $A$ are given before learning. Now we can define source graph $G^S=\left\{\mathcal{V}^S, \mathcal{E}^S, A^S, X^S, Y^S\right\}$ and target graph $G^T=\left\{\mathcal{V}^T, \mathcal{E}^T, A^T, X^T\right\}$.  

\textbf{Local node homophily ratio} is a widely used metric for quantifying homophilic and heterophilic patterns. It is defined as the proportion of a node's neighbors that share the same label as the node~~\cite{peigeom, zhu2020beyond, Li2022FindingGH,MiaoZWL0024}.
It is formally defined as 
\begin{equation}
H_{\mathrm{node}}^v=\frac{\left|\left\{u \mid u \in N_v, y_u=y_v\right\}\right|}{\left|N_v\right|}
\end{equation}
where where $N_v$ denotes the set of one-hop neighbors of node $v$ and $y_i$ is the node $i$ label.

\subsection{Empirical Insight}
To futher explore node homophily ratio distribution shifts on the aforementioned datasets in Figure\ref{fig: homo distribution}, we evaluate the effectiveness of GDA on different homophily subgroups. 
To evaluate the homophily divergence effect to GDA, we conduct an examination of node subgroups with different homophilic and heterophilic groups. The following experiments are conducted on two common graph benchmark airport datasets with unsupervised GDA node classcification task. Specifically, we focus on how the homophily discrepancy between the source and target graphs affects the performance of node classification on the target graph. Figure 2 shows the classification accuracy for different homophily ratio subgroups. Experimental results on four datasets are presented in Figure \ref{fig: homo performance}. 
It can be observed that the classification accuracy of target graph nodes negatively correlates with homophily divergence across heterophilic and homophilic groups in the datasets. For example, as shown in Figure \ref{fig: homo performance} (d), subgroups within homophily divergence in the 0.0–0.1 (heterophilic groups) and 0.9–1.0 (homophilic groups) ranges exhibit high discrepancies in terms of difference in the proportion (orange line), which leads to low node classification accuracies on the corresponding target subgroups (red and purple lines). Our observations suggest that the homophily discrepancy between the source and target graphs in corresponding subgroups has a significant impact on GDA performance. This highlights the importance of mitigating the homophily divergence in both homophilic and heterophilic groups for GDA. Additional results on more datasets are shown in Appendix \ref{Homophily Divergence Effect to Target classification Performance}.

\subsection{Theoretical Analysis}
In this subsection, we will propose theoretical evidence based on the PAC-Baysian framework to verify the homophily divergence effect on GDA performance and demonstrate the effectiveness of our proposed method. 
To effectively evaluate the performance of a deterministic GDA classifier on structured graph data, we introduce the notion of margin loss.

\noindent \textbf{Margin loss on each domain.} We can define the empirical and expected margin loss of a classifier $\phi\in \Phi$ on source graph $G^S$ and target graph $G^T$. Given $Y^S$, the empirical margin loss of $\phi$ on $G^S$ for a margin $\gamma \ge 0$ is defined as:

\begin{align}
    \hLg_S(\phi) :=\ & \frac{1}{N_S} \sum_{i\in \mathcal{V}^S} \nonumber \\
    & \ind{
        \phi_i(X^S, G^S)[Y_i] \le 
        \gamma + \max_{c\neq Y_i}\phi_i(X^S, G^S)[c]
    }.\label{eq:emp-margin-loss}
\end{align}

where $\ind{\cdot}$ is the indicator function, $c$ represents node labeling . The expected margin loss is then defined as
\begin{equation}
	\label{eq:margin-loss}
	\Lg_S(\phi) := \E_{Y_i\sim \Pr(Y\mid Z_i), i\in \mathcal{V}^S} \hLg_S(\phi)
\end{equation}

\begin{definition}[Graph-Level Node Heterophily Distribution]
Given a graph $ G $, and for any node $ v \in \mathcal{V} $, the node-level heterophily is defined as
$   h_G(v) = \frac{1}{|\mathcal{N}(v)|} \sum_{u \in \mathcal{N}(v)} \mathbb{I}(Y_u \neq Y_v),
$
where $ \mathcal{N}(v) $ denotes the set of neighbors of $ v $.
The graph-level heterophily distribution is then defined as the empirical distribution of the random variable $ h_G(v) $:
\begin{equation}
    P^H_G(h) = \frac{|\{ v \in \mathcal{V} \mid h_G(v) = h \}|}{|\mathcal{V}|}, \quad h \in [0,1].
\end{equation}
Here, $ P^H_G(h) $ represents the proportion of nodes in $ G $ that have a heterophily value of $ h $, forming a probability distribution over possible node heterophily values.
\end{definition}

For a source domain $ G^S $ and a target domain $ G^T $, we denote their respective graph-level heterophily distribution as $ P^H_S(h) $ and $ P^H_T(h) $.
\begin{definition}[Kullback-Leibler Divergence]
For continuous probability distributions with probability density functions \( p(x) \) and \( q(x) \), the KL divergence is given by:
\begin{equation}
    D_{\text{KL}}(P \| Q) = \int_{-\infty}^{\infty} p(x) \log \frac{p(x)}{q(x)} \, dx.
\end{equation}
KL divergence measures the relative entropy or information loss when using \( Q \) to approximate \( P \). It is always non-negative, and \( D_{\text{KL}}(P \| Q) = 0 \) if and only if \( P = Q \).
\end{definition}
\begin{definition}[KL Divergence Between Graph Heterophily Distributions]
Let $ G^S $ and $ G^T$ be two graphs from different domains with heterophily distributions: $P^H_S(h), P^H_T(h)>0$ for all $h$ .
The Kullback-Leibler (KL) divergence between $ P_S^H $ and $ P_T^H $ is defined as:
\begin{equation}
    D_{\text{KL}}(P_S^H \| P_T^H) = \sum_{h \in \mathcal{H}} P^H_S(h) \log \frac{P^H_S(h)}{P^H_T(h)},
\end{equation}
where $ \mathcal{H} $ is the set of possible node heterophily values.
\end{definition}
\begin{theorem}[Domain Adaptation Bound for Deterministic Classifiers]
    Let $ \Phi $ be a family of classification functions. For any classifier $ \phi $ in $ \Phi $, and for any parameters $ \lambda > 0 $ and $ \gamma \geq 0 $, consider any prior distribution $ P $ over $ \Phi $ that is independent of the training data $ \mathcal{V}^S $. With a probability of at least $ 1 - \delta $ over the sample $ Y^S $, for any distribution $ Q $ on $ \Phi $ 
    such that
the following inequality holds:
\begin{align}
    &\risk_T(\phi) \leq \hLg_S(\phi) + \frac{1}{\lambda} \left[ 2(\KL(Q \| P) + 1) + \ln \frac{1}{\delta} \right] 
    + \frac{\lambda}{4N_S} \nonumber \\
    &\quad + \frac{\lambda \rho C C_0}{\sqrt{2\pi}\sigma\cdot N_S\cdot N_T} 
    \Bigg[ \sqrt{D_{\text{KL}}(A^S X^S \| A^T X^T)} \nonumber \\
    &\quad\quad + \sqrt{D_{\text{KL}}(X^S \| X^T)} 
    + \sqrt{D_{\text{KL}}(L^S X^S \| L^T X^T)} \nonumber\\
    &\quad\quad + \sqrt{D_{\text{KL}}(P_S^H \| P_T^H)} \Bigg].\label{da_BOUND}
\end{align}
\end{theorem}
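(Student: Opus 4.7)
\emph{Overall strategy.} The plan is to combine a Catoni-style PAC-Bayes bound on the source margin loss with a change-of-measure transfer to the target domain, and then exploit the mixed-filter structure of the GNN classifier to split the resulting graph-input discrepancy into the three spectral channels plus the heterophily-distribution term. I would take both the prior $P$ and the posterior $Q$ to be Gaussian distributions over parameters, with $Q$ centered at the deterministic classifier $\phi$ and variance $\sigma^2$, so that the two standard margin-perturbation arguments (stochastic$\leftrightarrow$deterministic) apply and the constant $1/(\sqrt{2\pi}\sigma)$ in the bound naturally appears as a Lipschitz factor of the smoothed margin indicator.

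\emph{Source-side bound and transfer to target.} First I would apply a Catoni-style PAC-Bayes inequality, yielding with probability at least $1-\delta$ that
\begin{align*}
\Lgh_S(Q) \le \hLgh_S(Q) + \tfrac{1}{\lambda}\bigl[\KL(Q\|P) + \ln(1/\delta)\bigr] + \tfrac{\lambda}{4N_S}.
\end{align*}
The factor $2(\KL(Q\|P)+1)$ in the stated bound then arises from the standard stochastic-to-deterministic margin conversion (cf.\ Neyshabur et al.): on a high-$Q$-probability event, $\hLg_S(\phi) \ge \hLgh_S(Q)$ and $\Lgh_T(Q) \ge \risk_T(\phi)$, each costing an extra additive $1/\lambda$ slack. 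To pass from source to target I would then write
\begin{align*}
\risk_T(\phi) \le \Lgh_T(Q) \le \Lgh_S(Q) + \bigl|\Lgh_T(Q)-\Lgh_S(Q)\bigr|,
\end{align*}
and bound the gap by the total variation between the joint (input, label) distributions the GNN sees on the two domains. Because the Gaussian-smoothed margin indicator of width $\sigma$ is $C/(\sqrt{2\pi}\sigma)$-Lipschitz, while the classifier head contributes a Lipschitz factor $\rho$ and the margin-converting constant $C_0$, the TV is scaled by $\lambda\rho C C_0/(\sqrt{2\pi}\sigma\, N_S N_T)$; Pinsker's inequality then converts each TV into $\sqrt{\tfrac{1}{2}\KL}$, producing the square-root form of the divergence summands.

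\emph{Decomposing the input KL and main obstacle.} The substantive step is the decomposition of that single KL divergence. A mixed-filter GNN processes in parallel the homophilic (low-pass) signal $A^SX^S$, the attribute (all-pass) signal $X^S$, and the heterophilic (high-pass) signal $L^SX^S$; simultaneously, the node-level label conditional $\Pr(Y\mid Z)$ depends on the neighborhood heterophily profile summarized by $P_G^H$. A chain rule for KL across these four channels, followed by the triangle inequality after taking square roots, splits the joint divergence into the four summands appearing in \eqref{da_BOUND}. The main obstacle will be precisely this last decomposition: one must argue that the three spectral signals enter the classifier essentially as independent channels, so that the KL chain rule is tight rather than loose, and that $P_G^H$ is the correct sufficient statistic for the label-shift component orthogonal to the feature shifts. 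Given that each filter branch is a linear operator composed with a bounded Lipschitz nonlinearity this is plausible, but careful bookkeeping is required to collapse the three branch-wise Lipschitz constants into the single prefactor $\rho C C_0$ in the stated bound.
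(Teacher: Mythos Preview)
Your overall architecture---PAC-Bayes on the source, margin conversion, a transfer penalty, then a spectral decomposition---matches the paper's, but the mechanics differ at two points. First, the paper does not route through total variation and Pinsker: it packages the source--target gap as the log-MGF $D^{\gamma/2}_{S,T}(P;\lambda)=\ln\E_{\phi\sim P}e^{\lambda(\Lgh_T(\phi)-\Lg_S(\phi))}$, bounds this (citing prior PAC-Bayes-on-graphs work) by a node-pairwise sum $\sum_{i\in\mathcal V^S}\sum_{j\in\mathcal V^T}(\|f_i-f_j\|+\rho|h_i-h_j|)$, reads the two sums as $W_1(P_S^F,P_T^F)$ and $W_1(P_S^H,P_T^H)$, and converts each $W_1$ to $\sqrt{D_{\mathrm{KL}}}$ via a Bobkov--G\"otze transportation inequality; the constant $C_0$ in the statement is by definition that transport constant (so your Pinsker route would yield $\sqrt{1/2}$ instead), and the $1/(N_SN_T)$ prefactor falls out of the double sum rather than from any Lipschitz/TV argument. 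Second, the heterophily term is \emph{not} obtained as a fourth channel in a chain rule: it comes directly from the $|h_i-h_j|$ piece of the pairwise bound and stays separate throughout, while only the feature KL is chain-ruled by writing $P^F=P(AX,LX\mid X)P(X)$ and assuming conditional independence of $AX$ and $LX$ given $X$. So your independence worry is milder than you fear for the three spectral channels and entirely moot for $P^H$. Your TV+Pinsker route is conceptually cleaner, but the paper's Wasserstein route is what produces the exact constants in the stated bound and explains why the heterophily term sits additively alongside, not inside, the feature decomposition.
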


where $f$ is the unspecified operator of first-order aggregation, $C$ is the number of classes, and $\rho,\sigma$ are constants controlled by the node feature distribution of different classes. Proof details and additional analysis are in Appendix \ref{app_proof}.

\textbf{How could theory further drive practice?} In Theorem 1 reveals four critical factors that may affect the GDA performance. (\textbf{I}) ${D_{\text{KL}}(A^S X^S \| A^T X^T)}$ represents the graph homophily signal, capturing graph attributes modulated by the adjacency matrix between the source and target graphs. (\textbf{II}) ${D_{\text{KL}}(X^S \| X^T)}$ represents the divergence in graph attribute distributions between the source and target. (\textbf{III}) ${D_{\text{KL}}(L^S X^S \| L^T X^T)}$ represents the distribution divergence of graph heterophilc signal between the source and target graphs, where the graph attribute signal is modulated by the graph Laplacian matrix indicating. (\textbf{IV}) ${D_{\text{KL}}(P_S^H \| P_T^H)}$ is a fixed intrinsic graph parameter that quantifies the heterophily distribution shift between the source and target graphs. In other words, to effectively align source and target graphs, one should consider the homophily, attribute, and heterophily signals.


\begin{figure}
    [!htbp]
    \centering
    \includegraphics[width=80mm]{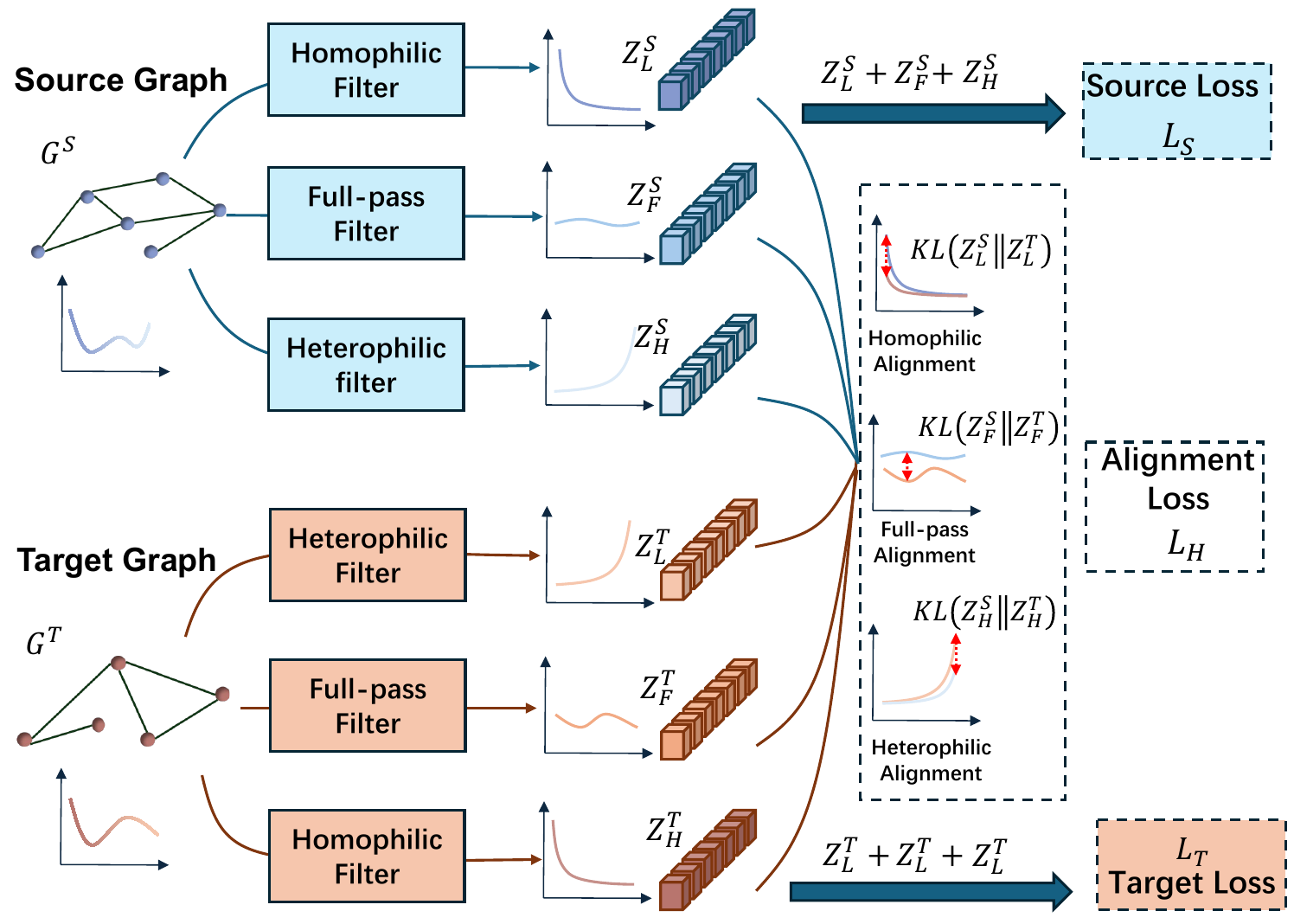}
     \caption{An overview of HGDA. HGDA optimizes heterophily, attribute, and homophily graph signals while minimizing homophily distribution shifts between the source and target graphs.}
  \label{fig:overview}
\end{figure}

\section{Methology}
Motivated by the aforementioned analysis, our framework needs to capture $AX$, $X$ and $LX$ of both source and target graph. In practice, we can treat $A$ as homophilc filter and $L$ as heterophilic filter ~\cite{nt2019revisiting}.
Thus, as shown in Figure~\ref{fig:overview}, we introduce a mixed filter to effectively capture information across different homophilic groups while minimizing various levels homophily shift.
\subsection{Homophily, Full-pass and Heterophily Filter}

\textbf{Homophilic Filter} To extract meaningful features from graphs, we utilize $H_{L}$ that can capture homophilc node pattens information. With the input graph $G$, the
$l$-th layer's output $H^l_{L}$ can be represented as:
\begin{equation}
H^l_{L} = {ReLU}\left(\alpha^l \cdot \tilde{A} H^{l-1} W_L^{l-1}\right)
\end{equation}
where $ H_{L} H^{l-1} W_L^{l-1}$ applies homophilic filtering to the node features from the previous layer $ H^{l-1}$, weight matrix $ W_L^{l-1} $ is a layer-specific trainable weight matrix, $\alpha^l$ is a learnable scalar parameter for the homophilic filtering, $H^l_{L}$ is the activation matrix in the $l$-th layer and $H^{0} = X$.
After applying the homophily filter to capture homophilc node signal, the output node embedding for the source graph is denoted as $ Z_{L}^S$. Similarly, the output target homophilc node embedding is denoted as $ Z_{L}^T$.

\textbf{Full-pass Filter} To obtain comprehensive graph attribute information, we introduce a full-pass filter to extract node attributes. In practice, the full-pass filter is defined as $ H_F = \tilde{A} + \tilde{L} = I $, which is consistent with capturing all attribute information. With the input graph $G$, the $l$-th layer's output $H^l_{F}$ can be represented as:
\begin{equation}
H^l_{F} = {ReLU}\left(\alpha^f\cdot I H^{l-1} W_F^{l-1}\right)
\end{equation}
where $ H_{F} H^{l-1} W_F^{l-1}$ applies full-pass filtering to the node features from the previous layer $ H^{l-1}$, weight matrix $ W_F^{l-1} $ is a layer-specific trainable weight matrix, $\alpha^f$ is a learnable scalar parameter for the full-pass filtering, $H^l_{F}$ is the activation matrix in the $l$-th layer and $H^{0} = X$.
After applying the full-pass filter to extract attribute information, the output node embedding for the source graph is denoted as $ Z_{F}^S $ and output target attribute node embedding is denoted as $ Z_{F}^T $.

\textbf{Heterophilic Filter} Similarly, we utilize $H_{H}$ that can capture heterophilic node pattens information. With the input graph $G$, the
$l$-th layer's output $H^l_{L}$ can be represented as:
\begin{equation}
H^l_{H} = {ReLU}\left(\alpha^h\cdot \tilde{L} H^{l-1} W_H^{l-1}\right)
\end{equation}
where $ H_{H} H^{l-1} W_H^{l-1}$ applies heterophilic filtering to the node features from the previous layer $ H^{l-1}$, weight matrix $ W_H^{l-1} $ is a layer-specific trainable weight matrix, $\alpha^h$ is a learnable scalar parameter for the heterophilic filtering, $H^l_{H}$ is the activation matrix in the $l$-th layer and $H^{0} = X$.
After applying the heterophilic filter to extract heterophilic signal, the output node embedding for the source graph is denoted as $ Z_{H}^S $ and output target node embedding is denoted as $ Z_{H}^T $.

\subsection{Homophily Alignemnt Loss}
The proposed framework follows the transfer learning paradigm, where the model minimizes the divergence of the two graphs. The homophily alignment method captures and aligns various homophily signals, improving target node classification performance. According to Theorem \ref{da_BOUND}, the discrepancy between the source and target graphs is bounded by KL divergence. To this end, $\mathcal{L_H}$ utilizes the KL divergence loss between the source graph embeddings $ Z_{L}^S $, $ Z_{F}^S $ and $ Z_{H}^S $ and the target graph $ Z_{L}^T $, $ Z_{F}^T $ and the target graph embeddings $ Z_{H}^T $, which can be formulated as:
\begin{equation}
\mathcal{L}_H = {KL}\big(Z_{L}^S \| Z_{L}^T\big) + {KL}\big(Z_{H}^S \| Z_{H}^T\big) + {KL}\big(Z_{F}^S \| Z_{F}^T\big),
\end{equation}
We adapt the embeddings after applying the three filters to enforce graph node alignment across different homophily ratios. Specifically, $ KL(Z_L^S \| Z_L^T) $ aligns the homophilic signal, corresponding to the term $ D_{\text{KL}}(A^S X^S \| A^T X^T) $. Similarly, $ KL(Z_H^S \| Z_H^T) $ directly aligns the graph attributes, corresponding to the term $ D_{\text{KL}}(X^S \| X^T) $. Finally, $ KL(Z_F^S \| Z_F^T) $ aligns the heterophilic signal, which is consistent with the term $ D_{\text{KL}}(L^S X^S \| L^T X^T) $ in Theorem \ref{da_BOUND}.

\subsection{Target Node Classification}

The source classifier loss $\mathcal{L}_S\left(f_S\left(Z^S\right), Y^S\right)$ is to minimize the cross-entropy for the labeled data node in the source domain:
\begin{equation}
\centering
\begin{aligned}
\mathcal{L}_S\left(f_S\left(Z^S\right), Y^S\right)=-\frac{1}{N_S} \sum_{i=1}^{N_S} y^S_i \log \left(\hat{y}^S_i\right)
\label{equ: cl_loss}
\end{aligned}
\end{equation} 
where $Z^S = Z_{L}^S + Z_{H}^S + Z_{F}^S$, $y^S_i$ denotes the label of the $i$-th node in the source domain and $\hat{y}^S_i$ are the classification prediction for the $i$-th source graph labeled node $v^S_i \in \mathcal{V}^S$.

To utilize the data in the target domain, we use entropy loss for the target classifier $f_T$ :
\begin{equation}
\centering
\begin{aligned}
\mathcal{L}_T\left(f_T\left(Z^T\right)\right)=-\frac{1}{N_T} \sum_{i=1}^{N_T} \hat{y}^T_i \log \left(\hat{y}^T_i\right)
\label{equ: T_loss}
\end{aligned}
\end{equation} 
where $Z^T = Z_{L}^T + Z_{H}^T + Z_{F}^T$, $\hat{y}^T_i$ are the classification prediction for the $i$-th node in the target graph $v^T_i$. Finally, by combining $\mathcal{L}_{H}$, $\mathcal{L}_{S}$, $\mathcal{L}_{D}$ and $\mathcal{L}_{T}$, the overall loss function of our model can be represented as:
\begin{equation}
\centering
\begin{aligned}
    \mathcal L = \mathcal L_{H} + \alpha \mathcal {L}_{S} + \beta \mathcal L_{T}
\label{equ: all_loss}
\end{aligned}
\end{equation}
where $\alpha$ and $\beta$ are trade-off hyper-parameters. The parameters of the framework are updated via backpropagation. A detailed description of our algorithm is provided in Appendix \ref{Description of Algorithm HGDA}.

\section{Experiment}
We evaluate three variants of Homophily Alignemnt to understand how its different components deal with the homophilic distribution shift on 5 real-word GDA benchmarks. These variants
include $\textbf{HGDA}_L$, which uses only the homophilic filter to obtain and align node embeddings, specifically addressing distribution shifts in homophilic groups. Similarly, $\textbf{HGDA}_F$ and $\textbf{HGDA}_H$ only utlize the full-pass filter and heterophilic filter to obtain and align node embeddings. The detailed experimental setup can be found in Appendix \ref{Experimental Setup}.
\subsection{Datasets}
To demonstrate the effectiveness of our approach on domain adaptation node classification tasks, we evaluate it on four types of datasets, including Airport~~\cite{ribeiro2017struc2vec}, Citation~~\cite{wu2020unsupervised}, Social~~\cite{liu2024rethinking}, ACM~~\cite{shen2024balanced}, and MAG datasets~~\cite{wang2020microsoft}.
The Airport dataset represents airport networks from three countries and regions: the USA (U), Brazil (B), and Europe (E). In this dataset, nodes correspond to airports, and edges denote flight routes. The Citation dataset comprises three citation networks: DBLPv8 (D), ACMv9 (A), and Citationv2 (C), where nodes represent articles and edges indicate citation relationships. As for social networks, we choose Twitch gamer networks, which are collected from Germany(DE), England(EN) and France(FR). We also use two blog networks, Blog1 (B1) and Blog2 (B2), both extracted from the BlogCatalog dataset. To further evaluate the significance of the homophily distribution effect, we curate a real-world dataset with a significant homophily distribution shift. We utilize two commonly referenced ACM datasets. The dataset ACM3(A3) is derived from the ACM Paper-Subject-Paper (PSP) network~~\cite{fan2020one2multi}, while ACM4(A4) is extracted from the ACM2 Paper-Author-Paper (PAP) network~~\cite{fu2020magnn}. These datasets inherently differ in their distributions, making them suitable for evaluating domain adaptation.
For a comprehensive overview, refer to Appendix Table~\ref{tab:datasets}. We also provide other datasets homophily distibution in Appendix \ref{Homophily Divergence Effect to Target classification Performance} Figure \ref{fig: datsets in MAG, Citation, Blog and Twitch}.

\begin{table*}[!t]
\centering
\small
\scalebox{0.85}{\begin{tabular}{l|cccccc|cc|cc}
\toprule
Methods & U $\rightarrow$ B & U $\rightarrow$ E & B $\rightarrow$ U & B $\rightarrow$ E & E $\rightarrow$ U & E $\rightarrow$ B & A3 $\rightarrow$ A4 & A4 $\rightarrow$ A3 &  B1 $\rightarrow$ B2 & B2 $\rightarrow$ B1\\\midrule
GCN & 0.366 &	0.371 &	0.491 &	0.452 &	0.439 &	0.298 &	0.373 &	0.323 & 0.408 &	0.451 \\
DANN & 0.501 &	0.386 &	0.402 &	0.350 &	0.436 &	0.538  &0.362 &	0.325 & 0.409 &0.419 \\\midrule
DANE& 0.531 &	0.472 &	0.491 &	0.489 &	0.461 &	0.520 &	0.392 &	0.404 & 0.464 &	0.423\\ 
UDAGCN & 0.607 &	0.488 &	0.497 &	0.510 &	0.434 &	0.477 &	0.404 &	0.380 & 0.471 &	0.468  \\
ASN & 0.519 &	0.469 &	0.498 &	0.494 &	0.466 &	0.595 &0.418 &	0.409& 0.732 &0.524\\
EGI & 0.523 &0.451 &	0.417 &	0.454 &	0.452 &	0.588 &0.511 &	0.449 & 0.494 & 0.516 \\ 
GRADE-N & 0.550 &	0.457 &	0.497 &	0.506 &	0.463 &	0.588 &	0.449 &	0.461 & 0.567 &	0.541 \\
JHGDA  & \underline{0.695} &0.519 &	0.511 &	\underline{0.569} &	0.522 &	\bf0.740 &	0.516 &	{0.537} & 0.619 &0.643\\
SpecReg & 0.481 &0.487 &0.513 &	{0.546} &	0.436 &	0.527 &0.526 &	0.518 &0.661 &0.631  \\
PA      & 0.621   &0.547    &0.543 &	0.516    &	0.506  &	0.670 &0.619 &	0.610 & {{0.662}} &	{{0.654}}\\\midrule
\rowcolor{tan} $\textbf{HGDA}_L$ & 0.683   &0.547    &0.533    &0.496    &	0.547  &	0.687  &0.709  &	0.\underline{683}  &0.651  &	0.647\\
\rowcolor{tan} $\textbf{HGDA}_F$ & 0.709   &\underline{0.560}    &0.541    &0.538    &	\underline{0.550}  &	0.691  &{0.701}  &	{0.660} &\underline{0.665}  &	0.655  \\
\rowcolor{tan} $\textbf{HGDA}_H$ & \underline{0.714}   &0.538    &0.524    &\underline{0.569}    &	0.545  &	0.690  &\underline{0.713}  &	0.679 &0.656  &	\underline{0.664} \\
\rowcolor{tan} {\bf HGDA}   &\bf0.721 &\bf0.572 &\bf0.569 &\bf0.584 &\bf0.570 &\underline{0.721} &\bf0.718 &\bf0.698 &\bf0.683 &\bf0.677 \\
\bottomrule
\end{tabular}}
\caption{Cross-network node classification on the Airport, ACM and Blog network.}
\label{tab:airport_classification}
\end{table*}

\begin{table*}[!t]
\centering
\small
\scalebox{0.85}{\begin{tabular}{l|cccccc|cccc}
\toprule
Methods & A $\rightarrow$ D & D $\rightarrow$ A & A $\rightarrow$ C & C $\rightarrow$ A & C $\rightarrow$ D & D $\rightarrow$ C & DE $\rightarrow$ EN & EN $\rightarrow$ DE & DE $\rightarrow$ FR & FR $\rightarrow$ EN  \\\midrule
GCN & 0.632 &	0.578 &	0.675 &	0.635 &	0.666 &	0.654  &0.523 &	0.534 &	0.514 &	0.568 \\
DANN & 0.488 &0.436 &0.520 &	0.518 &	0.518 &	0.465  &0.512 &	0.528 &	0.581 &	0.562  \\\midrule
DANE& 0.664 &	0.619 &	0.642 &	0.653 &	0.661 &	0.709  &	0.642 &	0.644 &	0.591 &	0.574\\ 
UDAGCN  & 0.684 &	0.623 &	0.728 &	0.663 &	0.712 &	0.645  &	0.624 &	0.660  &	0.545 &	\underline{0.565}  \\
ASN & 0.729 &0.723 &0.752 &	0.678 &	0.752 &	0.754  &0.550 &	0.679 &	0.517 &	0.530 \\
EGI & 0.647 & 0.557 &0.676 &	0.598 &	0.662 &	0.652 &0.681 &	0.589  &	0.537 &	0.551 \\ 
GRADE-N & 0.701 &	0.660 &	0.736 &	0.687 &	0.722 &	0.687 &	0.749 &	0.661&	0.576 &	0.565 \\
JHGDA & 0.755 &0.737 &	\underline{0.814} &	{0.756} &	0.762 &	\underline{0.794} &	0.766 &	{0.737}&	\underline{0.590} &	{0.539} \\
SpecReg & 0.762 &0.654 &0.753 &	0.680 &	\underline{0.768} &	0.727 &	{0.719} &	{0.705} &	{0.545} &	{0.555}\\
{PA} & {0.752} &	{\underline{0.751}} &	{0.804} &	{0.768} &	{0.755} &	{0.780} & {0.677}     &	\underline{0.760} & {0.521}     &	{0.538}  \\ \midrule
\rowcolor{tan} $\textbf{HGDA}_L$ & 0.756   &0.739    &0.794    &\underline{0.770}    &	0.739  &	0.757  &	0.765  &	0.747  &	0.532  &	0.548  \\
\rowcolor{tan} $\textbf{HGDA}_F$ & \underline{0.769}   &0.747    &0.811    &0.758    &	0.759  &	0.782  &	0.751  &	0.749  &	0.560  &	0.539\\
\rowcolor{tan} $\textbf{HGDA}_H$ & 0.752   &0.738    &0.804    &0.767    &	0.761  &	0.789   &	\underline{0.769} &	0.751 &	0.578  &	0.544 \\
\rowcolor{tan} {\bf HGDA} & \bf0.791 &	\bf0.756 &\bf0.829 &\bf0.787 &\bf0.779 &\bf0.799  &	\bf0.781  &	\bf0.763   &	\bf0.594  &	\bf0.571\\
\bottomrule
\end{tabular}}
\caption{Cross-network node classification on the Citation and Twitch network.}
\label{tab:citation_classification}
\end{table*}

\begin{table*}[!t]
\centering
\small
\scalebox{0.85}{\begin{tabular}{l|cccccccccc}
\toprule
Methods & US $\rightarrow$ CN & US $\rightarrow$ DE & US $\rightarrow$ JP & US $\rightarrow$ RU & US $\rightarrow$ FR & CN $\rightarrow$ US & CN $\rightarrow$ DE &  CN $\rightarrow$ JP & CN $\rightarrow$ RU & CN $\rightarrow$ FR \\\midrule
GCN &0.042 &0.168 &0.219 &	0.147 &	0.182 &	0.193 & 0.064  &0.160 &0.069 &0.067   \\
DANN &0.242 &0.263 &0.379 &	0.218 &	0.207 &	0.302 & 0.134 &0.214 &0.119 &0.107 \\\midrule
DANE &0.272 &0.250 &0.280 &	0.210 &	0.186 &	0.279 & 0.108 &0.228 &0.170 &0.184  \\ 
UDAGCN  & OOM &	OOM  &	OOM  &	OOM  &	OOM  &	OOM  & OOM  &	OOM &	OOM &	OOM  \\
ASN &0.290 &0.272 &0.291 &	0.222 &	0.199 &	0.268 & 0.121 &0.207 &0.189 &0.190   \\
EGI & OOM &	OOM  &	OOM  &	OOM  &	OOM  &	OOM  & OOM  &	OOM &	OOM  &	OOM   \\ 
GRADE-N &0.304 &0.299 &0.306 &	0.240 &	0.217 &	0.258 & 0.137 &0.210 &0.178 &0.199  \\
JHGDA & OOM &	OOM  &	OOM  &	OOM  &	OOM  &	OOM  & OOM  &	OOM &	OOM &	OOM\\
SpecReg &0.237 &0.267 &0.377 &	0.228 &	0.218 &	0.317 & 0.134 &0.199 &0.109 &116 \\
PA  & 0.400 &	{0.389} &	{0.474} &	0.371 &	{0.252} &	\underline{0.452} & {0.262} &	{0.383}  &	{0.333} &	{0.242}\\ \midrule
\rowcolor{tan} $\textbf{HGDA}_L$ & 0.463   &0.442    &0.498    &0.400    &	0.314  &	{0.447}  &0.379  &	\underline{0.418}  &0.368  &	\underline{0.330}   \\
\rowcolor{tan}  $\textbf{HGDA}_F$ & 0.449   &0.457    &\underline{0.502}    &0.416    &	0.329  &	0.441  &0.401  &	0.395  &\underline{0.388}  &	0.307  \\
\rowcolor{tan} $\textbf{HGDA}_H$ & \underline{0.488}   &0.\underline{468}    &0.494    &\underline{0.429}    &	\underline{0.330}  &	0.426  &\underline{0.410}  &	0.409   &0.326  &	0.288  \\
\rowcolor{tan} {\bf HGDA} & \bf0.510 &	\bf0.497 &\bf0.531 &\bf0.442 &	\bf0.347 &	\bf0.476 & \bf0.438 &\bf0.435   &\bf0.392 &\bf0.339   \\
\bottomrule
\end{tabular}}
\caption{Cross-network node classification on MAG datasets.}
\label{tab:MAG_classification}
\end{table*}

\subsection{Baselines}
We choose some representative methods to compare.
{GCN} ~~\cite{kipf2016semi} further solves the efficiency problem by introducing first-order approximation of ChebNet. {DANN}~~\cite{ganin2016domain} use a 2-layer perceptron to provide features and a gradient reverse layer (GRL) to learn node embeddings for domain classification. {DANE} ~~\cite{zhang2019dane} shared distributions embedded space on different networks and further aligned them through adversarial learning regularization. {UDAGCN} ~~\cite{wu2020unsupervised} is a dual graph convolutional network component learning framework for unsupervised GDA, which captures knowledge from local and global levels to adapt it by adversarial training. {ASN} ~~\cite{zhang2021adversarial} use the domain-specific features in the network to extract the domain-invariant shared features across networks. {EGI}~~\cite{zhu2021transfer} through Ego-Graph Information maximization to analyze structure-relevant transferability regarding the difference between source-target graph. {GRADE-N}~~\cite{wu2023non} propose a graph subtree discrepancy to measure the graph distribution shift between source and target graphs. {JHGDA} ~~\cite{shi2023improving} explore information from different levels of network hierarchy by hierarchical pooling model. {SpecReg} ~~\cite{you2022graph} achieve improving performance regularization inspired by cross-pollinating between the optimal transport DA and graph filter theories. {PA}~ ~~\cite{liu2024pairwise} counter graph structure shift by mitigating conditional structure shift and label shift by using edge weights to recalibrate the influence among neighboring nodes. The aforementioned work does not investigate the impact of homophily distribution shift, which is an essential graph property on GDA.

\begin{figure*}
    [!htbp]
    \centering
    \subfigure[UDAGCN]{
    \includegraphics[width=30mm]{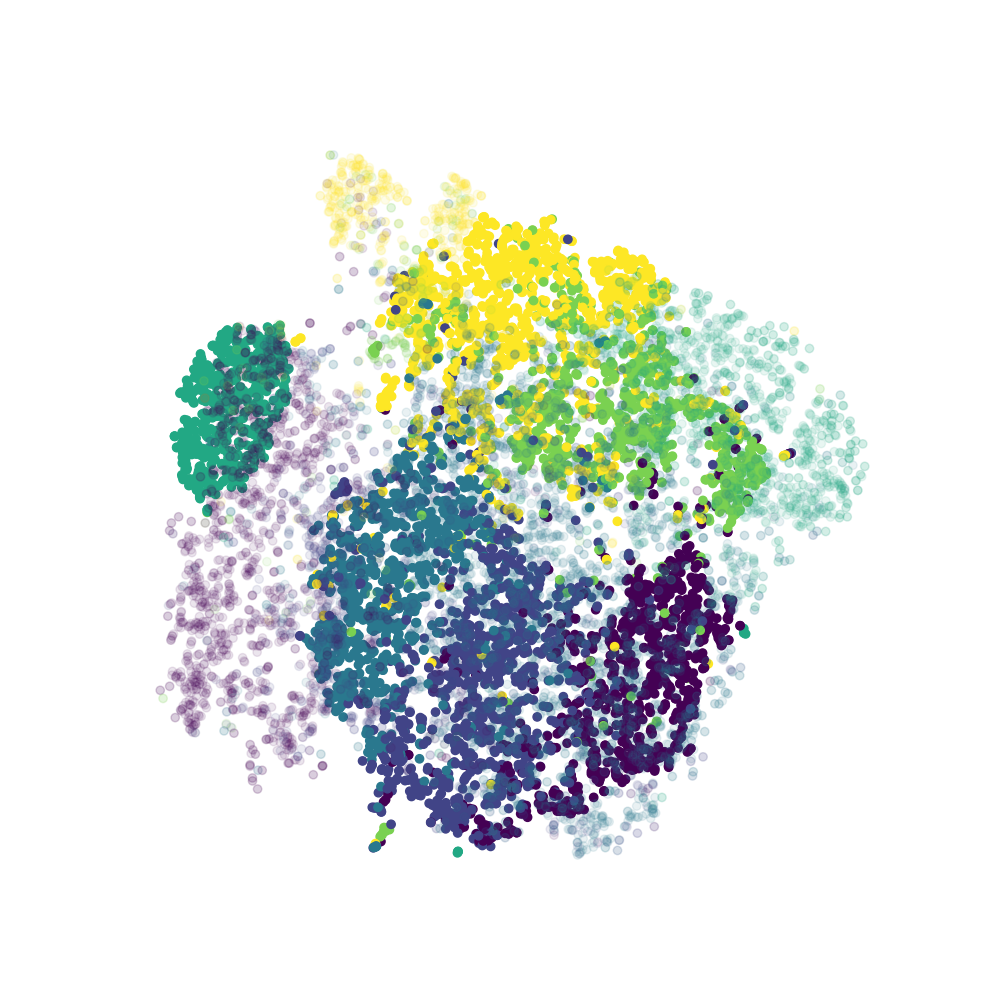}
    }
    \hspace{+1mm}
    \subfigure[JHGDA]{
    \includegraphics[width=30mm]{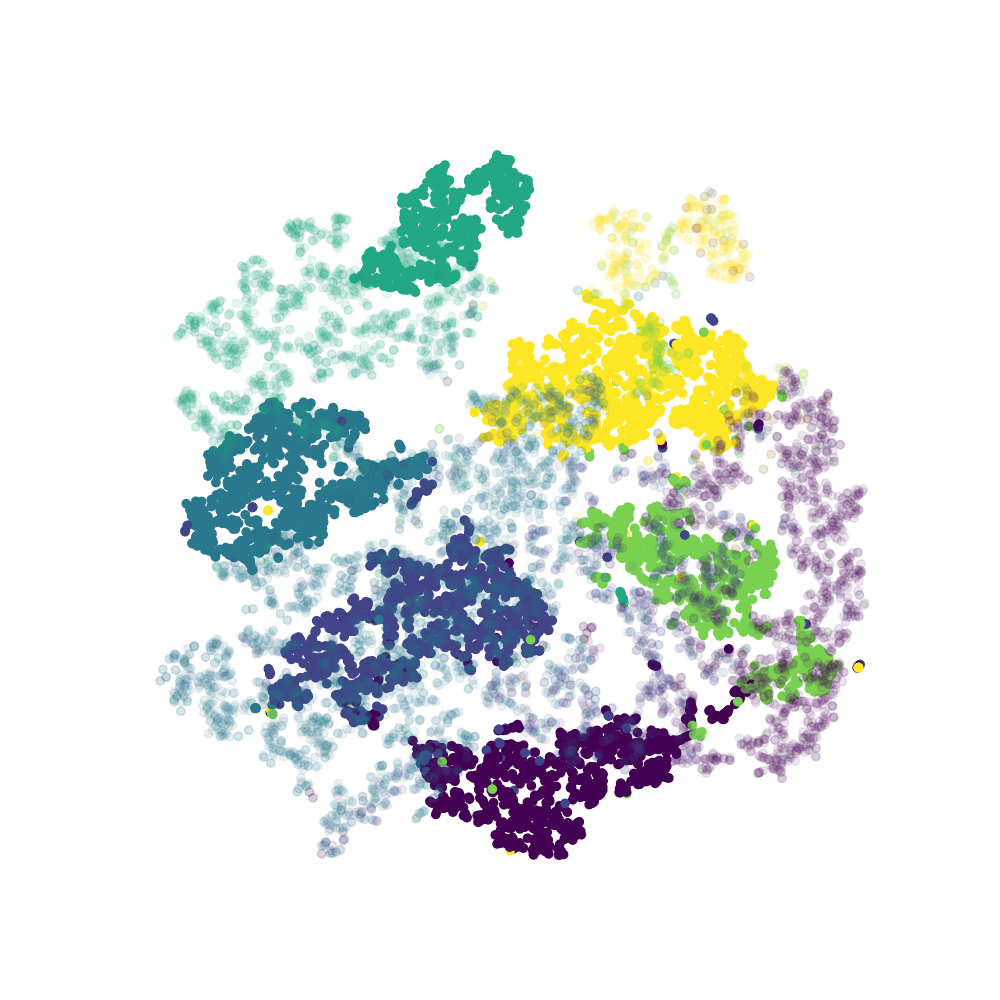}
    }
     \hspace{+1mm}
    \subfigure[SepcReg]{
    \includegraphics[width=30mm]{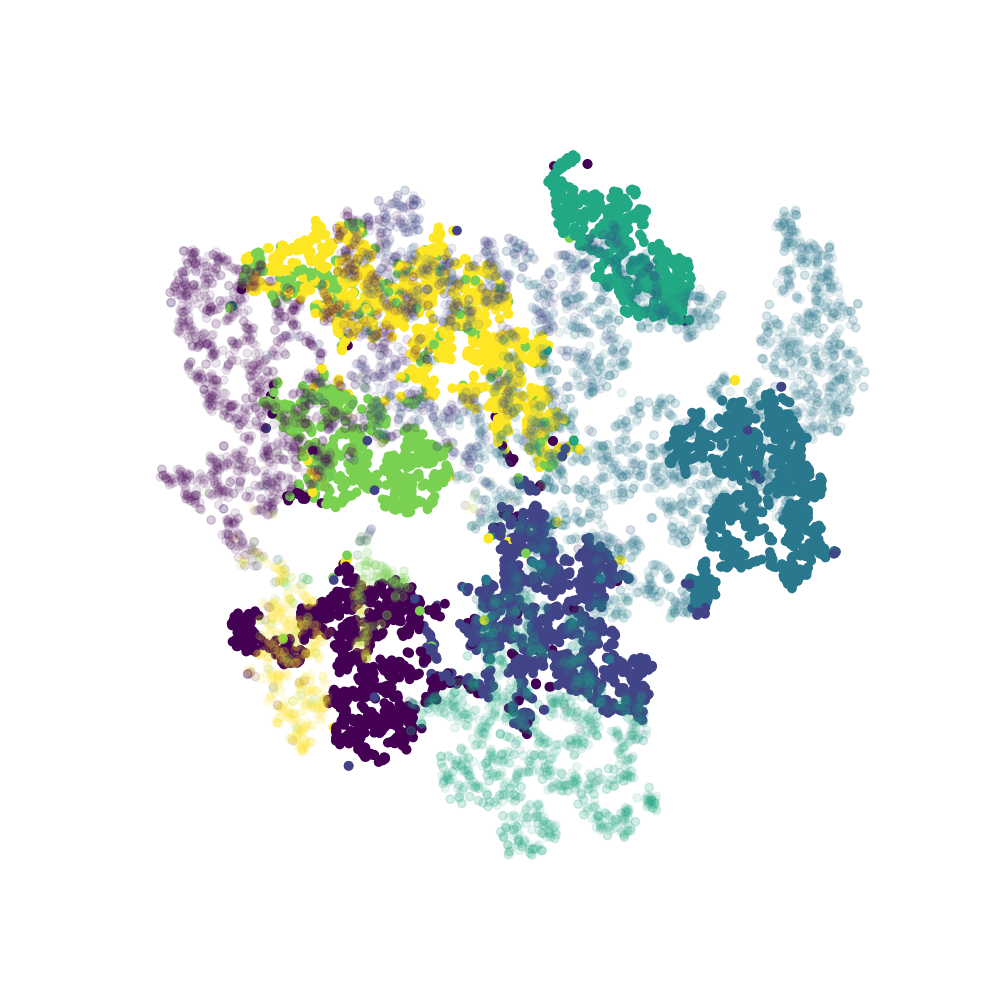}
    }
    \hspace{+1mm}
    \subfigure[PA]{
    \includegraphics[width=30mm]{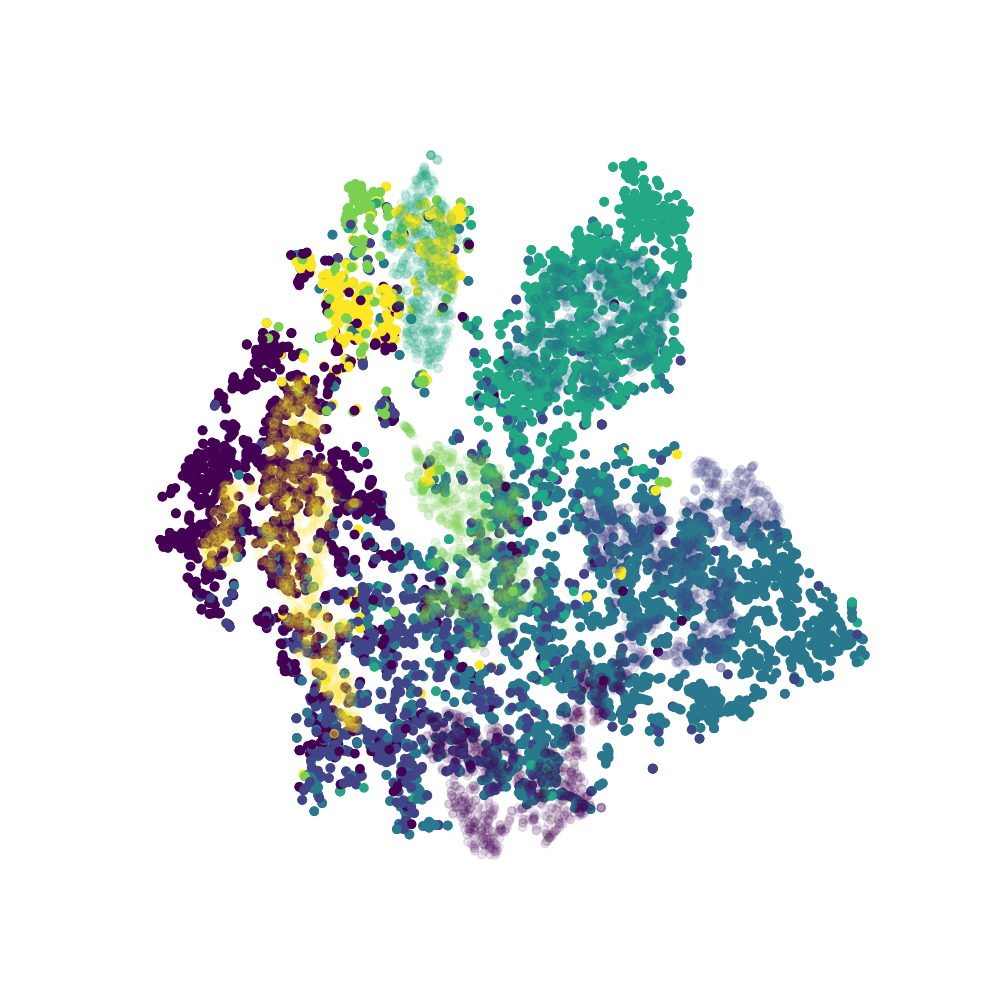}
    }
    \hspace{+1mm}
    \subfigure[HGDA]{
    \includegraphics[width=30mm]{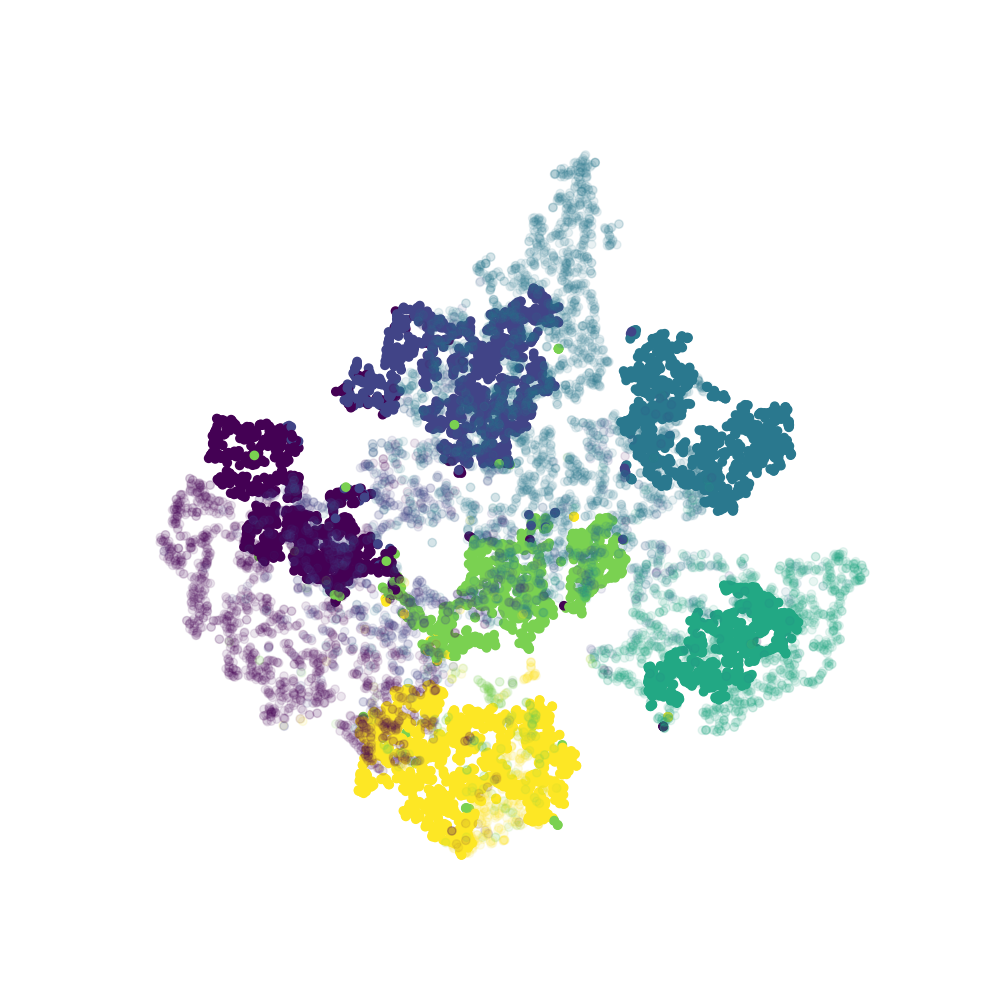}
    }
    \vspace{-2.mm}
     \caption{Visualizing source and target node embeddings via T-SNE. Each color represents a class, while dark and light shades of the same color correspond to nodes of the same class from the source and target domains, respectively.}
  \label{fig: VISUALIZING}
\end{figure*}

\begin{figure}
    [!htbp]
    \centering
    \subfigure[U $\rightarrow$ E]{
    \includegraphics[width=38mm]{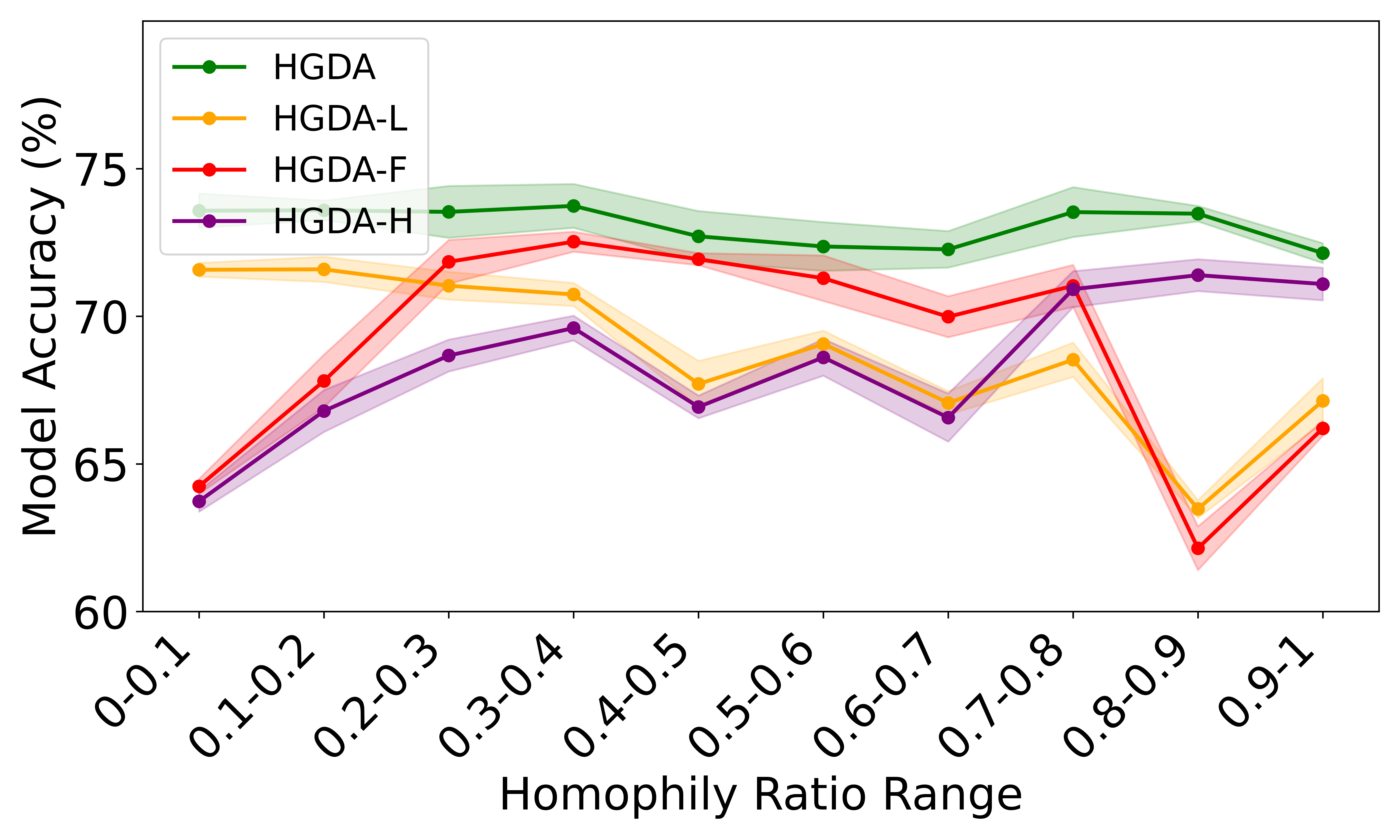}
    }
    \hspace{+0.1mm}
    \subfigure[U $\rightarrow$ B]{
    \includegraphics[width=38mm]{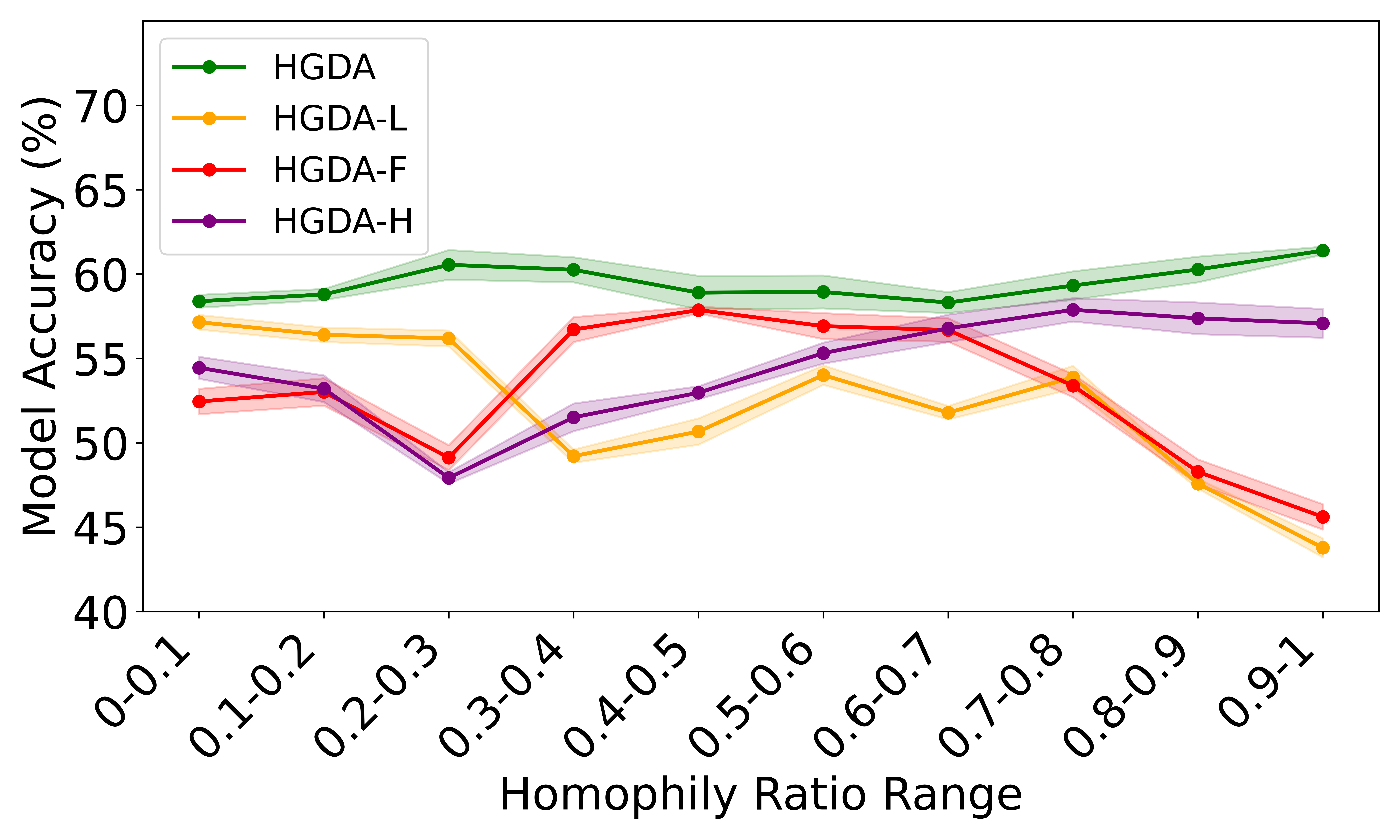}
    }
    \subfigure[B $\rightarrow$ E]{
    \includegraphics[width=38mm]{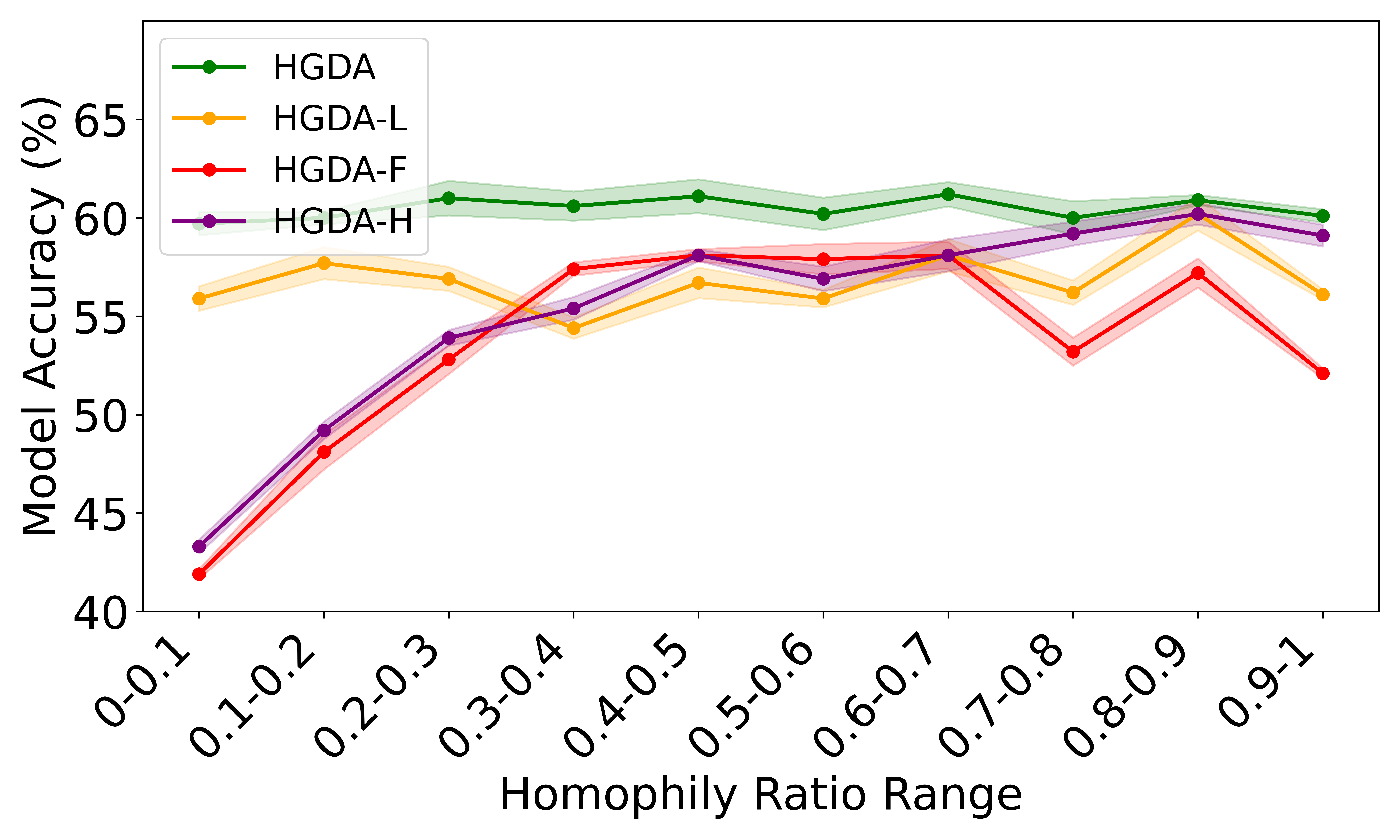}
    }
    \hspace{+0.1mm}
    \subfigure[A4 $\rightarrow$ A3]{
    \includegraphics[width=38mm]{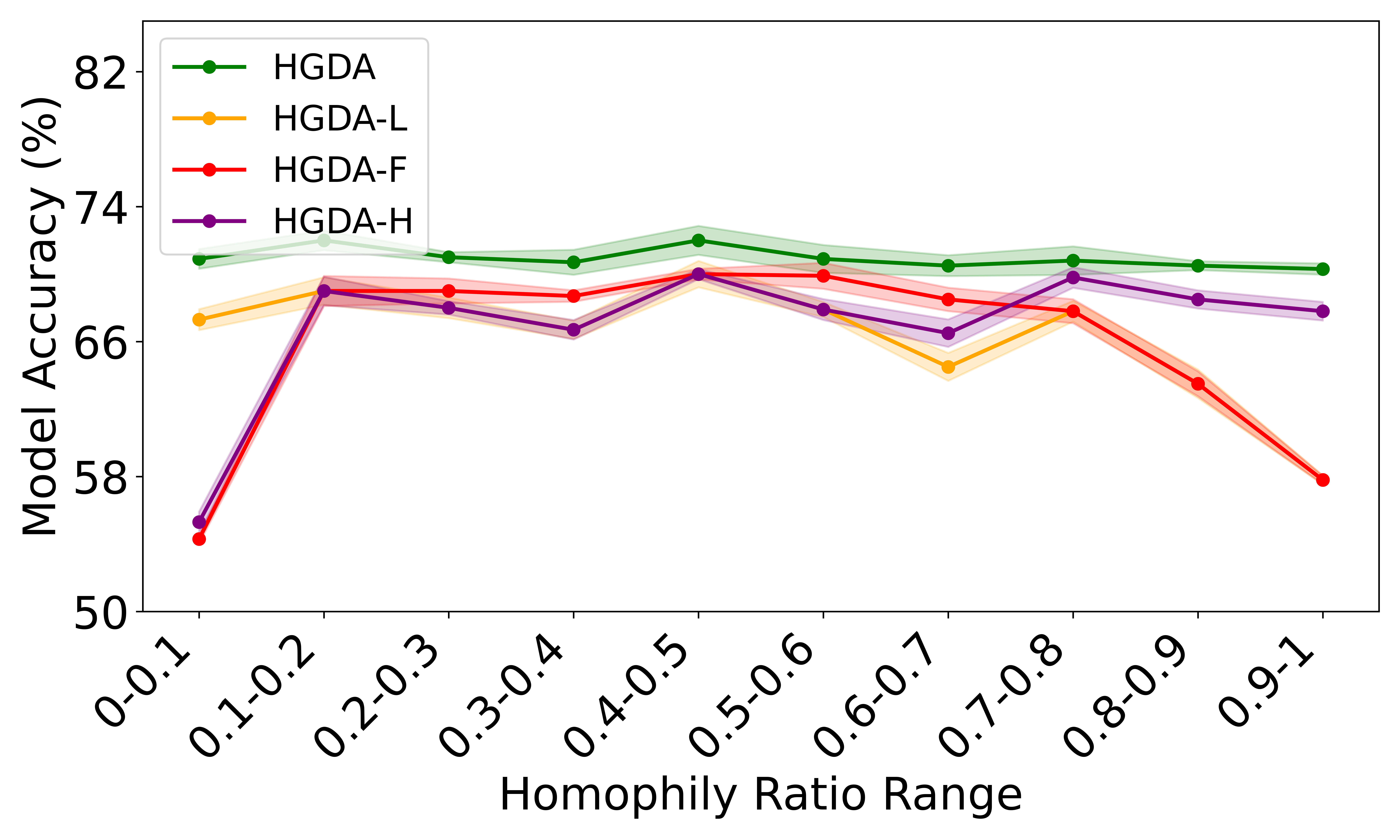}
    }
    \vspace{-2.mm}
     \caption{Classification accuracy of \textbf{HGDA}, $\textbf{HGDA}_L$, $\textbf{HGDA}_F$, and $\textbf{HGDA}_L$ across different subgroups in Airport and ACM datasets.}
  \label{fig:Ablation Analysis}
\end{figure}

\begin{figure}
    [!htbp]
    \centering
    \subfigure[A $\rightarrow$ D]{
    \includegraphics[width=37mm]{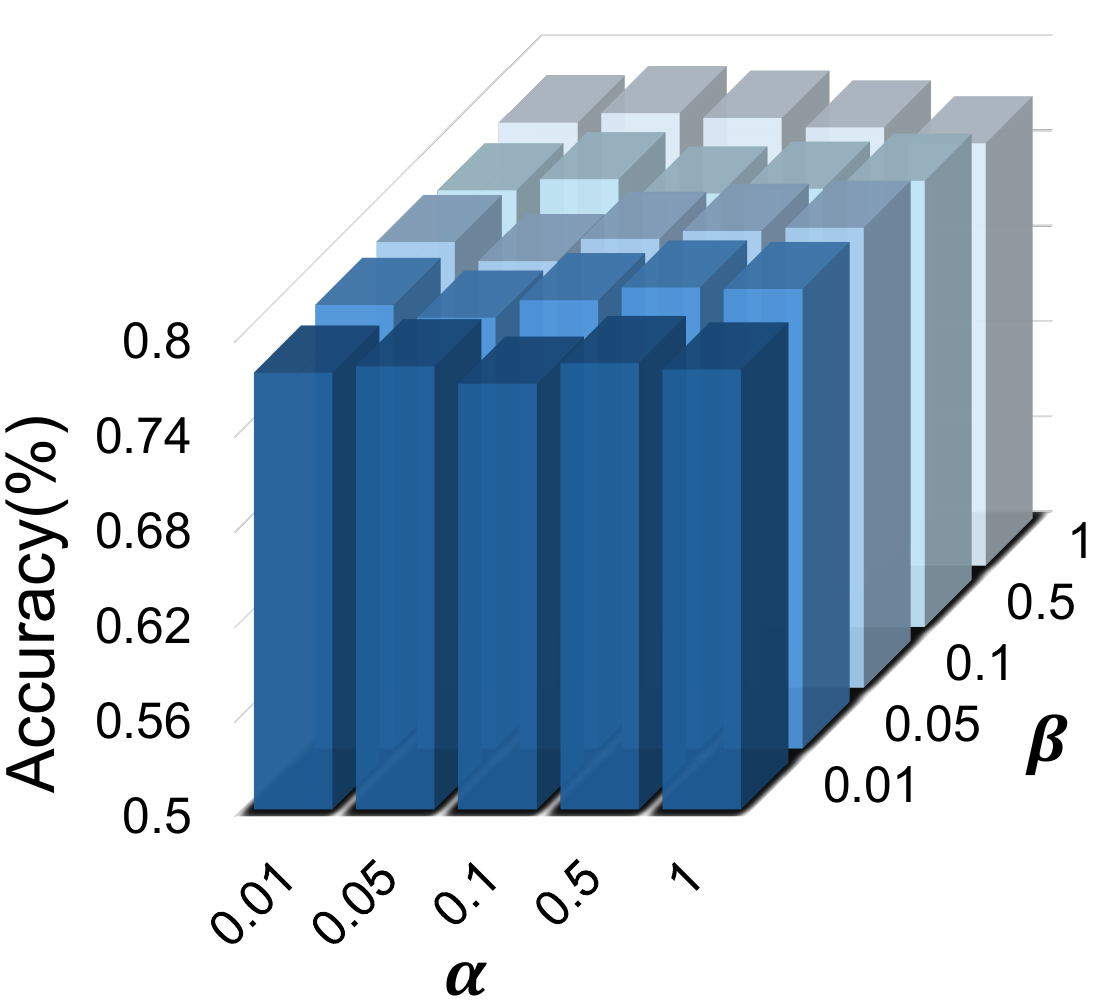}
    }
    \hspace{+1mm}
    \subfigure[U $\rightarrow$ B]{
    \includegraphics[width=37mm]{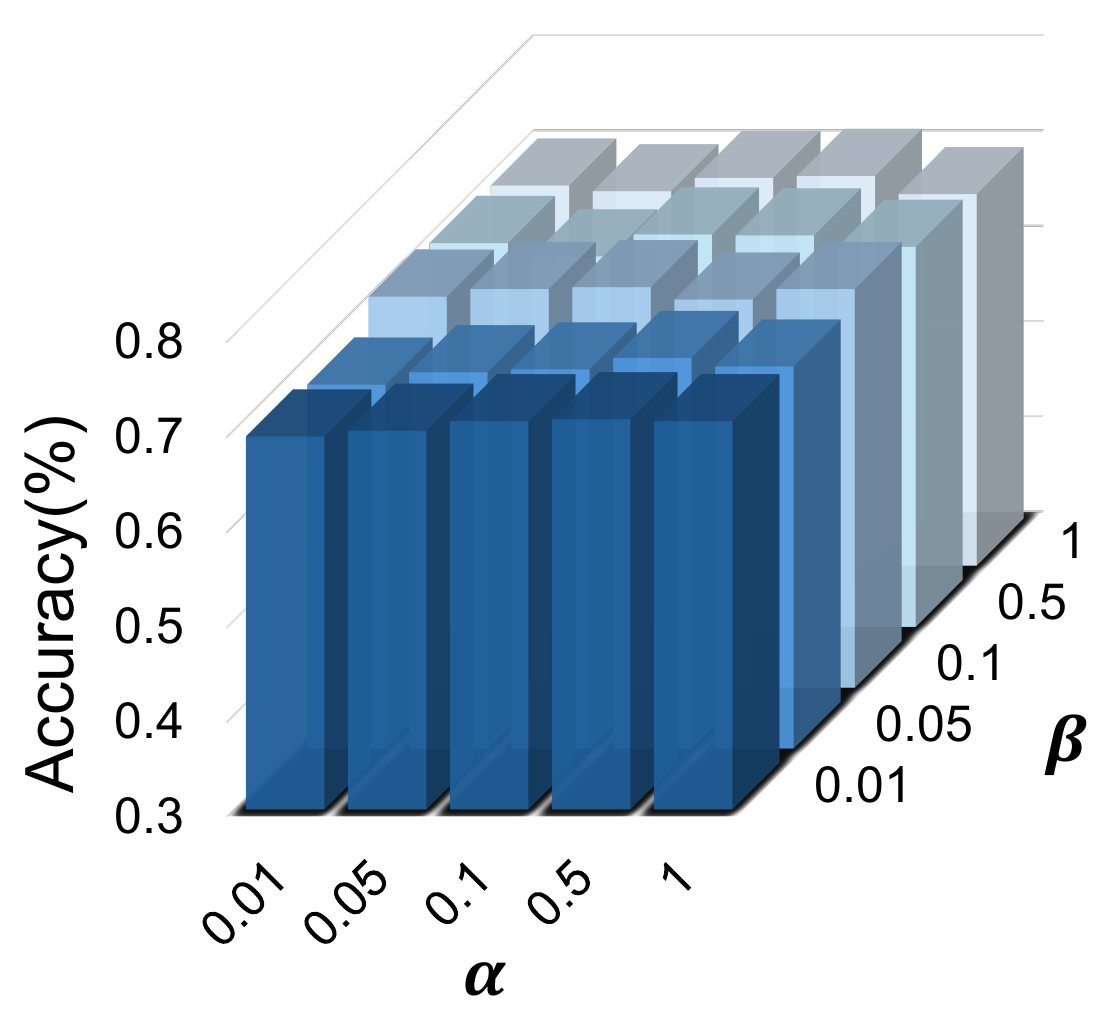}
    }
    \subfigure[A4 $\rightarrow$ A3]{
    \includegraphics[width=37mm]{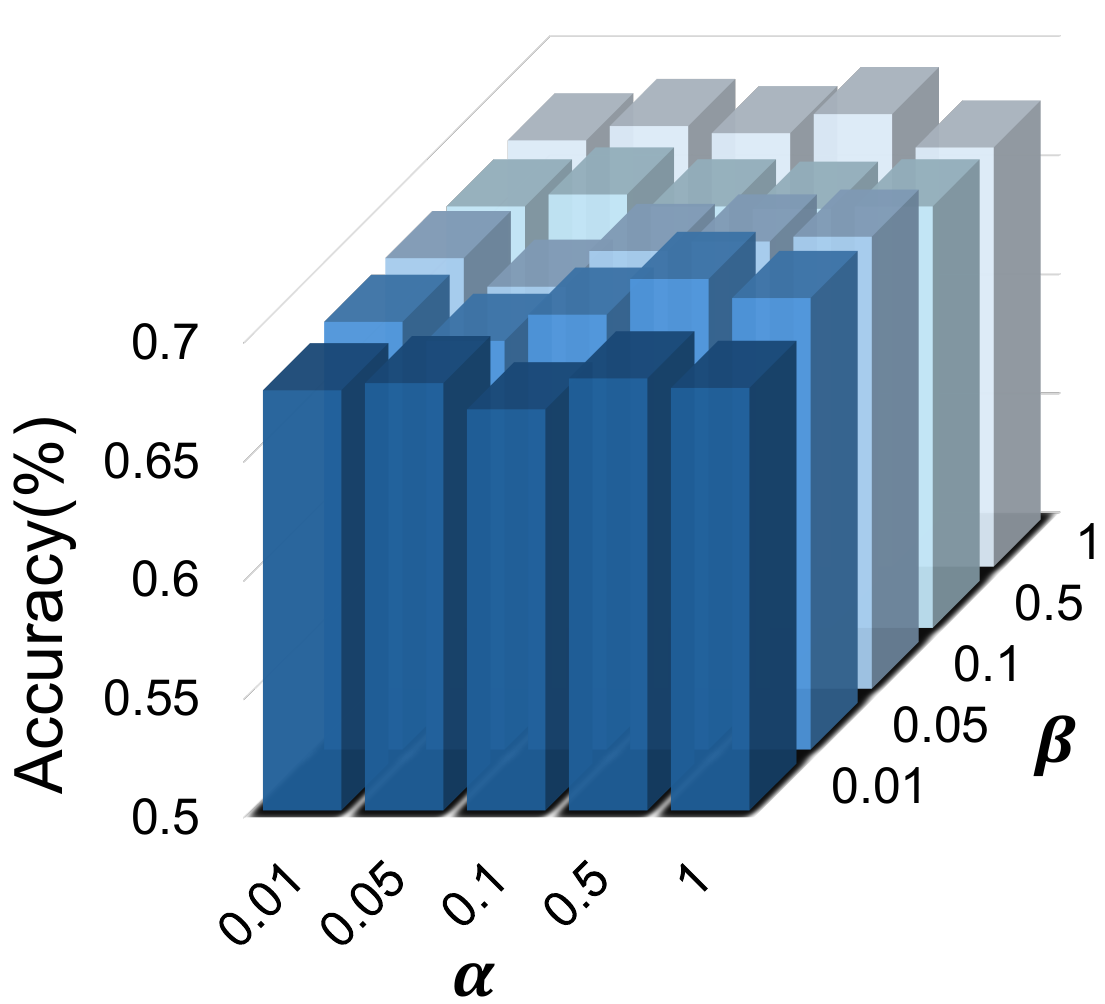}
    }
    \hspace{+1mm}
    \subfigure[US $\rightarrow$ CN]{
    \includegraphics[width=37mm]{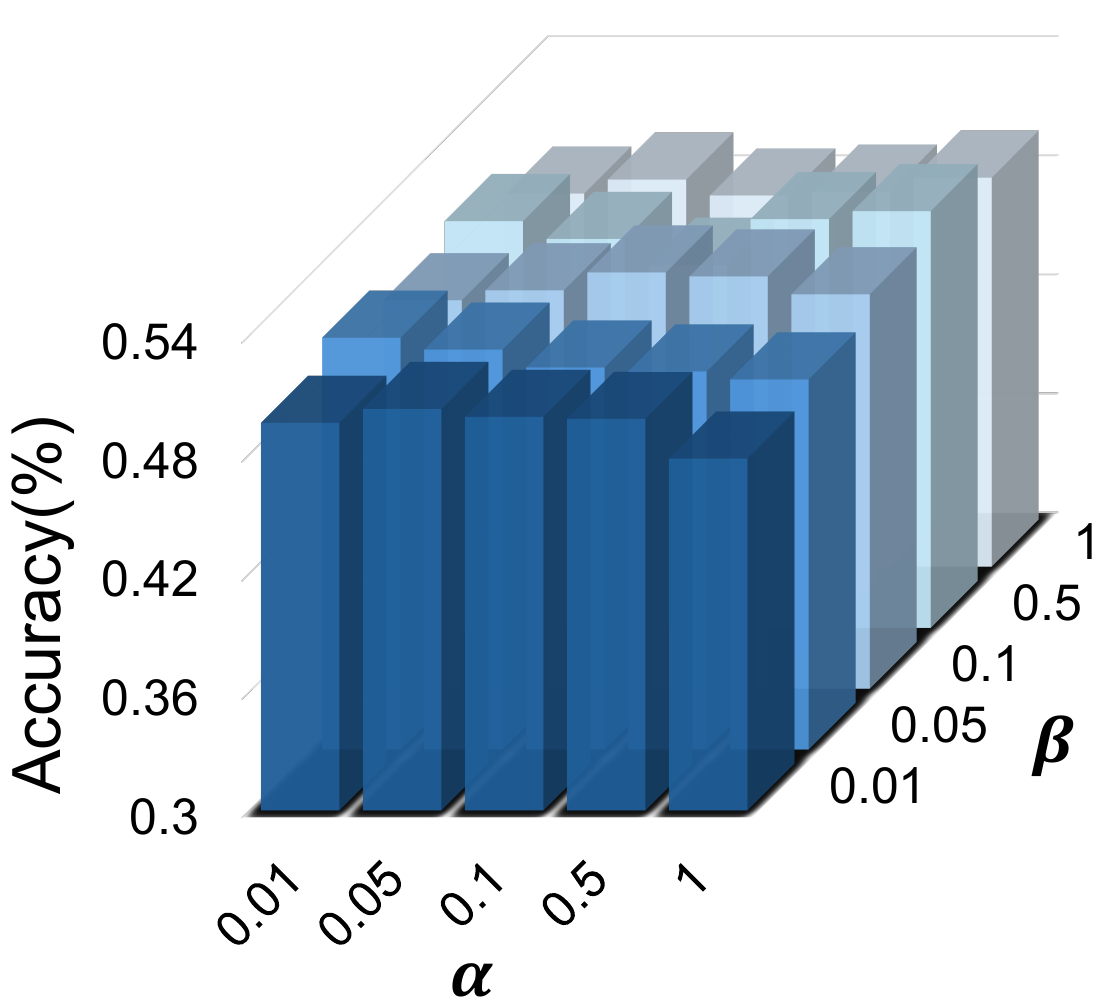}
    }
    \vspace{-2.mm}
     \caption{The influence of parameters $\alpha$ and $\beta$ on Citation, Airport, ACM and MAG datasets.}
  \label{fig: Parameter Analysis}
\end{figure}

\subsection{Performance Comparison}
The results of experiments are summarized in Table \ref{tab:airport_classification}, Table \ref{tab:citation_classification} and Table \ref{tab:MAG_classification}, where highest scores are highlighted in \textbf{bold}, and the second-highest scores are \underline{underlined}.
Some results are directly taken from~~\citep{shi2023improving,pang2023sa,liu2024revisiting,zhang2025pygda}. We have the  following findings:
It can be seen that our proposed method boosts the performance of SOTA methods across most evaluation metrics on four group datasets, which proves its effectiveness. \textbf{HGDA} outperforms other optimal methods, achieving state-of-the-art (SOTA) performance across all datasets. Specifically, \textbf{HGDA} achieves an average improvement of $1.10\%$ on Airport, $6.50\%$ on ACM, $1.40\%$ on Citation, and $2.20\%$ on Blog. Notably, \textbf{HGDA} achieves a maximum average improvement of $8.49\%$ for ACC on MAG datasets. This illustrates that our proposed model can effectively aligns varies node homophilic information. Our \textbf{HGDA} variants achieves second-highest than other optimal methods on some of the metrics in various benchmarks. This can be attributed to the varying homophilic pattern distributions across datasets, which are captured by our method's variants. For instance, in the ACM dataset, $\textbf{HGDA}_L$ and $\textbf{HGDA}_H$ generally outperform $\textbf{HGDA}_F$, indicating that effectively balancing both homophilic and heterophilic information is crucial for this dataset. This observation aligns well with the design of our method. In all cases, \textbf{HGDA} produces better performance than PA~\citep{liu2024pairwise}, which were published in 2024. This verifies the advantage of our approach. Our Model efficient experiment can be seen in Appendix \ref{Model efficient experiment}

\subsection{Ablation Study}
Among the four variants of \textbf{HGDA}, \textbf{HGDA} performs the best in most cases. $\textbf{HGDA}_L$, $\textbf{HGDA}_F$, and $\textbf{HGDA}_H$ achieve varying performance levels across different groups of datasets, indicating that their effectiveness is closely related to the node homophilic distribution of the datasets. \textbf{HGDA} is consistently better than all variants, indicating each component boosts the performance. Moreover, we investigate all \textbf{HGDA} variant's classification accuracy across different homophilic subgroup. As shown in Figure \ref{fig:Ablation Analysis}, We observe that \textbf{HGDA} performs well across all subgroups, indicating that homophily alignment effectively addresses the shortcomings of GCN in handling various homophily discrepancies. In addition, \textbf{HGDA} variants effectively address homophily discrepancies at different levels. Specifically, $\textbf{HGDA}_L$ performs well in homophilic subgroups, while $\textbf{HGDA}_H$ excels in heterophilic subgroups, demonstrating the robustness of our model.

\subsection{Visualization}
In this section, we visualize node embeddings generated by competitive GDA models in the task A $\rightarrow$ D. We observe from Figure~\ref{fig: VISUALIZING}. Firstly, the nodes belonging to different classes are well-separated from each other. This shows that \textbf{HGDA} is effective in distinguishing between different classes in the embedding space. Secondly, nodes belonging to the same class from different domains are mostly overlapping, which indicates that HGDA could significantly reduce domain discrepancy. The first observation indicates good classification, while the latter suggests good domain alignment.

\subsection{Parameter Analysis}
In this section, we analyze the sensitivity of hyper-parameters $\alpha$ and $\beta$ of our method on Citation, Airport, ACM, and MAG datasets. 
First, we test the performance with different  $\alpha$ and $\beta$. As shown in Figure \ref{fig: Parameter Analysis}, \textbf{HGDA} has competitive performance on a large range of values, which suggests the stability of our method. For a more detailed analysis and result, refer to the Appendix \ref{Parameter Analysis}.

\section{Conclusion}
In this paper, we propose HGDA framework to solve the GDA problem in cross-network node classification tasks. We reveal the importance of the homophily shift in GDA through both empirical and theoretical analysis. Our approach minimizes homophily distribution shifts by optimizing homophilic, heterophilic, and attribute signals. Comprehensive experiments verify the superiority of our approach. We will also delve deeper into graph domain adaptation theory to develop more powerful models by considering different architectures~\cite{FPA, ROSEN, JAM,MGCN}.

	
\section*{Acknowledgments}
This work is supported by the Natural Sciences and Engineering Research Council of Canada (NSERC) Discovery Grants program.

\section*{Impact Statement}
This paper presents work whose goal is to advance the field of Machine Learning. There are many potential societal consequences of our work, none which we feel must be specifically highlighted here.


\bibliography{refs}
\bibliographystyle{icml2025}

\newpage
\appendix
\onecolumn

\section{Proof}\label{app_proof}
\begin{definition}[1-Wasserstein Distance of Node Heterophily Distributions]
Let $ G = (\mathcal{V}, \mathcal{E}, A, X, Y) $ and $ G' = (\mathcal{V}', \mathcal{E}', A', X', Y') $ be two graphs with node heterophily distributions:
\begin{equation}
    h_G \sim P_V, \quad h_{G'} \sim P_{V'}.
\end{equation}
We define their 1-Wasserstein Distance as:
\begin{equation}
    W_1(h_G, h_{G'}) = \inf_{\gamma \in \Pi(P_V, P_{V'})} \mathbb{E}_{(h,h') \sim \gamma} [|h - h'|],
\end{equation}
where:
\begin{itemize}
    \item $ \Pi(P_V, P_{V'}) $ denotes the set of all joint distributions $ \gamma(h, h') $ whose marginal distributions are $ P_V $ and $ P_{V'} $, respectively.
    \item This distance measures the minimum transportation cost required to transform the distribution $ P_V $ into $ P_{V'} $, where the transportation cost is given by $ |h - h'| $.
\end{itemize}

This metric characterizes the optimal transport distance between the node heterophily distributions of two graphs and enables the comparison of their overall heterophily structures. Specifically:
\begin{itemize}
    \item If $ W_1(h_G, h_{G'}) \approx 0 $, the two graphs have very similar heterophily distributions.
    \item If $ W_1(h_G, h_{G'}) $ is large, the heterophily structures of the two graphs are significantly different.
\end{itemize}
\end{definition}
\begin{corollary}[Wasserstein-1 Distance for Feature Distributions]
\label{cor:feature-wasserstein}
Let $ P_S^F $ and $ P_T^F $ be the empirical distributions of node features $ f $ in the source graph $ G^S $ and target graph $ G^T $, respectively. The 1-Wasserstein distance between these distributions is given by:
\begin{equation}
    W_1(P_S^F, P_T^F) = \inf_{\gamma^F \in \Pi(P_S^F, P_T^F)} \mathbb{E}_{(i,j) \sim \gamma^F} [\|f_i - f_j\|],
\end{equation}
where $ \Pi(P_S^F, P_T^F) $ is the set of all joint distributions $ \gamma^F(i,j) $ with marginals $ P_S^F $ and $ P_T^F $. This measures the minimum transportation cost required to transform the feature distribution of $ G^S $ into that of $ G^T $.
\end{corollary}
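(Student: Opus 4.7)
The plan is to derive this corollary as an immediate specialization of the preceding 1-Wasserstein definition, replacing the scalar heterophily cost $|h-h'|$ with the Euclidean cost $\|f_i - f_j\|$ on the feature space $\mathbb{R}^d$. First, I would observe that $P_S^F$ and $P_T^F$ are well-defined discrete probability measures on $\mathbb{R}^d$: each node $i \in \mathcal{V}^S$ (resp.\ $j \in \mathcal{V}^T$) contributes mass $1/N_S$ (resp.\ $1/N_T$) at the point $f_i$ (resp.\ $f_j$), i.e.\ these are the pushforwards of the uniform distributions on the two node sets under the feature map $v \mapsto X_v$. Second, I would invoke the Kantorovich primal formulation of the 1-Wasserstein distance on the metric space $(\mathbb{R}^d, \|\cdot\|)$, which gives
\begin{equation*}
W_1(P_S^F, P_T^F) \;=\; \inf_{\pi \in \Pi(P_S^F, P_T^F)} \int \|x - y\| \, d\pi(x,y).
\end{equation*}

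Next, I would reparametrize the coupling through node indices rather than feature values. Since both empirical measures are supported on finite sets, any coupling $\pi$ on $\mathbb{R}^d \times \mathbb{R}^d$ with marginals $P_S^F, P_T^F$ can be equivalently represented by a joint distribution $\gamma^F$ on $\mathcal{V}^S \times \mathcal{V}^T$ whose marginals are the uniform distributions on the two node sets; the relation is the pushforward under $(i,j) \mapsto (f_i, f_j)$. Under this reindexing, the transport cost $\int \|x - y\| \, d\pi(x,y)$ reduces to $\mathbb{E}_{(i,j) \sim \gamma^F}[\|f_i - f_j\|]$, and taking the infimum over $\gamma^F \in \Pi(P_S^F, P_T^F)$ recovers the claimed identity. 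The step is structurally identical to how the heterophily-level definition was obtained from the abstract Wasserstein definition; only the ground metric and the underlying random variable change.

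The main subtlety — though not a genuine obstacle — is handling the case in which multiple nodes carry identical feature vectors: in that event the feature map is not injective and one must aggregate masses of coincident points. This is handled routinely by a gluing argument: the infimum over node-index couplings is no larger than the infimum over $\mathbb{R}^d$-level couplings, since any node-index coupling pushes forward to a valid feature-space coupling with the same transport cost, while the reverse inequality follows by distributing the mass of any optimal feature-space coupling arbitrarily across preimages of the feature map, preserving marginals and leaving the cost unchanged. Once this equivalence is verified, no deeper machinery is required, and the formula follows directly. I would then note, as foreshadowing for Theorem~1, that $W_1(P_S^F, P_T^F)$ can be related to the KL-type divergences appearing in the bound via Pinsker-style or transportation inequalities when $P_S^F, P_T^F$ admit densities, which is presumably the role this corollary plays downstream.
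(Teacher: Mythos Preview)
Your proposal is correct and aligns with the paper's treatment: the paper states this corollary without any proof, regarding it as an immediate specialization of the abstract 1-Wasserstein definition to the metric space $(\mathbb{R}^d,\|\cdot\|)$ with the empirical feature distributions. Your derivation supplies considerably more detail than the paper (in particular the careful handling of the node-index versus feature-space coupling and the non-injectivity subtlety), but the underlying idea is the same.
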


\begin{corollary}[Wasserstein-1 Distance for Node Heterophily Distributions]
\label{cor:heterophily-wasserstein}
Let $ P_S^H $ and $ P_T^H $ be the empirical distributions of node heterophily values $ h $ in the source graph $ G^S $ and target graph $ G^T $, respectively. The 1-Wasserstein distance between these distributions is given by:
\begin{equation}
    W_1(P_S^H, P_T^H) = \inf_{\gamma^H \in \Pi(P_S^H, P_T^H)} \mathbb{E}_{(i,j) \sim \gamma^H} [|h_i - h_j|],
\end{equation}
where $ \Pi(P_S^H, P_T^H) $ is the set of all joint distributions $ \gamma^H(i,j) $ with marginals $ P_S^H $ and $ P_T^H $. This quantifies the minimum transportation cost to align the heterophily distribution between the two graphs.
\end{corollary}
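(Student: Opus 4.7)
The plan is to obtain the corollary as an immediate specialization of Definition 3 (the 1-Wasserstein distance for node heterophily distributions) to the empirical distributions $P_S^H$ and $P_T^H$ of the source and target graphs. Because the general Monge--Kantorovich formula is already established, the only work left is twofold: first, to confirm that these empirical distributions meet the hypotheses of Definition 3; and second, to reconcile the two indexing conventions appearing in the statement --- couplings written over heterophily values $(h,h') \in [0,1]^2$ in Definition 3 versus couplings written over node pairs $(i,j) \in \mathcal{V}^S \times \mathcal{V}^T$ in the corollary.

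First, I would check that $P_S^H$ and $P_T^H$ are well-defined probability measures on $[0,1]$. By construction each is the push-forward of the uniform measure on nodes under the heterophily map $h_G(\cdot)$, so both are discrete probabilities supported on finitely many atoms in $[0,1]$. The coupling set $\Pi(P_S^H, P_T^H)$ is then non-empty (it contains the independent coupling) and can be identified with the transport polytope of non-negative matrices with prescribed row sums $P_S^H$ and column sums $P_T^H$. Since the ground cost $c(h,h') = |h-h'|$ is continuous and bounded on the compact set $[0,1]^2$, every transport cost is finite and the infimum is attained on this compact polytope, so the expression on the right-hand side of the corollary is well-posed.

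Next, I would identify heterophily-value couplings with node-pair couplings via the map $(i,j) \mapsto (h_{G^S}(i), h_{G^T}(j))$. For any node-pair coupling $\gamma^H \in \Pi(P_S^H, P_T^H)$, its push-forward under this map is a coupling of the heterophily distributions satisfying $\mathbb{E}_{(i,j)\sim\gamma^H}[|h_i - h_j|] = \mathbb{E}_{(h,h')\sim\gamma}[|h-h'|]$; conversely, any coupling of heterophily values lifts (non-uniquely) to a node-pair coupling with the prescribed marginals. Taking the infimum on either side therefore yields the same value, which is precisely the formula asserted in the corollary.

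The only subtle point I anticipate is the non-uniqueness of the node-level lift when several nodes share a common heterophily value. This does not cause any difficulty because $W_1$ depends on its arguments only through the marginal push-forward distributions $P_S^H$ and $P_T^H$; hence the choice of lift is immaterial to the optimal transport cost, and no further obstacle is expected.
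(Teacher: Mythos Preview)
Your proposal is correct and, if anything, more careful than the paper itself: the paper treats this corollary as an immediate specialization of Definition~3 to the empirical heterophily distributions $P_S^H$ and $P_T^H$ and offers no separate proof. Your added verification that the node-pair indexing and the heterophily-value indexing yield the same infimum (via the push-forward under $(i,j)\mapsto(h_{G^S}(i),h_{G^T}(j))$) is a point the paper glosses over, so your write-up strictly subsumes the paper's treatment.
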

\begin{lemma}[Upper Bound on 1-Wasserstein Distance via KL Divergence]~\cite{bobkov1999exponential}
\label{lemma:W1-KL}
Let $ P $ and $ Q $ be two probability distributions on a metric space $ (\mathcal{X}, d) $, where $ P \ll Q $ (i.e., $ P $ is absolutely continuous with respect to $ Q $). Assume that:
\begin{itemize}
    \item $ \mathcal{X} $ has bounded support with diameter $ D $, i.e., $ d(x,y) \leq D $ for all $ x,y \in \mathcal{X} $,
    \item There exists a constant $ C_0 > 0 $ such that all 1-Lipschitz functions $ f: \mathcal{X} \to \mathbb{R} $ satisfy the concentration inequality:
    \begin{equation}
        \sup_{f \in \mathcal{F}_L} \Big| \mathbb{E}_{x \sim P}[f(x)] - \mathbb{E}_{x \sim Q}[f(x)] \Big| \leq C_0 \sqrt{D_{\text{KL}}(P \| Q)}.
    \end{equation}
\end{itemize}
Then, the 1-Wasserstein distance satisfies the following upper bound:
\begin{equation}
    W_1(P, Q) \leq C_0 \sqrt{D_{\text{KL}}(P \| Q)}.
\end{equation}
The constant $ C $ depends on the geometry of the space $ \mathcal{X} $ and the metric $ d $.
\end{lemma}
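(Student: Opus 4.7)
\textbf{Proof plan for Lemma \ref{lemma:W1-KL}.} The plan is to observe that, given the two hypotheses in the statement, the conclusion is essentially immediate from the Kantorovich--Rubinstein duality for the 1-Wasserstein distance; the only content beyond that duality is recognizing that the second hypothesis is literally a uniform Lipschitz-concentration estimate in the right functional form. So the proof will be short and will mostly consist of matching notation.

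First I would recall the Kantorovich--Rubinstein dual representation on a Polish metric space $(\gX, d)$ of bounded diameter: for any two Borel probability measures $P, Q$ with finite first moment,
\begin{equation*}
    W_1(P, Q) \;=\; \sup_{f \in \mathcal{F}_L} \Big| \E_{x\sim P}[f(x)] - \E_{x\sim Q}[f(x)] \Big|,
\end{equation*}
where $\mathcal{F}_L$ denotes the class of $1$-Lipschitz functions $f:\gX \to \sR$. The assumption that $\gX$ has diameter $D < \infty$ guarantees that every $1$-Lipschitz function is bounded (up to additive constant) by $D$, so both expectations are finite and the duality applies without integrability issues; moreover, adding a constant to $f$ does not change either side, so the supremum may be taken over the centered class.

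Next I would apply the second hypothesis directly: it states precisely that
\begin{equation*}
    \sup_{f \in \mathcal{F}_L} \Big| \E_{x\sim P}[f(x)] - \E_{x\sim Q}[f(x)] \Big| \;\le\; C_0\,\sqrt{D_{\mathrm{KL}}(P \,\|\, Q)}.
\end{equation*}
Chaining this with the Kantorovich--Rubinstein identity above yields $W_1(P,Q) \le C_0 \sqrt{D_{\mathrm{KL}}(P \,\|\, Q)}$, which is the desired inequality. Absolute continuity $P \ll Q$ is used only to ensure the KL divergence is well-defined (finite or $+\infty$); in the latter case the bound is vacuous.

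The only potential obstacle is the direction in which the hypotheses are stated: the substantive content of the Bobkov--Götze theorem is actually the \emph{equivalence} between the $T_1$-transport inequality and sub-Gaussian concentration of Lipschitz observables. Here, however, the concentration estimate is taken as an assumption rather than derived, so I do not need to reproduce Bobkov--Götze's reverse implication (which would go through a Herbst-type argument on the Laplace transform of Lipschitz functions). Consequently, the proof reduces to invoking duality and the stated hypothesis, and I would simply remark that the hypothesis is the standard sub-Gaussian Lipschitz concentration condition appearing in~\cite{bobkov1999exponential}, where a characterization of when it holds is given.
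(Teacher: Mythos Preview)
Your proposal is correct: the second hypothesis is exactly the Kantorovich--Rubinstein dual expression bounded by $C_0\sqrt{D_{\mathrm{KL}}(P\|Q)}$, so the conclusion is immediate once duality is invoked, and you have identified this cleanly. The paper does not supply its own proof of this lemma at all---it is stated with a citation to~\cite{bobkov1999exponential} and then used as a black box in the proof of Proposition~\ref{equ: Proposition}---so your argument in fact goes beyond what the paper provides.
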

\begin{definition} [Expected Loss Discrepancy]
Given a distribution $P$ over  a function family $\gH$, for any $\lambda > 0$ and $\gamma \ge 0$, for any $G^S$ and $G^T$, define the expected loss discrepancy between $\mathcal{V}^S$ and $\mathcal{V}^T$  as
 $
D^{\gamma}_{S, T}(P;\lambda) := \ln \E_{\phi\sim P} e^{\lambda \sbra \Lgh_T(\phi) - \Lg_{S}(\phi)\sket},
$
where $\Lgh_T(\phi)$ and $\Lg_{S}(\phi)$ follow the definition of Eq.~(\ref{eq:margin-loss}).
\end{definition}

\begin{lemma}[Adaptation from ~\cite{ma2021subgroup}] 
Let $ \Phi $ be a family of classification functions. For any classifier $ \phi $ in $ \Phi $, and for any parameters $ \lambda > 0 $ and $ \gamma \geq 0 $, consider any prior distribution $ P $ over $\Phi $ that is independent of the training data $ \mathcal{V}^S $. With a probability of at least $ 1 - \delta $ over the sample $ Y^S $, for any distribution $ Q $ on $\Phi$ such that 
$\Pr_{\phi \sim Q} \left[ \max_{i \in \mathcal{V}^S \cup \mathcal{V}^T} \| \phi_i(X, G) - \phi_i(X, G) \|_{\infty} < \frac{\gamma}{8} \right] > \frac{1}{2}$, the following inequality holds:
\[
    \risk_T(\phi) \leq \hLg_S(\phi) + \frac{1}{\lambda} \left[ 2(\KL(Q \| P) + 1) + \ln \frac{1}{\delta} + \frac{\lambda^2}{4N_S} + D^{\gamma/2}_{S, T}(P;\lambda) \right].
\]
\end{lemma}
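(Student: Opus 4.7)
The plan is to prove the lemma in three conceptual steps: \textbf{(a)} derandomize the deterministic $0$-margin target loss of $\phi$ into a Gibbs-style margin loss under a posterior $\tilde{Q}$ that is $Q$ restricted to the ``good'' event, \textbf{(b)} apply a standard PAC-Bayes change-of-measure inequality between $\tilde{Q}$ and the data-independent prior $P$, and \textbf{(c)} bound the resulting $\KL(\tilde Q\|P)$ back in terms of $\KL(Q\|P)$ using the $>1/2$ mass assumption.

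For step (a), I define $E = \{\phi' : \max_{i\in\mathcal{V}^S\cup\mathcal{V}^T} \|\phi'_i(X,G) - \phi_i(X,G)\|_\infty < \gamma/8\}$ and let $\tilde{Q} = Q(\cdot \mid E)$, which is well-defined because $Q(E) > 1/2$. A coordinate-wise triangle inequality shows that for every $\phi' \in E$ and every node $i$: (i) a $0$-margin error of $\phi$ at target node $i$ forces a $\gamma/4$-margin error of $\phi'$ there, since each coordinate changes by at most $\gamma/8$; and (ii) a $\gamma/2$-margin error of $\phi'$ at source node $i$ forces a $\gamma$-margin error of $\phi$, since $\gamma/2 + 2(\gamma/8) = 3\gamma/4 \le \gamma$. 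Taking $\tilde{Q}$-expectations yields
\begin{equation*}
\risk_T(\phi) \le \E_{\phi' \sim \tilde{Q}}[\Lgq_T(\phi')], \qquad \E_{\phi' \sim \tilde{Q}}[\hLgh_S(\phi')] \le \hLg_S(\phi).
\end{equation*}

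For step (b), I would apply the Donsker--Varadhan change-of-measure inequality with prior $P$ (independent of $Y^S$ by hypothesis) and posterior $\tilde{Q}$ to the random variable $\lambda(\Lgq_T(\phi') - \hLgh_S(\phi'))$, obtaining
\begin{equation*}
\lambda \E_{\tilde Q}\bigl[\Lgq_T(\phi') - \hLgh_S(\phi')\bigr] \le \KL(\tilde Q\|P) + \ln \E_{\phi' \sim P} e^{\lambda(\Lgq_T(\phi') - \hLgh_S(\phi'))}.
\end{equation*}
To control the right-hand $P$-expectation I factor $e^{\lambda(\Lgq_T - \hLgh_S)} = e^{\lambda(\Lgq_T - \Lgh_S)}\cdot e^{\lambda(\Lgh_S - \hLgh_S)}$: the first factor's $P$-expectation is exactly $\exp(D^{\gamma/2}_{S,T}(P;\lambda))$ by the definition given in the excerpt, while the second is a bounded-difference MGF whose $Y^S$-expectation is controlled by Hoeffding's lemma applied to the $[0,1]$-valued margin loss averaged over $N_S$ source nodes, contributing the $\lambda^2/(4N_S)$ term. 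Markov's inequality against the $Y^S$-randomness of the second factor then produces the $\ln(1/\delta)$ correction at confidence $1-\delta$.

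For step (c), the key estimate is $\KL(\tilde Q\|P) \le 2(\KL(Q\|P)+1)$. Writing $Q = Q(E)\tilde Q + Q(E^c)Q(\cdot \mid E^c)$ and using joint convexity of $\KL$ in the first argument (equivalently the chain rule), one obtains $\KL(\tilde Q\|P) \le (\KL(Q\|P) + h(Q(E)))/Q(E)$ with $h$ the binary entropy; since $Q(E) > 1/2$ and $h \le \ln 2 \le 1$, this gives the $2(\KL(Q\|P)+1)$ factor. Chaining the inequalities of (a), (b), and (c) produces exactly the lemma. The main obstacle is step (a): the $\gamma/8$ sup-norm tolerance must be propagated through the mismatched margins (source margin $\gamma/2$ for $\phi'$, target margin $0$ for $\phi$) so that the resulting surrogate loss is precisely $\Lgq_T - \hLgh_S$; if either margin is chosen too tight or too loose, the MGF in step (b) will no longer align with the stated discrepancy $D^{\gamma/2}_{S,T}(P;\lambda)$, and the $2(\KL+1)$ constant in step (c) cannot be absorbed without sacrificing the $\hLg_S(\phi)$ leading term.
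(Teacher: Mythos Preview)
The paper does not prove this lemma; it is stated as an adaptation from \cite{ma2021subgroup} and simply invoked. Your three-step outline---derandomize via the $\gamma/8$ perturbation event, apply Donsker--Varadhan once to $\lambda(\Lgq_T-\hLgh_S)$, then control $\KL(\tilde Q\|P)$ by $2(\KL(Q\|P)+1)$ using $Q(E)>1/2$---is exactly the standard argument from that line of work (tracing back through Neyshabur et al.'s PAC-Bayes margin bounds), so there is nothing in the present paper to compare against beyond confirming your route is the intended one.

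One point in step~(b) deserves tightening. After writing $e^{\lambda(\Lgq_T-\hLgh_S)} = A\cdot B$ with $A=e^{\lambda(\Lgq_T-\Lgh_S)}$ and $B=e^{\lambda(\Lgh_S-\hLgh_S)}$, you cannot split $\ln\E_P[AB]$ into $\ln\E_P[A]+\ln\E_P[B]$. The correct order is: take the $Y^S$-expectation first (using Fubini and independence of $P$ from $Y^S$) to get $\E_{Y^S}\E_P[AB]=\E_P[A\,\E_{Y^S}[B]]$; bound $\E_{Y^S}[B]\le e^{\lambda^2/(8N_S)}$ \emph{uniformly in $\phi'$} by Hoeffding; pull that constant out, leaving $e^{\lambda^2/(8N_S)}\E_P[A]=e^{\lambda^2/(8N_S)}\exp D^{\gamma/2}_{S,T}(P;\lambda)$; and only then apply Markov in $Y^S$ to obtain the $\ln(1/\delta)$. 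Your phrasing suggests you have this sequence in mind, but as written it reads like a direct product split. With this order made explicit, the argument goes through and yields the stated bound (up to the harmless constant $1/8$ versus $1/4$ in the concentration term).
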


\begin{proposition}[Bound for $D^{\gamma}_{S, T}(P;\lambda)$, Adaptation from ~~\cite{ma2021subgroup,mao2024demystifying,fang2025benefits}]
    For any $\gamma \ge 0$, and under the assumption that the prior distribution $P$ over the classification function family $\Phi$ is defined, we establish a bound for the domain discrepancy measure $ D^{\gamma/2}_{S, T}(P;\lambda) $.
    Specifically, we have the following inequality:
    \begin{equation}
        \begin{aligned}
            D^{\gamma/2}_{S, T}(P;\lambda) & 
            \leq \frac{1}{\max(N_S, N_T)} \sum_{i \in V_S} \frac{1}{N_T} \sum_{j \in V_T} \left[
            \ln 3 +  \frac{\lambda \rho C}{\sqrt{2\pi}\sigma}   (\|f_i - f_j\|+ |h_i - h_j|\cdot \rho)\right] \\ & \leq\frac{\lambda \rho C}{\sqrt{2\pi}\sigma\cdot N_S\cdot N_T}\cdot W_1(P_S^F, P_T^F) +  W_1(P_S^H, P_T^H)\\
            & \leq\frac{\lambda \rho C C_0}{\sqrt{2\pi}\sigma\cdot N_S\cdot N_T}\cdot \left[\sqrt{D_{\text{KL}}(P_S^F \| P_T^F)} +
            \sqrt{D_{\text{KL}}(P_S^H \| P_T^H)} \right]
        \end{aligned}
    \end{equation}
    \label{equ: Proposition}
where $f$ is the unspecified operator of first-order aggregation ( considering $f= AX$ and $f=LX$ two easy cases), $C$ is the number of classes, and $\rho,\sigma$ are constants controlled by the node feature distribution of different classes.
\end{proposition}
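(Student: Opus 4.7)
The plan is to establish the stated chain of three inequalities in sequence, moving from the exponential loss-discrepancy definition on the left to the KL-based data-level bound on the right.

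First, I would follow the PAC-Bayesian template of Ma et al., adapted to graph node classification. Starting from $D^{\gamma/2}_{S,T}(P;\lambda) = \ln \E_{\phi\sim P} e^{\lambda(\Lgh_T(\phi) - \Lg_{S}(\phi))}$, the first move is to introduce a doubly-stochastic coupling between source nodes $i\in\mathcal{V}^S$ and target nodes $j\in\mathcal{V}^T$, insert this coupling inside the exponent, and apply Jensen's inequality to pull the logarithm through the average. For each pair $(i,j)$, I would then bound the moment generating function of the per-node margin indicator differences $\ind{\text{margin}_j < \gamma/2} - \ind{\text{margin}_i < \gamma}$. Assuming class-conditional aggregated node features are approximately Gaussian with variance scale $\sigma$ and classifier Lipschitz constant $\rho$, a Mills-ratio/Gaussian tail argument yields the per-pair bound $\ln 3 + \frac{\lambda\rho C}{\sqrt{2\pi}\sigma}(\|f_i - f_j\| + \rho|h_i - h_j|)$. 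Here $\|f_i - f_j\|$ controls the shift in the aggregated representation ($f = AX$ or $f = LX$), while $|h_i - h_j|$ separately captures the heterophily shift that modulates the effective class-conditional variance.

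Second, I would identify the pairwise-averaged quantity with 1-Wasserstein distances. Viewing the feature and heterophily sums under the Kantorovich formulation and minimizing over admissible couplings replaces the product-coupling pairwise average with $W_1(P_S^F, P_T^F)$ and $W_1(P_S^H, P_T^H)$, invoking Corollaries~\ref{cor:feature-wasserstein} and~\ref{cor:heterophily-wasserstein}. Third, the final inequality follows by direct application of Lemma~\ref{lemma:W1-KL}, substituting each $W_1$ with $C_0\sqrt{D_{\text{KL}}(\cdot\|\cdot)}$ under the bounded-support and $1$-Lipschitz concentration hypotheses that already appear in that lemma.

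The hard part will be the first step, specifically the per-pair log-moment-generating bound with its \emph{additive} split between feature and heterophily shifts. The challenge is to propagate Gaussian tail control through the first-order aggregation so that $|h_i - h_j|$ enters linearly with coefficient $\rho$, rather than being entangled with $\|f_i - f_j\|$ through a multiplicative coupling; this requires treating the aggregation operator $f$ as a contraction whose local bias is controlled by the node's own heterophily, so that the Gaussian tail factor decomposes into one part governed by feature displacement and another by heterophily-induced variance inflation. It is precisely this decoupling that justifies the separately targeted alignment terms on the homophilic, full-pass, and heterophilic channels in the downstream HGDA algorithm, since each $W_1$ then upper-bounds a distinct, addressable discrepancy.
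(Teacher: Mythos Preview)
Your proposal is correct and follows essentially the same three-step route as the paper. The paper simply cites prior work \cite{ma2021subgroup,mao2024demystifying,fang2025benefits} for the first inequality rather than re-deriving the per-pair Gaussian tail bound you sketch, then uses the same Kantorovich identification with Corollaries~\ref{cor:feature-wasserstein} and~\ref{cor:heterophily-wasserstein} for the second step and the same direct application of Lemma~\ref{lemma:W1-KL} for the third.
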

\begin{proof}
We will prove the proposition by establishing the two key inequalities in the given equation.

\noindent\textbf{ Proof of the First Inequality (Using Prior Works):} \\
From the cited works~\cite{ma2021subgroup, mao2024demystifying, fang2025benefits}, it follows that the expected loss discrepancy measure $ D^{\gamma/2}_{S, T}(P;\lambda) $ can be bounded by:
\begin{equation}
    D^{\gamma/2}_{S, T}(P;\lambda)  
    \leq \frac{1}{\max(N_S, N_T)} \sum_{i \in V_S} \frac{1}{N_T} \sum_{j \in V_T} 
    \left[
    \ln 3 +  \frac{\lambda C\rho}{\sqrt{2\pi}\sigma}   (\|f_i - f_j\|+ |h_i - h_j|\cdot \rho)
    \right].
\end{equation}
This follows from existing domain adaptation literature, where the domain discrepancy measure can be controlled by a sum over pairwise differences of node features and heterophily values.

\noindent\textbf{ Proof of the Second Inequality:} \\
We now show that:
\begin{align}
    &\frac{1}{\max(N_S, N_T)} \sum_{i \in V_S} \frac{1}{N_T} \sum_{j \in V_T} 
    \left[
    \ln 3 +  \frac{\lambda C\rho}{\sqrt{2\pi}\sigma}   (\|f_i - f_j\|+ |h_i - h_j|\cdot \rho)
    \right]\\
    &\leq \frac{\lambda C\rho}{\sqrt{2\pi}\sigma\cdot N_S\cdot N_T} \left(W_1(P_S^F, P_T^F) + \rho W_1(P_S^H, P_T^H) \right).
\end{align}

From \textbf{Corollary~\ref{cor:feature-wasserstein}}, we know that the Wasserstein-1 distance between the node feature distributions is:
\begin{equation}
    W_1(P_S^F, P_T^F) = \inf_{\gamma^F \in \Pi(P_S^F, P_T^F)} \mathbb{E}_{(i,j) \sim \gamma^F} [\|f_i - f_j\|].
\end{equation}
Since the sum over all node pairs approximates an expectation over a joint coupling of the distributions, we approximate:
\begin{equation}
    \sum_{i \in V_S} \sum_{j \in V_T} \|f_i - f_j\| \approx N_S N_T W_1(P_S^F, P_T^F).
\end{equation}
Substituting this into our sum, we obtain:
\begin{equation}
    \sum_{i \in V_S} \sum_{j \in V_T} \frac{1}{N_T} \|f_i - f_j\| \leq N_S W_1(P_S^F, P_T^F).
\end{equation}

Similarly, from \textbf{Corollary~\ref{cor:heterophily-wasserstein}}, we have:
\begin{equation}
    W_1(P_S^H, P_T^H) = \inf_{\gamma^H \in \Pi(P_S^H, P_T^H)} \mathbb{E}_{(i,j) \sim \gamma^H} [|h_i - h_j|].
\end{equation}
Following the same argument as before, we approximate:
\begin{equation}
    \sum_{i \in V_S} \sum_{j \in V_T} |h_i - h_j| \approx N_S N_T W_1(P_S^H, P_T^H).
\end{equation}
Thus:
\begin{equation}
    \sum_{i \in V_S} \sum_{j \in V_T} \frac{1}{N_T} |h_i - h_j| \leq N_S W_1(P_S^H, P_T^H).
\end{equation}

Now substituting these bounds into our sum:
\begin{equation}
    \sum_{i \in V_S} \sum_{j \in V_T} \left[ \ln 3 +  \frac{\lambda C\rho}{\sqrt{2\pi}\sigma}   (\|f_i - f_j\|+ |h_i - h_j|\cdot \rho) \right]
\end{equation}
\begin{equation}
    \leq \sum_{i \in V_S} \sum_{j \in V_T} \ln 3 + \frac{\lambda C\rho}{\sqrt{2\pi}\sigma} N_S (W_1(P_S^F, P_T^F) + \rho W_1(P_S^H, P_T^H)).
\end{equation}

Dividing by $ \max(N_S, N_T) $, we obtain:
\begin{equation}
    \frac{1}{\max(N_S, N_T)} \sum_{i \in V_S} \sum_{j \in V_T} \left[ \ln 3 +  \frac{\lambda C\rho}{\sqrt{2\pi}\sigma}   (\|f_i - f_j\|+ |h_i - h_j|\cdot \rho) \right]
\end{equation}
\begin{equation}
    \leq \frac{\lambda C\rho}{\sqrt{2\pi}\sigma\cdot N_S\cdot N_T} (W_1(P_S^F, P_T^F) + \rho W_1(P_S^H, P_T^H)).
\end{equation}

\noindent\textbf{ Proof of the Third Inequality:} \\
From \textbf{Lemma~\ref{lemma:W1-KL}} (Bobkov-Götze Inequality)~\cite{bobkov1999exponential}, we have:
\begin{equation}
    W_1(P_S^F, P_T^F) \leq C_0 \sqrt{D_{\text{KL}}(P_S^F \| P_T^F)},
\end{equation}
\begin{equation}
    W_1(P_S^H, P_T^H) \leq C_0 \sqrt{D_{\text{KL}}(P_S^H \| P_T^H)}.
\end{equation}

Using these bounds in:
\begin{equation}
    \frac{\lambda \rho C}{\sqrt{2\pi}\sigma\cdot N_S\cdot N_T} \cdot W_1(P_S^F, P_T^F) + \rho W_1(P_S^H, P_T^H),
\end{equation}
we obtain:
\begin{equation}
    \frac{\lambda \rho C C_0}{\sqrt{2\pi}\sigma\cdot N_S\cdot N_T} \cdot \left[\sqrt{D_{\text{KL}}(P_S^F \| P_T^F)} +
    \sqrt{D_{\text{KL}}(P_S^H \| P_T^H)} \right].
\end{equation}
Combining both inequalities, we establish the bound:
\begin{equation}
    D^{\gamma/2}_{S, T}(P;\lambda)  
     \leq\frac{\lambda \rho C C_0}{\sqrt{2\pi}\sigma\cdot N_S\cdot N_T}\cdot \left[\sqrt{D_{\text{KL}}(P_S^F \| P_T^F)} +
            \sqrt{D_{\text{KL}}(P_S^H \| P_T^H)} \right]
\end{equation}

This completes the proof. \qed
\end{proof}
\begin{corollary}[KL Divergence Decomposition of Feature Distributions]
\label{cor:KL-decomposition}
Let $ P_S^F $ and $ P_T^F $ be the feature distributions in the source and target domains, respectively. Suppose that node features $ X $ undergo transformations defined by the adjacency matrix $ A $ and Laplacian matrix $ L $, i.e.,
\begin{equation}
    P_S^F = P(A^S X^S, L^S X^S | X^S) P(X^S), \quad P_T^F = P(A^T X^T, L^T X^T | X^T) P(X^T).
\end{equation}
Then, the KL divergence between the feature distributions satisfies:
\begin{equation}
    D_{\text{KL}}(P_S^F \| P_T^F) \leq D_{\text{KL}}(A^S X^S \| A^T X^T) + D_{\text{KL}}(X^S \| X^T) + D_{\text{KL}}(L^S X^S \| L^T X^T).
\end{equation}
\end{corollary}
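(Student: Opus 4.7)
The plan is to treat $P^F$ as the joint distribution over the triple $(AX, X, LX)$ — as the corollary prescribes via the factorization $P^F = P(AX, LX \mid X)\,P(X)$ — and then apply the chain rule of KL divergence followed by a sub-additivity step that exploits the conditional independence of $AX$ and $LX$ given $X$. This reduces the joint KL divergence to three pieces that can each be identified with the marginal KL divergences on the right-hand side of the claim.

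Concretely, first I would apply the chain rule to obtain
\[
D_{\text{KL}}(P_S^F \| P_T^F) = D_{\text{KL}}(P_S(X) \| P_T(X)) + \mathbb{E}_{X \sim P_S}\!\left[D_{\text{KL}}\!\big(P_S(AX, LX \mid X)\,\|\,P_T(AX, LX \mid X)\big)\right],
\]
where the first term is exactly $D_{\text{KL}}(X^S \| X^T)$. Next, because the adjacency and Laplacian operators act deterministically on $X$ and act on complementary spectral components, $AX$ and $LX$ are conditionally independent given $X$; therefore the conditional KL on the joint factorizes into a sum
\[
D_{\text{KL}}(P_S(AX \mid X) \| P_T(AX \mid X)) + D_{\text{KL}}(P_S(LX \mid X) \| P_T(LX \mid X)).
\]

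The remaining step is to upper bound each of these conditional KL terms by its marginal counterpart, i.e.\ $\mathbb{E}_X[D_{\text{KL}}(P_S(AX \mid X) \| P_T(AX \mid X))] \leq D_{\text{KL}}(A^S X^S \| A^T X^T)$ and likewise for $LX$. I would justify this either by invoking a mild product-form assumption on $P^F$ — natural in the GDA setting where source and target differ primarily in their node attributes — or by appealing to the data processing inequality applied to the deterministic channels $X \mapsto AX$ and $X \mapsto LX$ combined with a convexity/averaging argument. Summing the three pieces then produces the corollary.

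The main obstacle will be this last step: the passage from conditional to marginal KL is not unconditionally valid, so one must either posit a factorization assumption explicitly or argue carefully through the reverse chain rule to avoid double-counting the $D_{\text{KL}}(X^S \| X^T)$ contribution. The chain-rule decomposition itself is routine; the delicate part is ensuring that the dependence structure among the three views $(AX, X, LX)$ does not inflate the joint KL beyond the asserted sum of marginals.
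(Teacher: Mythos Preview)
Your approach is essentially the same as the paper's: chain rule to split off $D_{\text{KL}}(X^S\|X^T)$, then conditional independence of $AX$ and $LX$ given $X$ to factor the remaining conditional KL, then identify the pieces with the marginal KL terms. The paper collapses your steps (a) and (b) into a single line---writing $D_{\text{KL}}(P(A^SX^S,L^SX^S\mid X^S)\|P(A^TX^T,L^TX^T\mid X^T))\le D_{\text{KL}}(A^SX^S\|A^TX^T)+D_{\text{KL}}(L^SX^S\|L^TX^T)$ and justifying it only by ``assuming conditional independence between adjacency and Laplacian transformations''---so the conditional-to-marginal passage you flag as delicate is simply asserted there rather than argued; your caution on that point is warranted, but you are not missing anything the paper supplies.
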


\begin{proof}
Using the chain rule for KL divergence, we decompose:
\begin{equation}
    D_{\text{KL}}(P_S^F \| P_T^F) = D_{\text{KL}}(P(A^S X^S, L^S X^S | X^S) P(X^S) \| P(A^T X^T, L^T X^T | X^T) P(X^T)).
\end{equation}
Applying the chain rule:
\begin{equation}
    D_{\text{KL}}(P_S^F \| P_T^F) = D_{\text{KL}}(P(X^S) \| P(X^T)) + \mathbb{E}_{P(X^S)} D_{\text{KL}}(P(A^S X^S, L^S X^S | X^S) \| P(A^T X^T, L^T X^T | X^T)).
\end{equation}
Assuming conditional independence between adjacency and Laplacian transformations:
\begin{equation}
    D_{\text{KL}}(P(A^S X^S, L^S X^S | X^S) \| P(A^T X^T, L^T X^T | X^T)) \leq D_{\text{KL}}(A^S X^S \| A^T X^T) + D_{\text{KL}}(L^S X^S \| L^T X^T).
\end{equation}
Thus, combining the results:
\begin{equation}
    D_{\text{KL}}(P_S^F \| P_T^F) \leq D_{\text{KL}}(A^S X^S \| A^T X^T) + D_{\text{KL}}(X^S \| X^T) + D_{\text{KL}}(L^S X^S \| L^T X^T).
\end{equation}
This completes the proof. \qed
\end{proof}

 \begin{lemma}\label{lemma:general-deterministic}
[Domain Adaptation Bound for Deterministic Classifiers]
    Let $ \Phi $ be a family of classification functions. For any classifier $ \phi $ in $ \Phi $, and for any parameters $ \lambda > 0 $ and $ \gamma \geq 0 $, consider any prior distribution $ P $ over $ \Phi $ that is independent of the training data $ \mathcal{V}^S $. With a probability of at least $ 1 - \delta $ over the sample $ Y^S $, for any distribution $ Q $ on $ \Phi $ 
    such that
the following inequality holds:
\begin{align}
    \risk_T(\phi) \leq \hLg_S(\phi)& + \frac{1}{\lambda} \left[ 2(\KL(Q \| P) + 1) + \ln \frac{1}{\delta}\right]\\ & + \frac{\lambda}{4N_S} + D^{\gamma/2}_{S, T}(P;\lambda) \label{eq:general-deterministic}
\end{align}
    
 \end{lemma}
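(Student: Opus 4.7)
The plan is to follow the classical PAC-Bayes domain adaptation template by decomposing the bound on $\risk_T(\phi)$ into three bridges: (i) a PAC-Bayes concentration inequality controlling the gap between the expected and empirical source margin losses, (ii) a change-of-measure step that transfers from source to target using $D^{\gamma/2}_{S,T}(P;\lambda)$, and (iii) a derandomization argument converting the stochastic $Q$-averaged losses back to the deterministic $\phi$ using the margin-concentration condition on $Q$.

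First, I would invoke Donsker--Varadhan's change-of-measure identity together with the sub-Gaussian control of the independent margin indicators. Since each summand in $\hLg_S(\phi')$ is bounded in $[0,1]$, the random variable $N_S(\Lg_S(\phi') - \hLg_S(\phi'))$ is sub-Gaussian with parameter $N_S/4$, which yields
\begin{equation*}
\E_{\phi'\sim Q}\Lg_S(\phi') \leq \E_{\phi'\sim Q}\hLg_S(\phi') + \frac{1}{\lambda}\Bigl[\KL(Q\|P) + \ln\tfrac{1}{\delta}\Bigr] + \frac{\lambda}{4N_S},
\end{equation*}
with probability at least $1-\delta$ over $Y^S$. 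This is the standard Catoni-style bound, and it contributes one factor of $\KL(Q\|P)$ and the $\lambda/(4N_S)$ term in the final inequality.

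Second, I would apply the change-of-measure inequality once more to the quantity $\Lgh_T(\phi') - \Lg_S(\phi')$, using the prior $P$. By the very definition of the expected loss discrepancy,
\begin{equation*}
\E_{\phi'\sim Q}\bigl[\Lgh_T(\phi') - \Lg_S(\phi')\bigr] \leq \frac{1}{\lambda}\bigl[\KL(Q\|P) + D^{\gamma/2}_{S,T}(P;\lambda)\bigr].
\end{equation*}
Chaining this with the source-side PAC-Bayes bound gives a control on $\E_{\phi'\sim Q}\Lgh_T(\phi')$ in terms of $\E_{\phi'\sim Q}\hLg_S(\phi')$ and the desired complexity terms, accounting for the coefficient $2\KL(Q\|P)$ appearing in the target.

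Finally, the derandomization step uses the hypothesis that $\Pr_{\phi'\sim Q}[\max_i \|\phi'_i(X,G)-\phi_i(X,G)\|_\infty < \gamma/8] > 1/2$. On that event, any margin-$\gamma$ misclassification by $\phi$ implies a margin-$\gamma/2$ misclassification by $\phi'$, and conversely, a margin-$0$ (standard) error by $\phi$ on the target is implied by a margin-$\gamma/2$ error by $\phi'$. Together with Markov's inequality on the event $\{Q\text{ puts mass} > \tfrac12\text{ on the good set}\}$, this yields $\risk_T(\phi) \leq 2\,\E_{\phi'\sim Q}\Lgh_T(\phi')$ and $\E_{\phi'\sim Q}\hLgh_S(\phi') \leq \hLg_S(\phi)$ in the appropriate directions. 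Multiplying the previous inequality by $2$ and absorbing $\ln 2$ into the constant $+1$ inside the bracket produces exactly the stated $2(\KL(Q\|P)+1)$ coefficient.

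The main obstacle will be the derandomization bookkeeping: one must thread the factor of $2$ consistently through both the source and target sides without loosening the $\lambda/(4N_S)$ or the discrepancy term, and verify that the margin-chaining directions line up (empirical $\gamma$-loss of $\phi$ dominating $\gamma/2$-loss of $\phi'$ on the source, while $\gamma/2$-loss of $\phi'$ dominates $0$-loss of $\phi$ on the target). The PAC-Bayes concentration and change-of-measure steps are essentially routine; the delicate constant accounting in the derandomization is where I would spend most of the effort.
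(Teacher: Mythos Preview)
Your three-step template—Catoni-style PAC-Bayes concentration on the source, Donsker--Varadhan change of measure to inject $D^{\gamma/2}_{S,T}(P;\lambda)$, and margin-based derandomization via the $\gamma/8$-closeness hypothesis on $Q$—is exactly the argument underlying this lemma; the paper does not prove it at all but imports it from Ma et al.\ (2021), first as its Lemma~2 (with the concentration condition on $Q$ written out) and then restated as the present Lemma~3 (where that condition has been truncated to a dangling ``such that''). Your outline is therefore more complete than the paper's own treatment; the one bookkeeping point to watch is that your change-of-measure step naturally yields $\tfrac{1}{\lambda}D^{\gamma/2}_{S,T}(P;\lambda)$, which matches the paper's Lemma~2 rather than the undivided discrepancy term in Lemma~3—this is an internal inconsistency in the paper, not a flaw in your plan.
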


\begin{theorem}
\label{prop:DA-KL}
Let $ \Phi $ be a family of classification functions. For any classifier $ \phi \in \Phi $, and for any parameters $ \lambda > 0 $ and $ \gamma \geq 0 $, consider any prior distribution $ P $ over $ \Phi $ that is independent of the training data $ \mathcal{V}^S $. With probability at least $ 1 - \delta $ over the sample $ Y^S $, for any distribution $ Q $ on $ \Phi $, we have:
\begin{equation}
    \begin{aligned}
        \risk_T(\phi) \leq \hLg_S(\phi) &+ \frac{1}{\lambda} \left[ 2(\KL(Q \| P) + 1) + \ln \frac{1}{\delta} \right] + \frac{\lambda}{4N_S} 
        + \frac{\lambda \rho C C_0}{\sqrt{2\pi}\sigma\cdot N_S\cdot N_T}\cdot\\ 
        & 
        \left[\sqrt{D_{\text{KL}}(A^S X^S \| A^T X^T)} +
            \sqrt{D_{\text{KL}}(X^S \| X^T)} +
            \sqrt{D_{\text{KL}}(L^S X^S \| L^T X^T)} +\sqrt{D_{\text{KL}}(P_S^H \| P_T^H)} \right].
    \end{aligned}
\end{equation}
\end{theorem}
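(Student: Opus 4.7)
The plan is to obtain the stated bound by chaining three ingredients already assembled in the excerpt: Lemma \ref{lemma:general-deterministic} (the deterministic PAC-Bayes domain adaptation bound), the proposition bounding $D^{\gamma/2}_{S,T}(P;\lambda)$ in terms of Wasserstein distances and then KL divergences, and Corollary \ref{cor:KL-decomposition} (the decomposition of the feature KL divergence into adjacency, attribute, and Laplacian components). No new concentration argument needs to be invented; the work is in assembling them cleanly and making sure the implicit assumptions (bounded support, absolute continuity, and the conditional-independence structure underlying the feature decomposition) are met on both $G^S$ and $G^T$.

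First I would apply Lemma \ref{lemma:general-deterministic} to a fixed $\phi\in\Phi$ and a prior $P$ independent of $\mathcal{V}^S$. This immediately yields, with probability at least $1-\delta$ over $Y^S$,
\begin{equation*}
\risk_T(\phi)\leq \hLg_S(\phi)+\tfrac{1}{\lambda}\!\left[2(\KL(Q\|P)+1)+\ln\tfrac{1}{\delta}\right]+\tfrac{\lambda}{4N_S}+D^{\gamma/2}_{S,T}(P;\lambda),
\end{equation*}
so the remaining task is purely to upper-bound the domain-discrepancy term $D^{\gamma/2}_{S,T}(P;\lambda)$ by the four KL quantities that appear in the theorem.

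Next I would invoke Proposition \ref{equ: Proposition} with the first-order aggregation operator $f$ left unspecified. Applying it twice, once with $f=AX$ and once with $f=LX$, and averaging, converts $D^{\gamma/2}_{S,T}(P;\lambda)$ into a sum of $W_1$ distances between (i) the aggregated-feature distributions $P(A^S X^S)$ versus $P(A^T X^T)$, (ii) $P(L^S X^S)$ versus $P(L^T X^T)$, and (iii) the heterophily marginals $P_S^H$ versus $P_T^H$; the Bobkov--Götze inequality in Lemma \ref{lemma:W1-KL} then upgrades each $W_1$ to $C_0\sqrt{D_{\mathrm{KL}}}$. Here the small technical point to check is that the assumptions behind Lemma \ref{lemma:W1-KL} (bounded support, Lipschitz concentration with constant $C_0$) are inherited from the node-feature regularity assumptions used in Proposition \ref{equ: Proposition}; I would state this as a standing assumption so that the three appeals to the lemma are uniform.

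Finally, I would use Corollary \ref{cor:KL-decomposition} to split the joint feature KL into the sum of three pieces, $D_{\mathrm{KL}}(A^S X^S\|A^T X^T)+D_{\mathrm{KL}}(X^S\|X^T)+D_{\mathrm{KL}}(L^S X^S\|L^T X^T)$, and plug this decomposition into the bound from the previous step. Substituting back into Lemma \ref{lemma:general-deterministic} gives exactly \eqref{da_BOUND}. The main obstacle I anticipate is the bookkeeping in this last substitution: Corollary \ref{cor:KL-decomposition} provides an \emph{additive} bound on the joint KL, whereas Proposition \ref{equ: Proposition} introduces the KL inside a square root, so I must apply subadditivity of $\sqrt{\cdot}$ (i.e. $\sqrt{a+b+c}\le \sqrt{a}+\sqrt{b}+\sqrt{c}$) to distribute the square root across the three decomposed terms before matching the final expression; absorbing the resulting constants into $C_0$ and verifying that the conditional-independence assumption underlying the decomposition is consistent with the aggregation model is the only non-mechanical step.
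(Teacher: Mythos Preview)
Your overall architecture matches the paper's proof exactly: Lemma~\ref{lemma:general-deterministic} $\to$ Proposition~\ref{equ: Proposition} $\to$ Corollary~\ref{cor:KL-decomposition}, with subadditivity of $\sqrt{\cdot}$ to pass the square root through the three-term decomposition. Your final paragraph identifying the $\sqrt{a+b+c}\le\sqrt a+\sqrt b+\sqrt c$ step is spot on and is in fact something the paper's own write-up glosses over.

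The one place your plan deviates is the ``apply Proposition~\ref{equ: Proposition} twice, once with $f=AX$ and once with $f=LX$, and average'' maneuver. This detour is unnecessary and, as stated, does not actually produce the claimed four-term bound: averaging two instantiations of the proposition yields a bound of the form $\tfrac12\bigl[\sqrt{D_{\mathrm{KL}}(A^SX^S\|A^TX^T)}+\sqrt{D_{\mathrm{KL}}(L^SX^S\|L^TX^T)}\bigr]+\sqrt{D_{\mathrm{KL}}(P_S^H\|P_T^H)}$, which drops the raw-attribute term $\sqrt{D_{\mathrm{KL}}(X^S\|X^T)}$ entirely. The paper instead invokes Proposition~\ref{equ: Proposition} \emph{once}, with $P^F$ left as the joint feature distribution, obtaining $\sqrt{D_{\mathrm{KL}}(P_S^F\|P_T^F)}+\sqrt{D_{\mathrm{KL}}(P_S^H\|P_T^H)}$; only then does Corollary~\ref{cor:KL-decomposition} split the first square root into the three pieces $A X$, $X$, $LX$. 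Since your Step~3 already describes exactly this decomposition, simply drop the ``twice and average'' in Step~2 and you recover the paper's argument verbatim.
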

\begin{proof}
By Lemma~\ref{lemma:general-deterministic}, the domain adaptation bound is given by:
\begin{equation}
    \risk_T(\phi) \leq \hLg_S(\phi) + \frac{1}{\lambda} \left[ 2(\KL(Q \| P) + 1) + \ln \frac{1}{\delta} \right] + \frac{\lambda}{4N_S} + D^{\gamma/2}_{S, T}(P;\lambda).
\end{equation}
From Proposition~\ref{equ: Proposition}, we substitute the bound on $ D^{\gamma/2}_{S, T}(P;\lambda) $:
\begin{equation}
    D^{\gamma/2}_{S, T}(P;\lambda) \leq \frac{\lambda \rho C C_0}{\sqrt{2\pi}\sigma\cdot N_S\cdot N_T} \left[\sqrt{D_{\text{KL}}(P_S^F \| P_T^F)} +
    \sqrt{D_{\text{KL}}(P_S^H \| P_T^H)} \right].
\end{equation}
Applying Corollary~\ref{cor:KL-decomposition} to decompose $ D_{\text{KL}}(P_S^F \| P_T^F) $, we obtain:
\begin{equation}
    D_{\text{KL}}(P_S^F \| P_T^F) \leq D_{\text{KL}}(A^S X^S \| A^T X^T) + D_{\text{KL}}(X^S \| X^T) + D_{\text{KL}}(L^S X^S \| L^T X^T).
\end{equation}
Similarly, for the heterophily-based KL divergence:
\begin{equation}
    D_{\text{KL}}(P_S^H \| P_T^H) \leq D_{\text{KL}}(A^S X^S \| A^T X^T) + D_{\text{KL}}(X^S \| X^T) + D_{\text{KL}}(L^S X^S \| L^T X^T).
\end{equation}
Thus, we obtain:
\begin{equation}
    D^{\gamma/2}_{S, T}(P;\lambda) \leq \frac{\lambda \rho C C_0}{\sqrt{2\pi}\sigma\cdot N_S\cdot N_T} \left[\sqrt{D_{\text{KL}}(A^S X^S \| A^T X^T)} +
            \sqrt{D_{\text{KL}}(X^S \| X^T)} +
            \sqrt{D_{\text{KL}}(L^S X^S \| L^T X^T)} \right].
\end{equation}
Substituting this into the original bound for $ \risk_T(\phi) $, we conclude:
\begin{equation}
    \begin{aligned}
        \risk_T(\phi) \leq \hLg_S(\phi) &+ \frac{1}{\lambda} \left[ 2(\KL(Q \| P) + 1) + \ln \frac{1}{\delta} \right] + \frac{\lambda}{4N_S} 
        \\ 
        &+ \frac{\lambda \rho C C_0}{\sqrt{2\pi}\sigma\cdot N_S\cdot N_T} 
        \left[\sqrt{D_{\text{KL}}(A^S X^S \| A^T X^T)} +
            \sqrt{D_{\text{KL}}(X^S \| X^T)} +
            \sqrt{D_{\text{KL}}(L^S X^S \| L^T X^T)} \right].
    \end{aligned}
\end{equation}
This completes the proof. \qed
\end{proof}

\section{Experimental Setup}
The experiments are implemented in the PyTorch platform using an Intel(R) Xeon(R) Silver 4210R CPU @ 2.40GHz, and GeForce RTX A5000 24G GPU.
Technically, two layers GCN is built and we train our model by utilizing the Adam~~\citep{adam} optimizer with learning rate ranging from 0.0001 to 0.0005. In order to prevent over-fitting, we set the dropout rate to 0.5.
In addition, we set weight decay $\in \left\{1e-4, \cdots, 5e-3 \right\}$. For fairness, we use the same parameter settings for all the cross-domain node classification methods in our experiment, except for some special cases. For GCN, UDA-GCN, and JHGDA the GCNs of both the source and target networks contain two hidden layers $(L = 2)$ with structure as $128-16$. The dropout rate for each GCN layer is set to $0.3$.
We repeatedly train and test our model for five times with the same partition of dataset and then report the average of ACC.
\label{Experimental Setup}

\begin{figure*}
    [!htbp]
    \centering
    \subfigure[MAG]{
    \includegraphics[width=39mm]{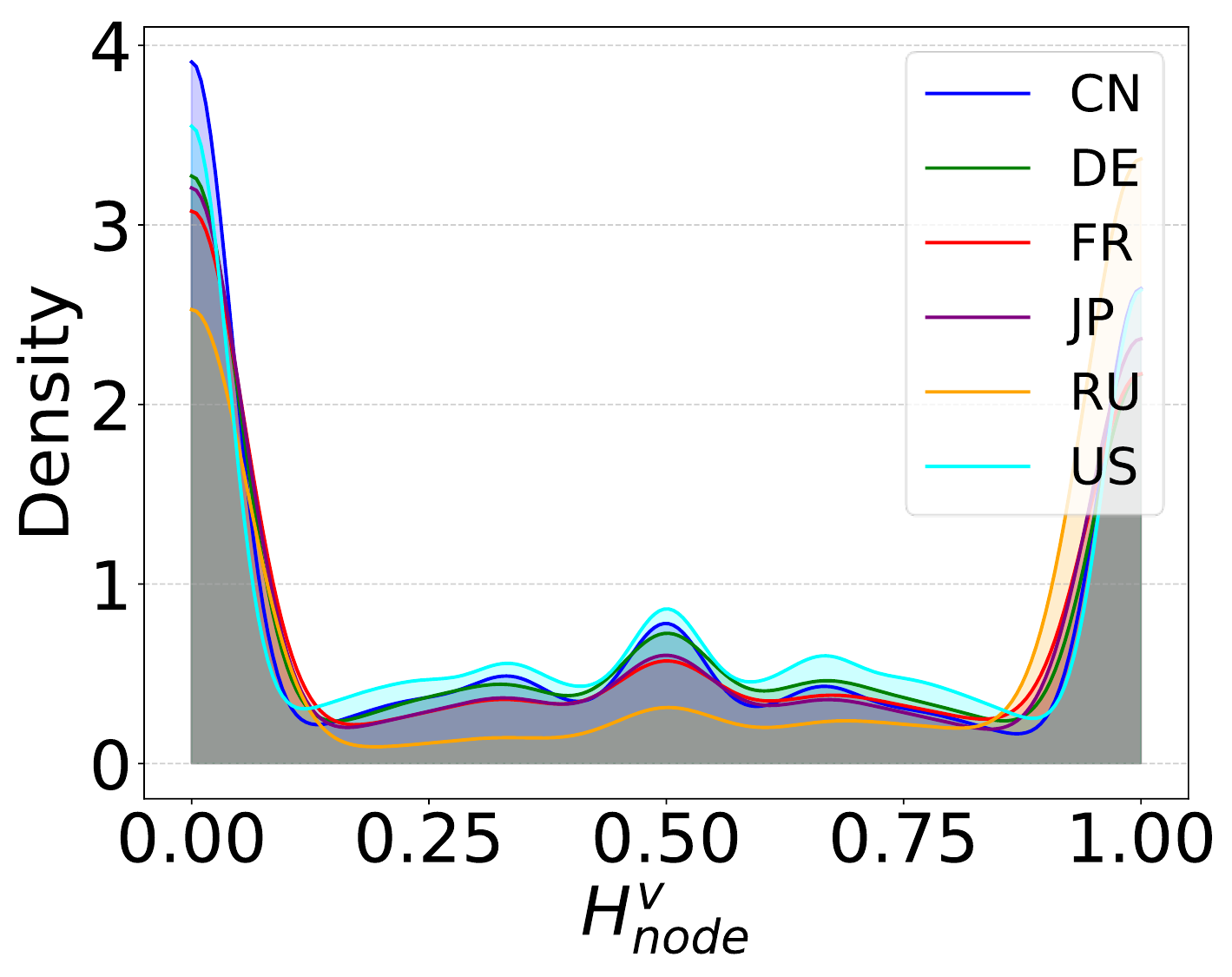}
    }
    \hspace{+1mm}
    \subfigure[Citation]{
    \includegraphics[width=37mm]{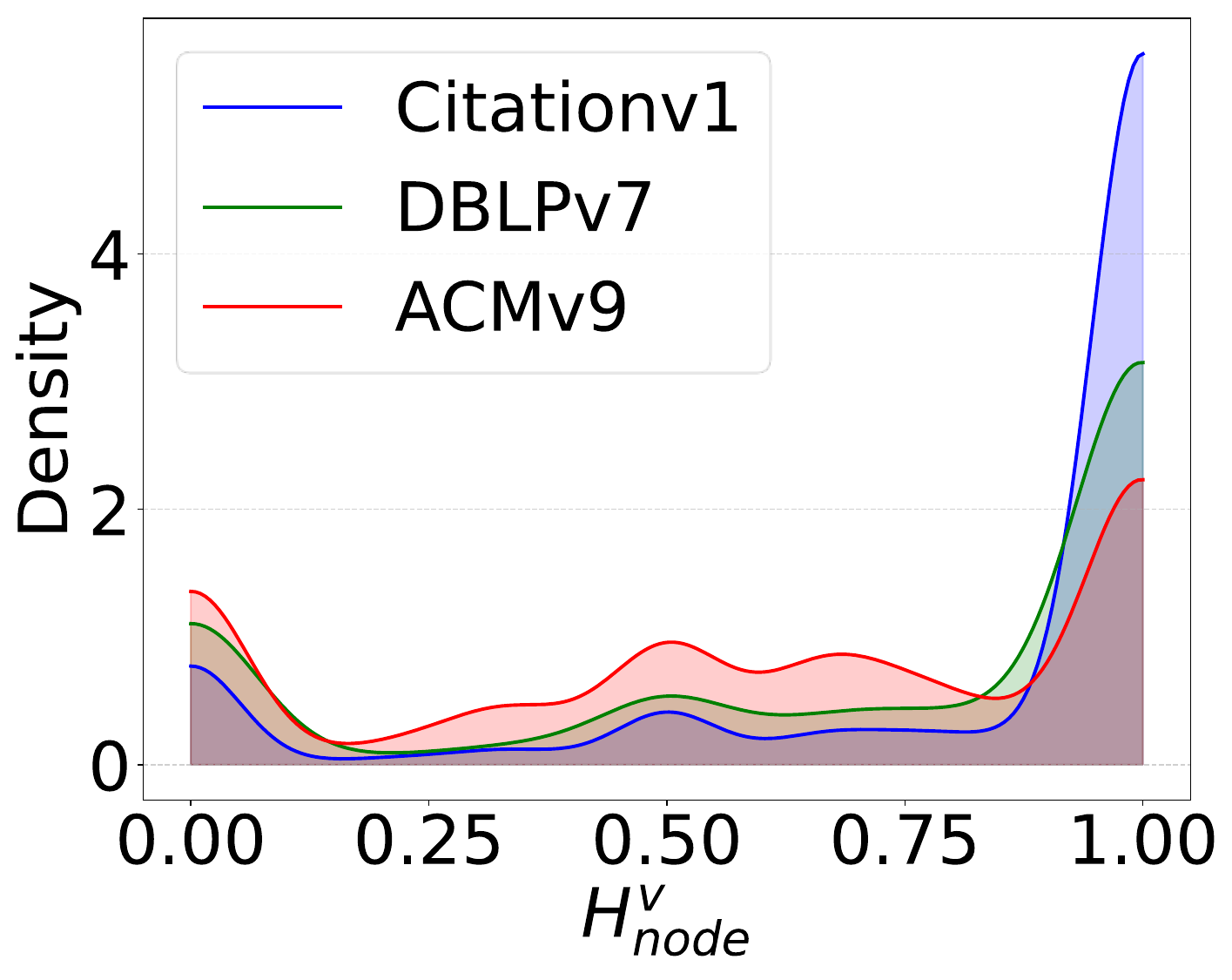}
    }
    \subfigure[Blog]{
    \includegraphics[width=37mm]{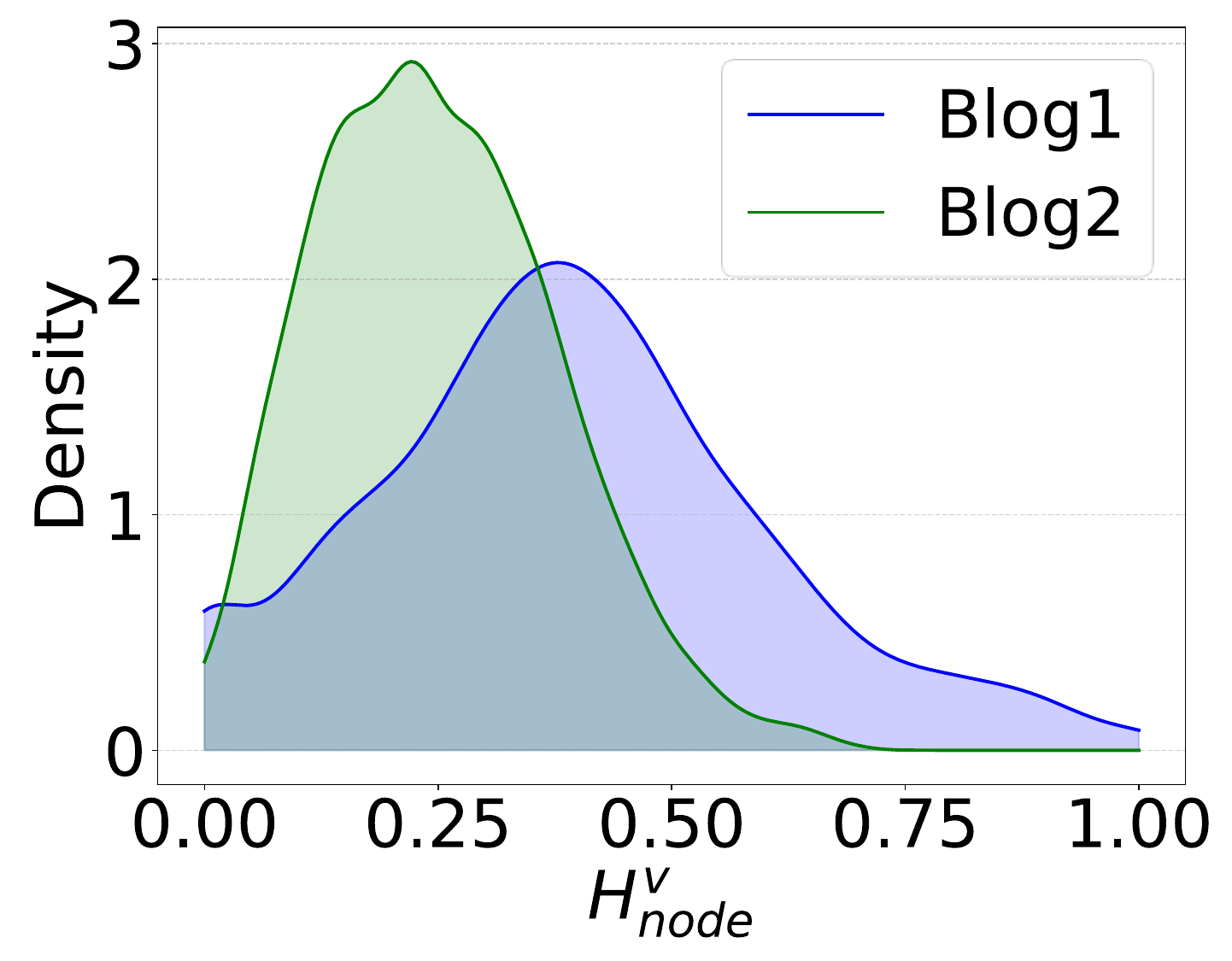}
    }
    \hspace{+1mm}
    \subfigure[Twitch]{
    \includegraphics[width=37mm]{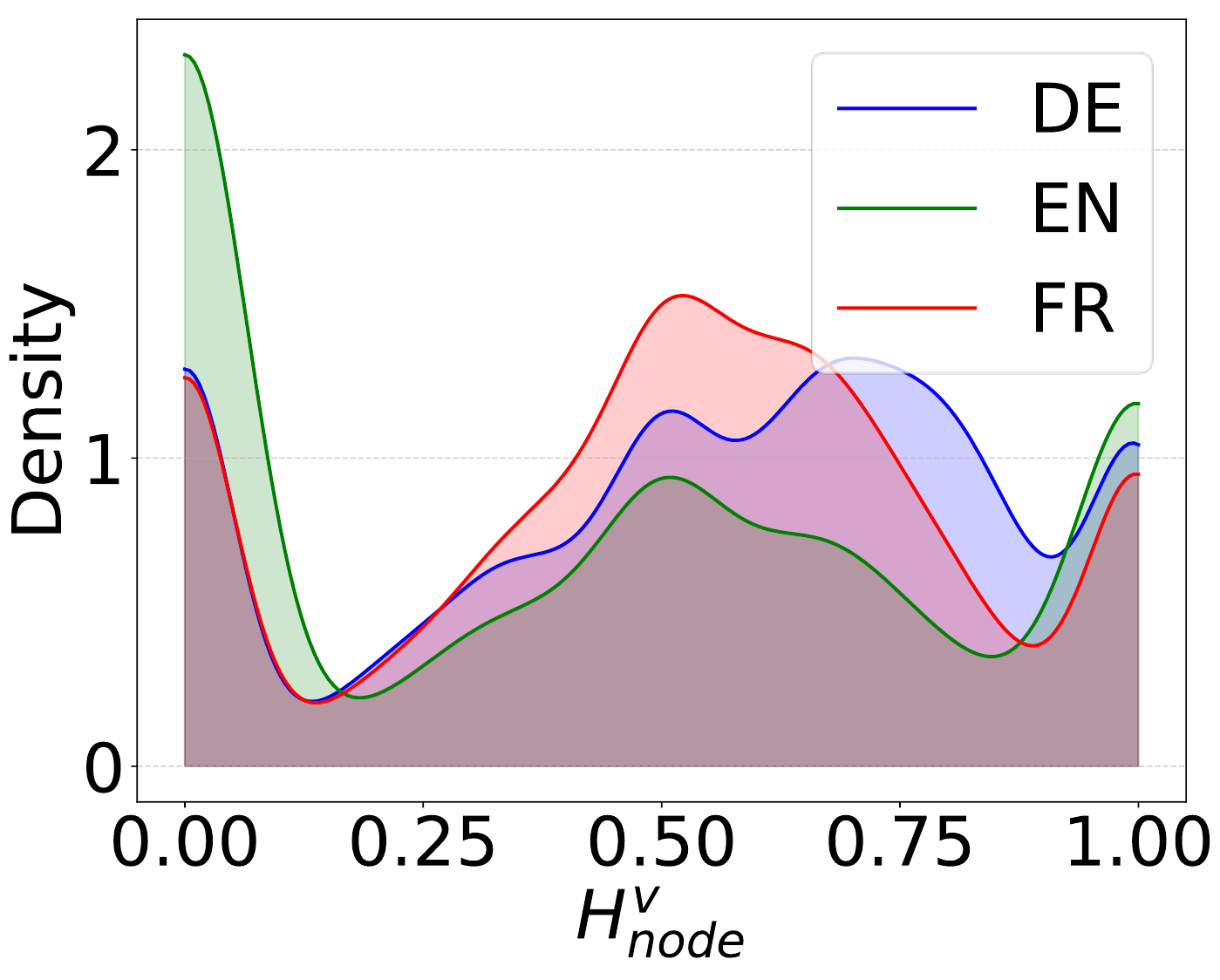}
    }
    \vspace{-2.mm}
     \caption{This shows the local node homophily distibution shift is existing in MAG, Citation, Blog and Twitch datasets.}
  \label{fig: datsets in MAG, Citation, Blog and Twitch}
\end{figure*}

\begin{figure*}
    [!htbp]
    \centering
    \subfigure[CN $\leftrightarrow$ FR ]{
    \includegraphics[width=37mm]{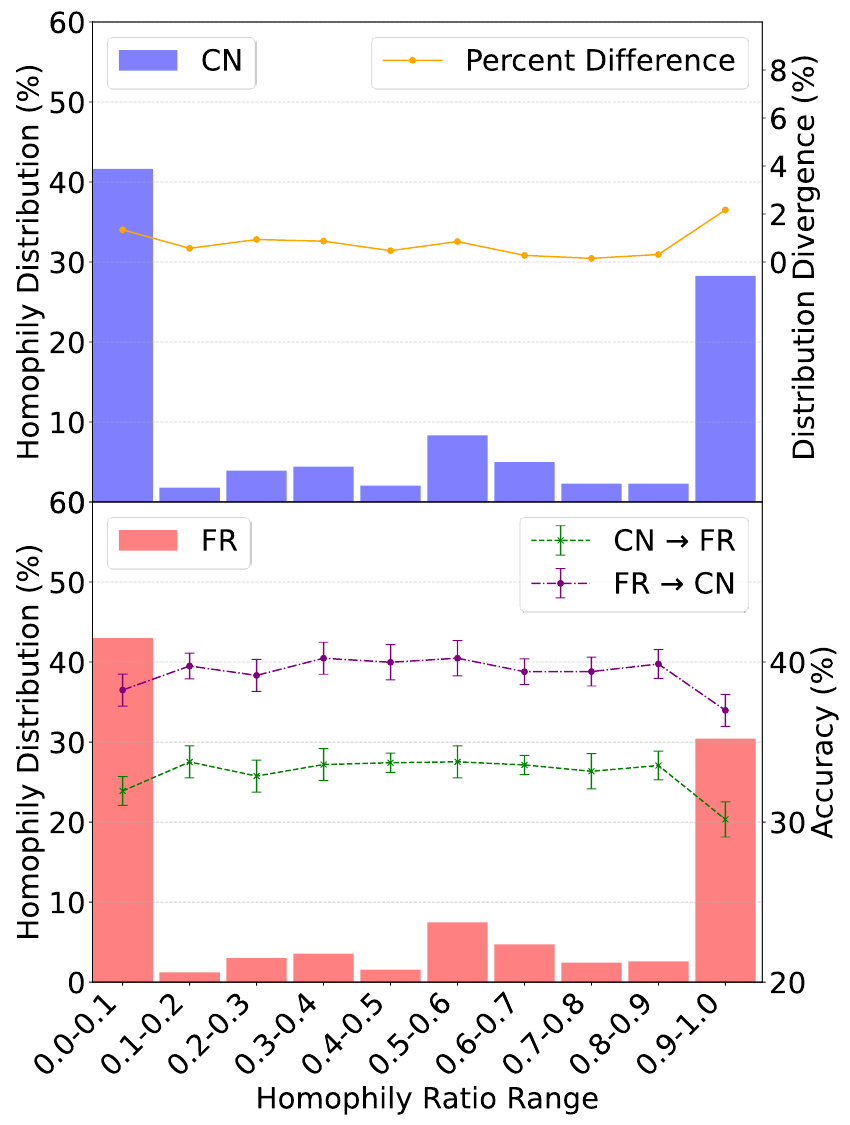}
    }
    \hspace{+1mm}
    \subfigure[CN $\leftrightarrow$ US ]{
    \includegraphics[width=37mm]{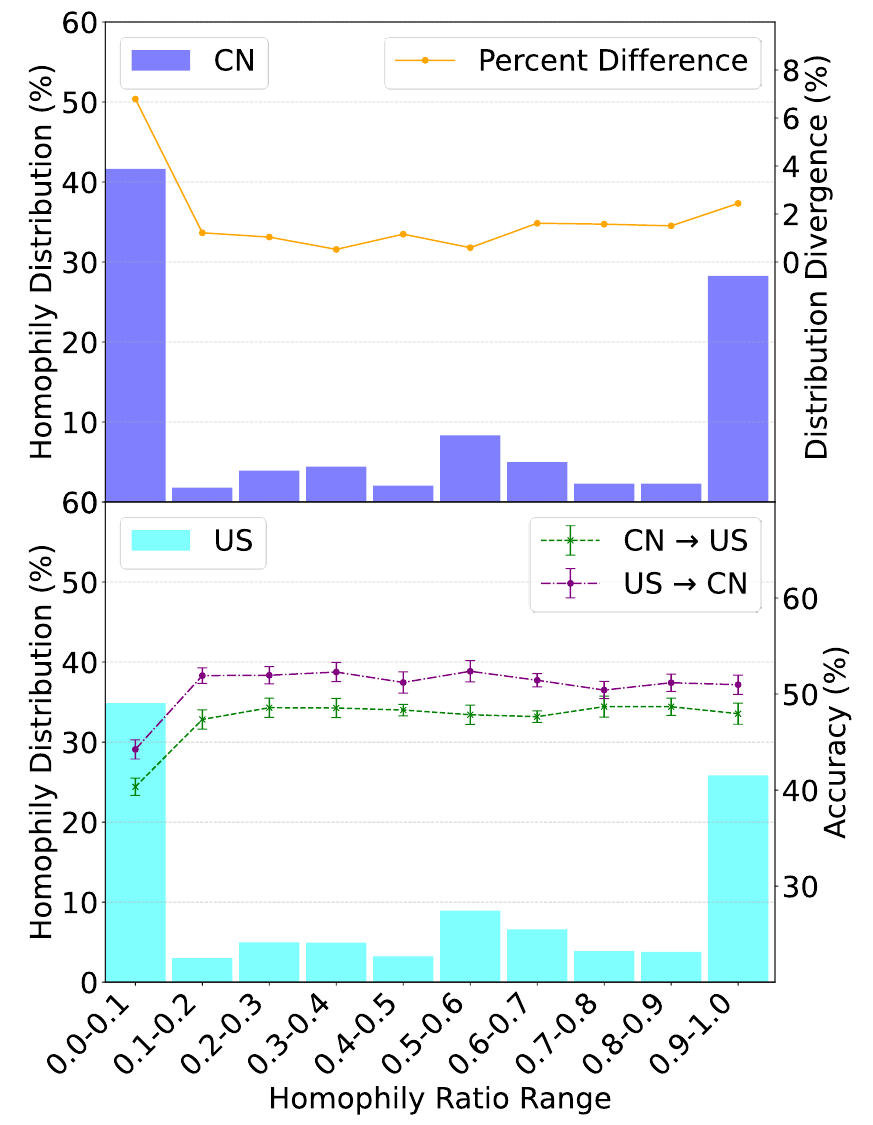}
    }
     \hspace{+1mm}
    \subfigure[CN $\leftrightarrow$ JP]{
    \includegraphics[width=37mm]{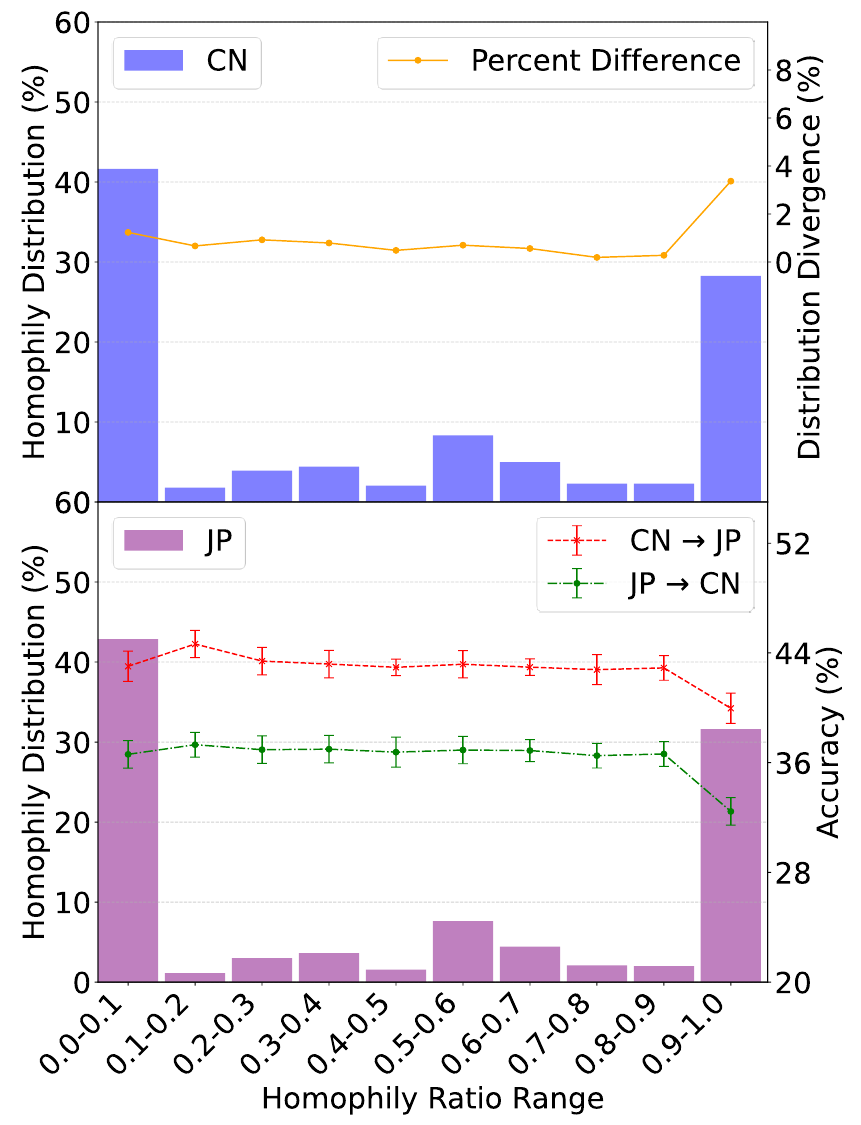}
    }
    \hspace{+1mm}
    \subfigure[CN $\leftrightarrow$ RU]{
    \includegraphics[width=37mm]{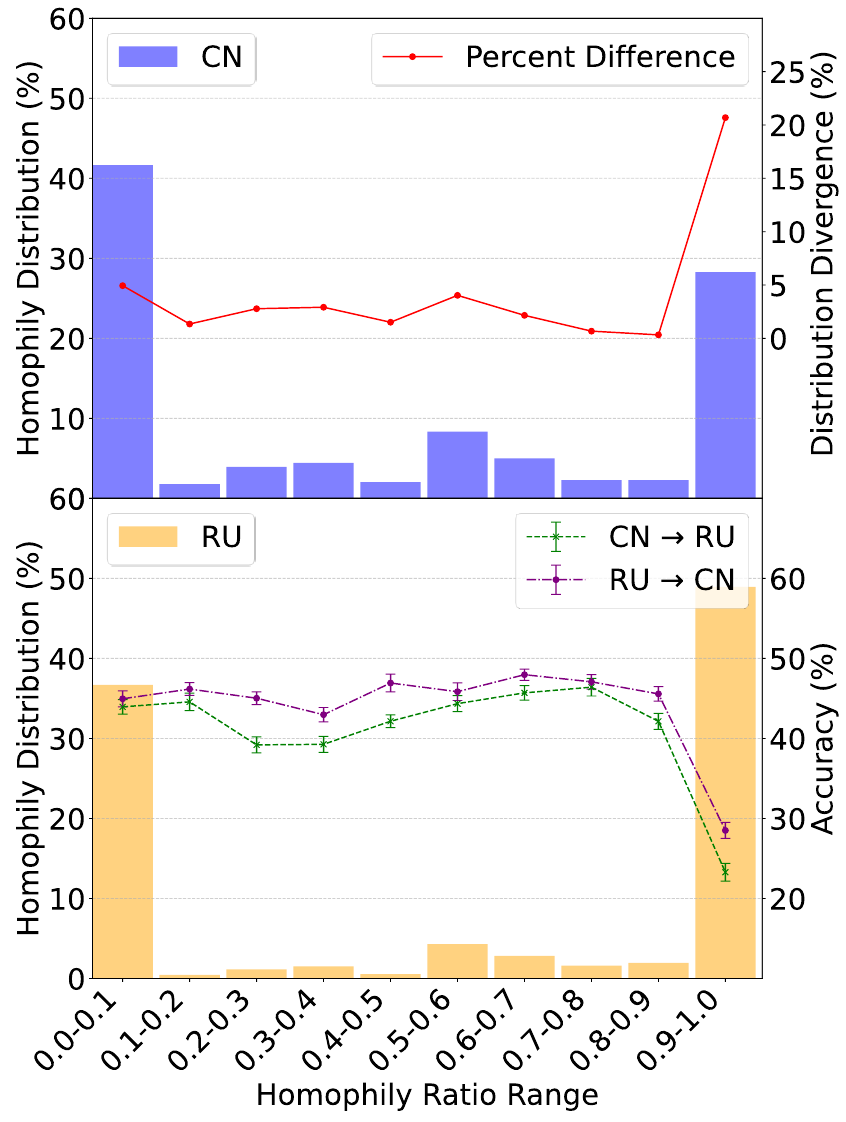}
    }
    \hspace{+1mm}
    \subfigure[US $\leftrightarrow$ FR]{
    \includegraphics[width=37mm]{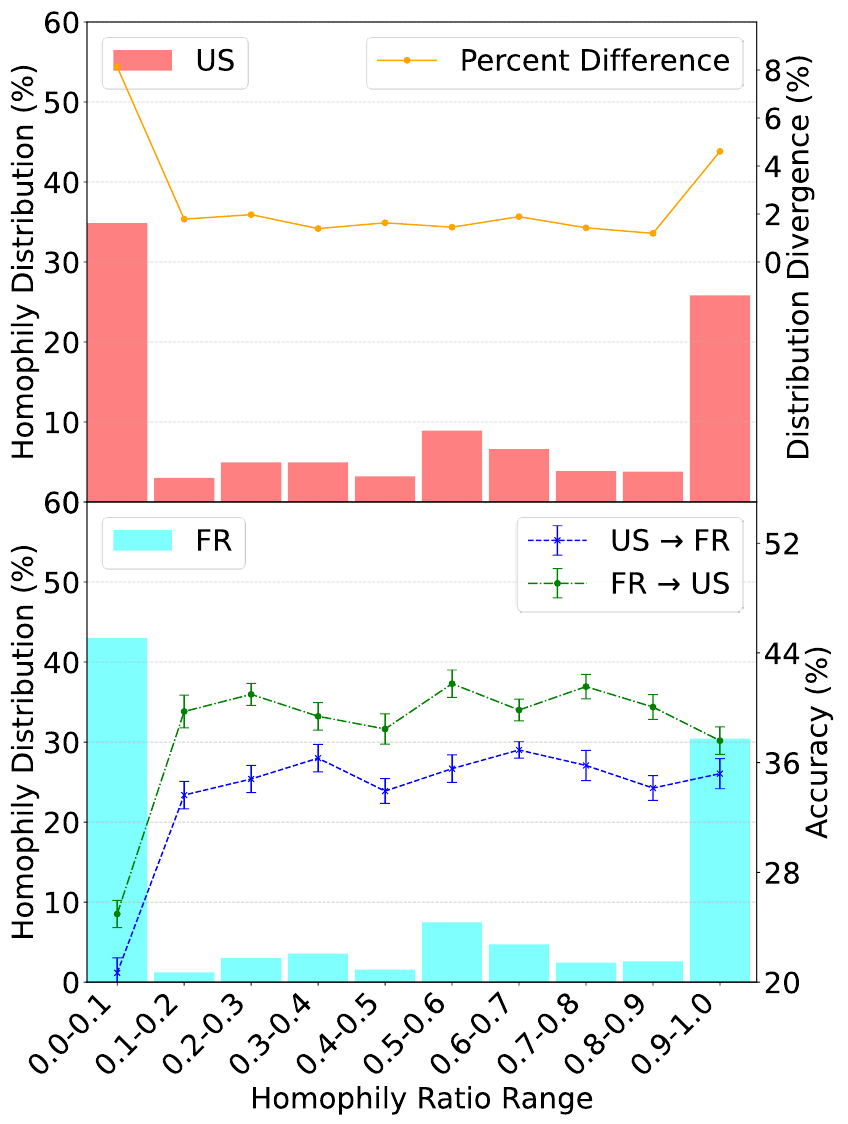}
    }
    \subfigure[US $\leftrightarrow$ DE ]{
    \includegraphics[width=37mm]{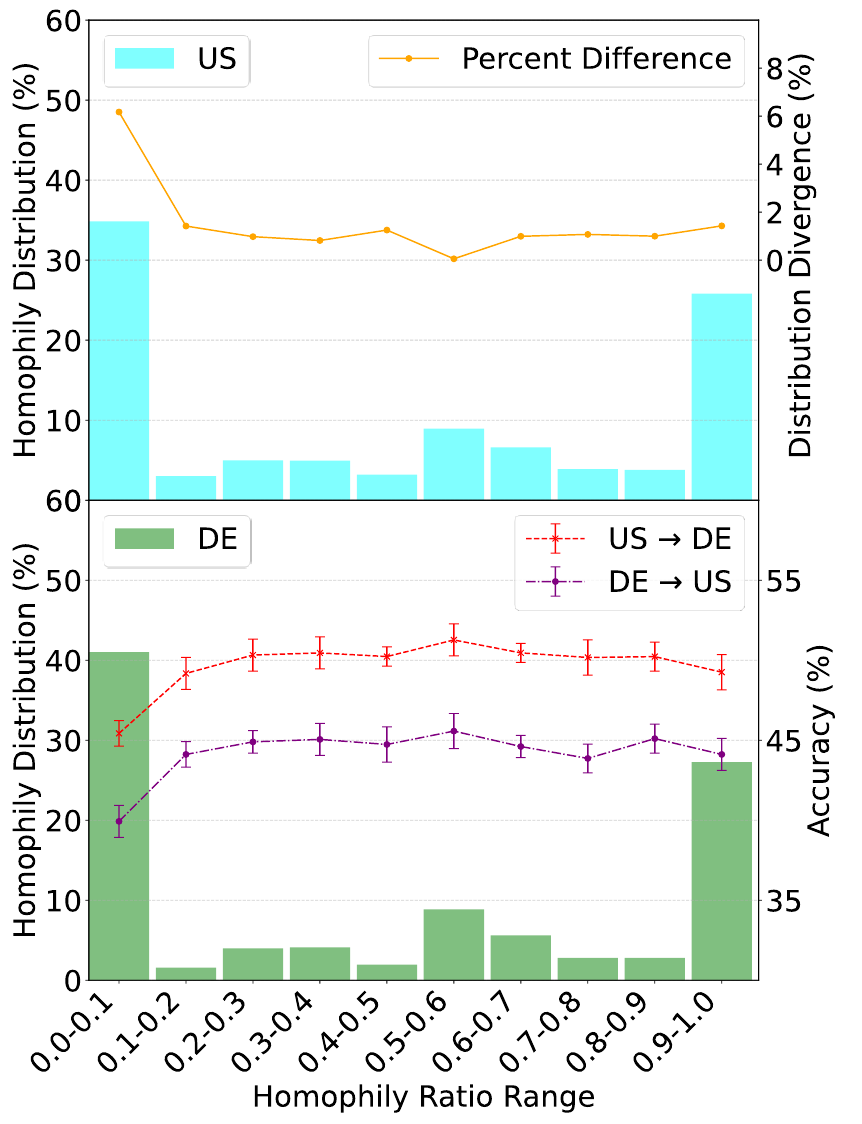}
    }
    \hspace{+1mm}
    \subfigure[US $\leftrightarrow$ RU ]{
    \includegraphics[width=37mm]{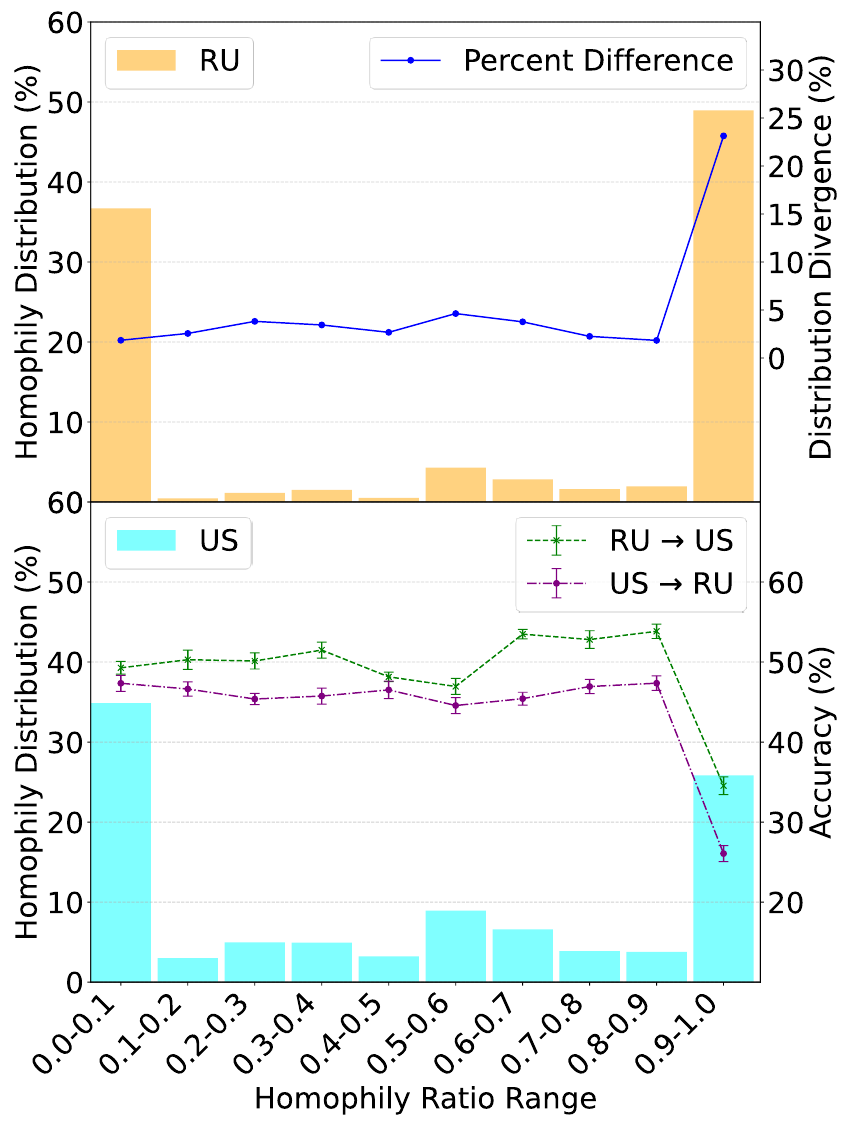}
    }
     \hspace{+1mm}
    \subfigure[US $\leftrightarrow$ JP]{
    \includegraphics[width=37mm]{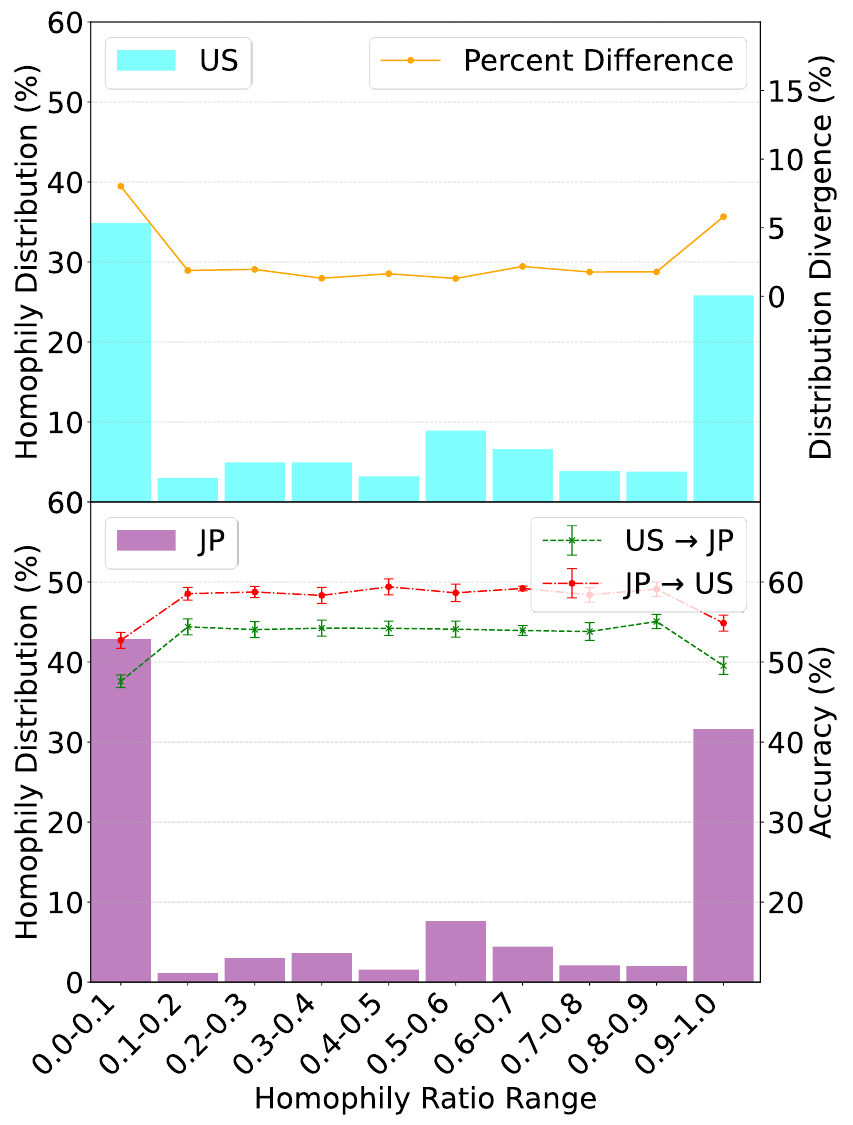}
    }
    \subfigure[DE $\leftrightarrow$ EN ]{
    \includegraphics[width=37mm]{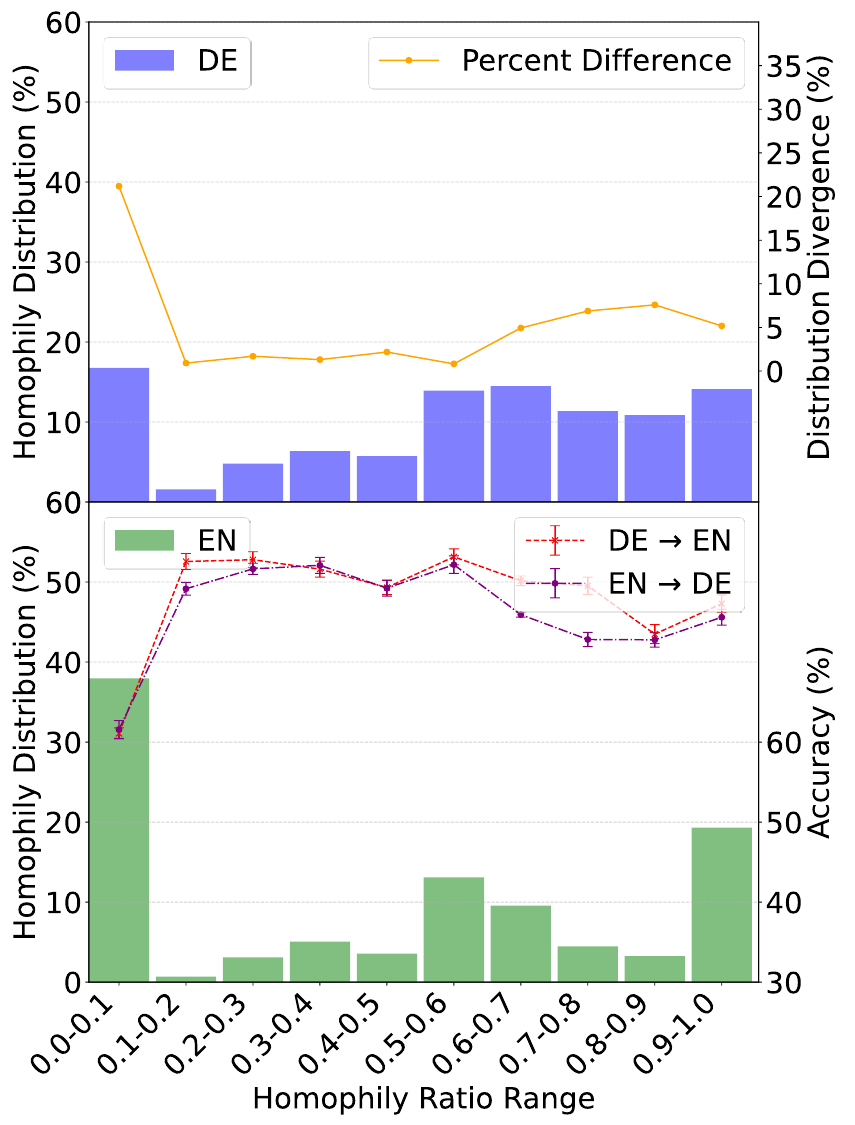}
    }
    \hspace{+1mm}
    \subfigure[DE $\leftrightarrow$ FR ]{
    \includegraphics[width=37mm]{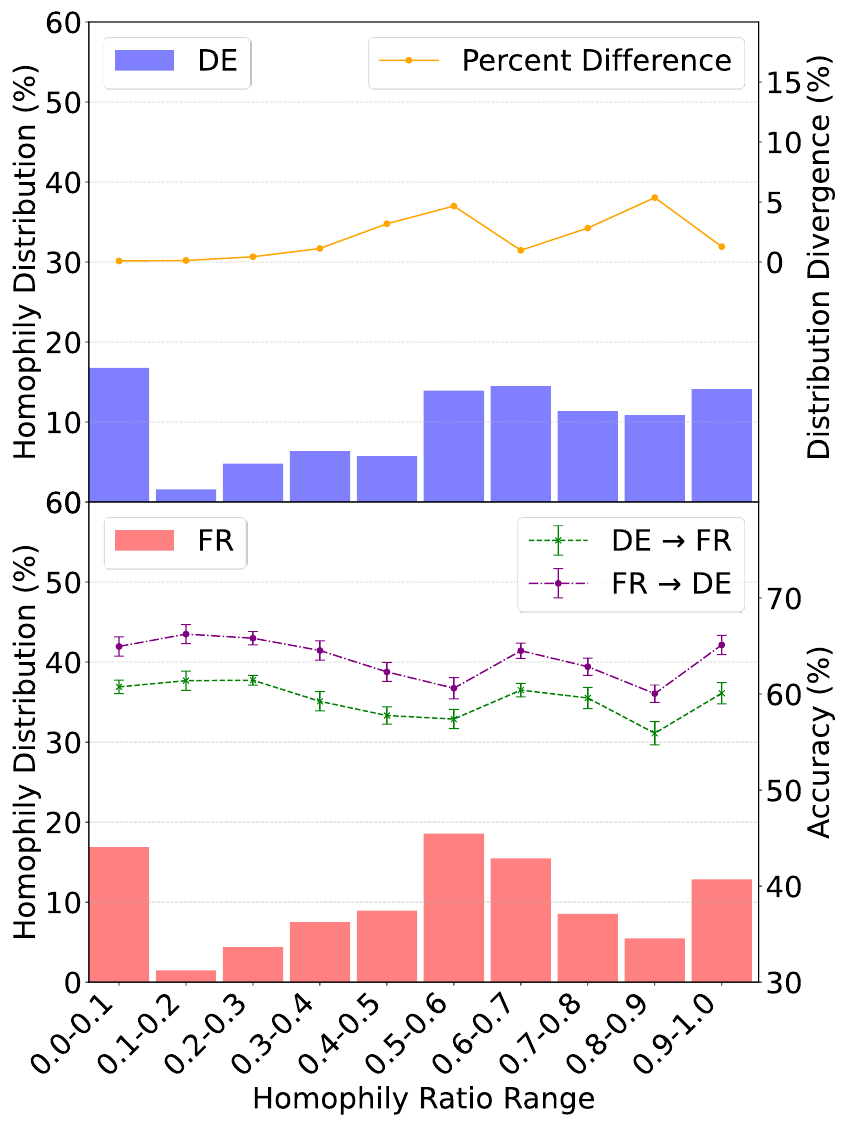}
    }
     \hspace{+1mm}
    \subfigure[EN $\leftrightarrow$ FR]{
    \includegraphics[width=37mm]{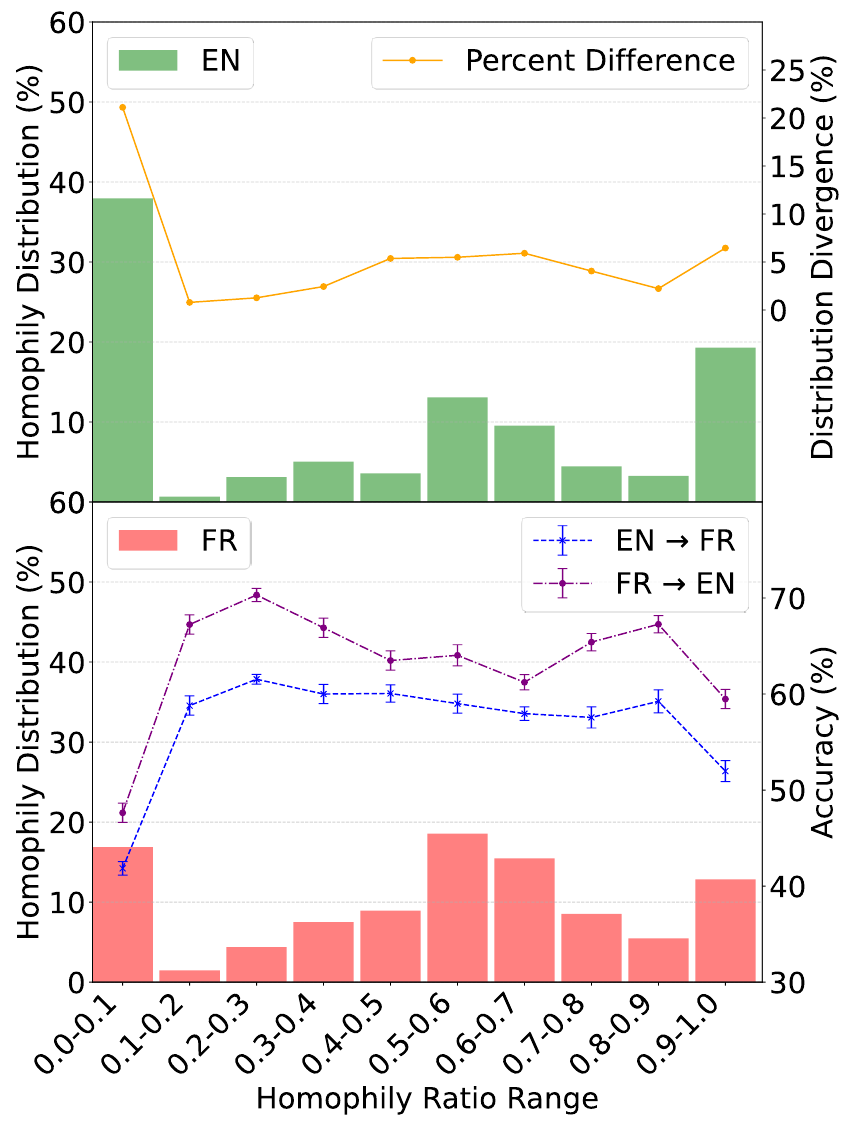}
    }
    \hspace{+1mm}
    \subfigure[B1 $\leftrightarrow$ B2]{
    \includegraphics[width=37mm]{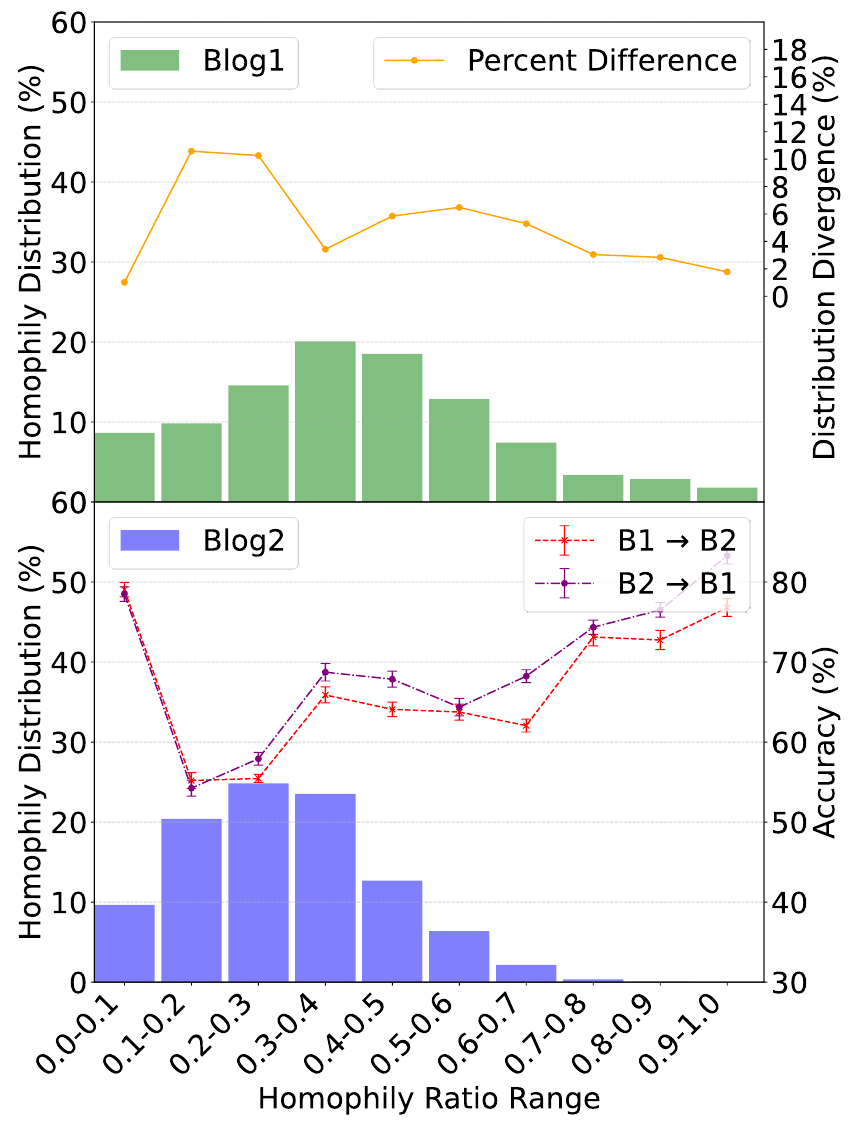}
    }
    \vspace{-2.mm}
     \caption{GCN performance on cross-network classification tasks across different homophily ratio ranges for the MAG, Twitch, and Blog datasets.}
  \label{fig: homo-OTHER performance}
\end{figure*}

\section{Empirical Study: Homophily Divergence Effect to Target classification Performance}
In this appendix, we extend our analysis by presenting the homophily distributions and performance results for additional datasets, including Blog, Citation, Twitch, and MAG. As shown in Figure \ref{fig: datsets in MAG, Citation, Blog and Twitch}, these datasets exhibit clear shifts in homophily distributions between the source and target graphs, further underlining the necessity of our proposed GDA method.
The experiments focus on assessing the impact of homophily divergence on target node classification performance. Using a standard two-layer GCN with unsupervised GDA settings, we evaluated classification accuracy across varying homophily ratio subgroups. The following detail observations are drawn from the analysis:

\textbf{Homophily Distribution Differences:} Similar to the Airport dataset discussed in the main text, the Blog, Citation, Twitch, and MAG datasets show significant divergence in homophily ratios between the source and target graphs. These shifts result in node subgroups with varying levels of homophilic and heterophilic groups, as illustrated in Figure  \ref{fig: datsets in MAG, Citation, Blog and Twitch}.

\textbf{Impact on Classification Accuracy:} we use a two-layer GCN with standard unsupervised GDA settings as our evaluation model. After completing model training, we report the classification accuracy for different homophily ratio subgroups. In the figure, the x-axis represents different node subgroups, where each subgroup consists of nodes within specified homophily ratios range. The left y-axis denotes the proportion of nodes in the entire graph, while the upper-right y-axis shows the percentage difference in homophily ratios between the source and target graphs for each subgroup. The lower-right y-axis indicates the GDA node classification accuracy on the target graph for each subgroup. As shown in Figure \ref{fig: homo-OTHER performance}, classification accuracy on the target graph depends on the proportion difference of nodes belonging to subgroups with varying homophily ratios. The figure illustrates that subgroups with lower homophily ratio differences achieve better classification performance, while those that deviate more from the source graph tend to exhibit lower accuracy. Experimental results reveal a negative correlation between homophily divergence and GDA performance. More significant disparities in homophily ratios between the source and target graphs lead to lower classification accuracy on the target graph. This trend underscores the importance of GDA in accounting for homophily discrepancies across domains.

\label{Homophily Divergence Effect to Target classification Performance}

\section{Model Efficient Experiment}
\textbf{Model Complexity}: Here, we analyze the computational complexity of the proposed HGDA. The computational complexity primarily depends on the three filter layers. For a given graph $ G$, let $ N$ represent the total number of nodes in the graph and $ d$ the feature dimension. 
For the three filters:
Homophilic Filter $ H_L$: Includes matrix multiplication with $ W_L^{l-1}$ ($ O(N^2\cdot d)$) . Heterophilic Filter $ H_H$: $ O(N^2\cdot d)$. Full-pass Filter $ H_F$: $ O(N^2\cdot d)$. Thus, the total complexity of HGDA is: $O( N^2 \cdot d)$. Since HGDA does not introduce complex modules, its model complexity remains low, comparable to that of MLP or GCN.

\textbf{Model Efficient Experiment}: 
To further evaluate the efficiency of HGDA, Table \ref{tab:time datasets} presents a running time comparison across various algorithms. We also compared the GPU memory usage of common baselines, including UDAGCN and the recent state-of-the-art methods JHGDA, PA, and SpecReg, which align graph domain discrepancies through different ways. As shown in Table, the evaluation results on airport and citation dataset further demonstrate that our method achieves superior performance with tolerable computational and storage overhead.

\begin{table*}[!t]
\centering
\small
\scalebox{0.8}{\begin{tabular}{|c|c|c|c|c|c|}
\toprule[0.8pt]
Dataset   &Method  &Training Time (Normalized w.r.t. UDAGCN)   &Memory Usage (Normalized w.r.t. UDAGCN)    & Accuracy($\%$)         \\ 
\midrule[0.8pt]      
\multirow{6}{*}{U$\rightarrow$B}  & UDAGCN      & 1 & 1   & 0.607 \\
                          & JHGDA     & 1.314 & 1.414    & \underline{0.695} \\
                          & PA      & 0.498 & {0.517}   & 0.481 \\
                          & SpecReg & \underline{0.463}   & \underline{0.493} & 0.621 \\
                          &$HGDA$ & \bf{0.514} & \bf0.314    & \bf{0.721} \\
\midrule[0.8pt]

\multirow{6}{*}{U$\rightarrow$E}  & UDAGCN      & 1 & 1   & 0.472 \\
                          & JHGDA     & 1.423 & 1.513    & 0.519 \\
                          & PA      & \underline{0.511} & 0.509   & \underline{0.547} \\
                          &SpecReg & \bf0.517 & \underline{0.503}   & 0.487 \\
                          & $HGDA$ & \bf0.307 & \b{0.310}   & \bf0.572 \\
\midrule[0.8pt]
\multirow{6}{*}{B$\rightarrow$E}  & UDAGCN      & 1 & 1   & 0.497 \\
                          & JHGDA     & 1.311 & 1.501    & \underline{0.569} \\
                          & PA      & 0.502 & \underline{0.497}   & 0.516 \\
                          &SpecReg & \underline{0.407} & 0.503   & {0.546} \\
                          & $HGDA$ & \bf{0.309} & \bf{0.313}   & \bf{0.584} \\
\midrule[0.8pt]
\multirow{6}{*}{A$\rightarrow$D}  & UDAGCN      & 1 & 1   & 0.510 \\
                          & JHGDA     & 1.311 & 1.501    & 0.569 \\
                          & PA      & 0.502 & \underline{0.497}   & 0.562 \\
                          &SpecReg & \underline{0.407} & 0.503   & \underline{0.536} \\
                          & $HGDA$ & \bf{0.309} & \bf{0.313}   & \bf{0.791} \\
\midrule[0.8pt]
\multirow{6}{*}{A$\rightarrow$C}  & UDAGCN      & 1 & 1   & 0.510 \\
                          & JHGDA     & 1.311 & 1.501    & 0.569 \\
                          & PA      & 0.502 & \underline{0.497}   & 0.562 \\
                          &SpecReg & \underline{0.407} & 0.503   & \underline{0.536} \\
                          & $HGDA$ & \bf{0.309} & \bf{0.313}   & \bf{0.326} \\
\midrule[0.8pt]
\multirow{6}{*}{C$\rightarrow$D}  & UDAGCN      & 1 & 1   & 0.510 \\
                          & JHGDA     & 1.311 & 1.501    & 0.570 \\
                          & PA      & 0.502 & \underline{0.497}   & 0.562 \\
                          &SpecReg & \underline{0.407} & 0.503   & \underline{0.536} \\
                          & $HGDA$ & \bf{0.309} & \bf{0.313}   & \bf{0.326} \\
\midrule[0.8pt]

\end{tabular}}
\caption{Comparison of Training Time, Memory Usage, and Accuracy on Airport datset.}
\label{tab:time datasets}
\end{table*}

\label{Model efficient experiment}

\section{Parameter Analysis}

The effectiveness of Graph Domain Adaptation (GDA) is highly influenced by the homophily distribution differences between the source and target graphs. Homophily ratios, defined as the likelihood of connected nodes sharing similar attributes or labels, play a crucial role in determining how well embeddings align across domains. To address these variations, the hyperparameters $\alpha$ and $\beta$ are selected from the set $\{ 0.01, 0.05, 0.1, 0.5, 1 \}$, providing flexibility in balancing the contributions of source label supervision, domain alignment, and target graph adaptation.

\textbf{Airport Dataset:} The Airport dataset represents transportation networks, typically characterized by a smaller number of nodes with complex edge relationships and sparse connectivity. The homophily ratio in this dataset is relatively low, which highlights the importance of capturing key node alignment information and homophilic relationships to maintain meaningful structure. The sparsity of connections and low homophily ratios require higher $\alpha$ and $\beta$ values to emphasize alignment between source and target graphs. This ensures that even sparse but critical homophilic relationships are preserved. $\alpha$ and $\beta$ are chosen from $\{ 0.05, 0.1, 0.5 \}$ to prioritize node alignment and structural consistency.

\textbf{Citation Dataset (Citation and ACM Datasets):} The Citation dataset often exhibits diverse structural homophily patterns across different distributions, with varying levels of similarity between source and target graphs. These variations necessitate a careful balance between domain alignment and label information from the source graph. The diversity in homophily ratios between the source and target graphs requires moderate $\alpha$ and $\beta$ values to balance domain alignment loss ($\mathcal{L}_H$) and source classifier loss ($\mathcal{L}_S$). Misalignment in homophily patterns can negatively impact the knowledge transfer process if not adequately addressed. $\alpha$ and $\beta$ are selected from $\{ 0.1, 0.5 \}$, ensuring an effective trade-off between domain alignment and label supervision.

\textbf{Social Network Dataset (Blog and Twitch Datasets):} Social networks are characterized by a large number of nodes with rich attribute information and high variability in structural patterns. These datasets often feature distinct attribute distributions, making attribute alignment crucial for successful domain adaptation.Social networks typically exhibit higher variance in homophily ratios across source and target domains. This makes alignment particularly sensitive to changes in $\alpha$ and $\beta$. Lower values are recommended to prevent overfitting to either homophily and heterophilic patterns.$\alpha$ and $\beta$ are chosen from $\{ 0.05, 0.1 \}$ to focus on attribute alignment while maintaining flexibility for homophily shifts.

\textbf{MAG Dataset:} The MAG dataset is large and diverse, containing numerous classes with intricate relationships and rich metadata. Both homophily and heterophily alignment play critical roles in ensuring effective knowledge transfer across domains. Due to its diverse nature, the MAG dataset exhibits a mix of homophily and heterophily patterns. Alignment must consider both structural consistency and attribute-based adaptation to account for this diversity. Misalignment in either homophily or heterophily can lead to suboptimal performance. $\alpha$ and $\beta$ are selected from $\{ 0.1, 0.5 \}$ to balance the importance of both homophily and heterophily alignment, ensuring robust performance across diverse classes and relationships.

\begin{figure*}
    [!htbp]
    \centering
    \subfigure[A $\rightarrow$ C]{
    \includegraphics[width=37mm]{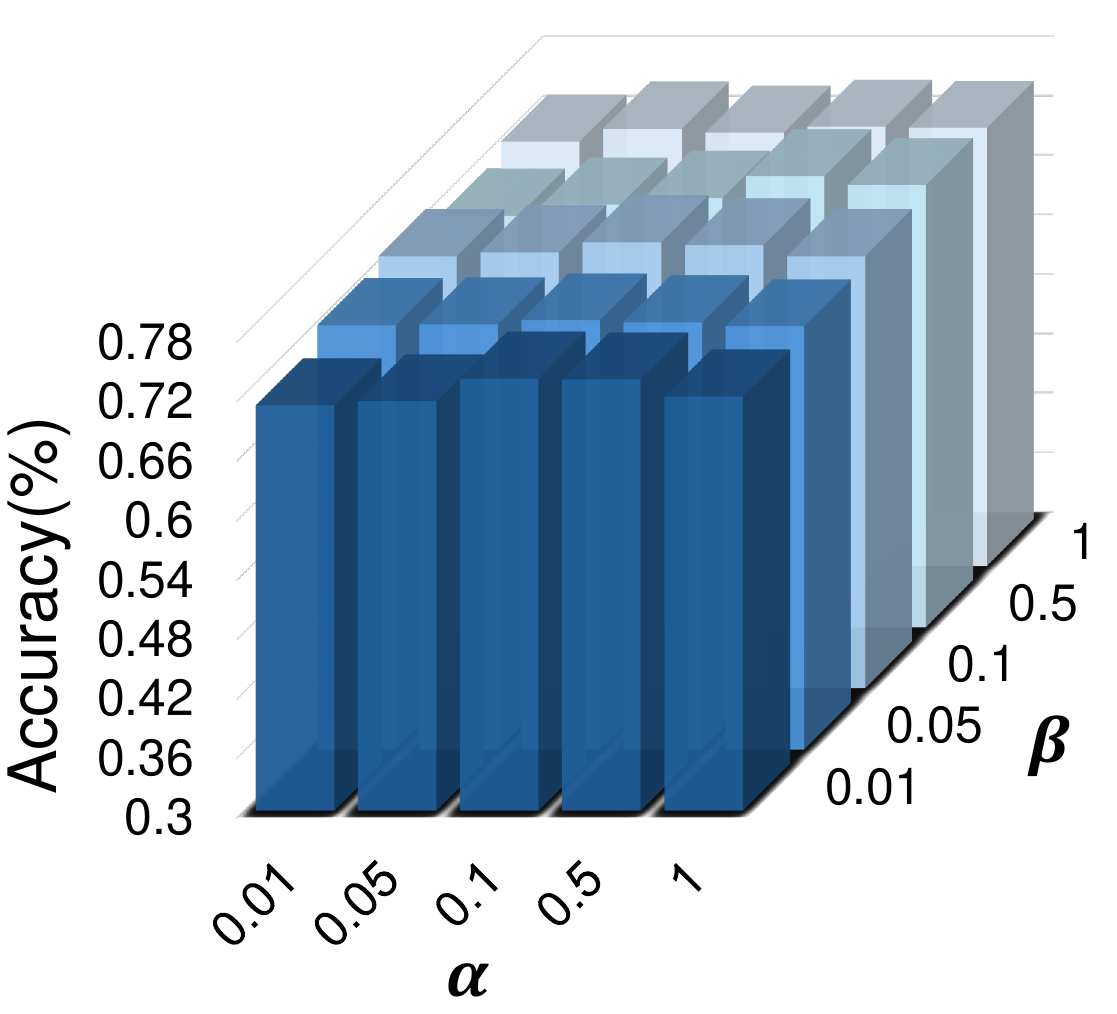}
    }
    \hspace{+1mm}
    \subfigure[B $\rightarrow$ E]{
    \includegraphics[width=37mm]{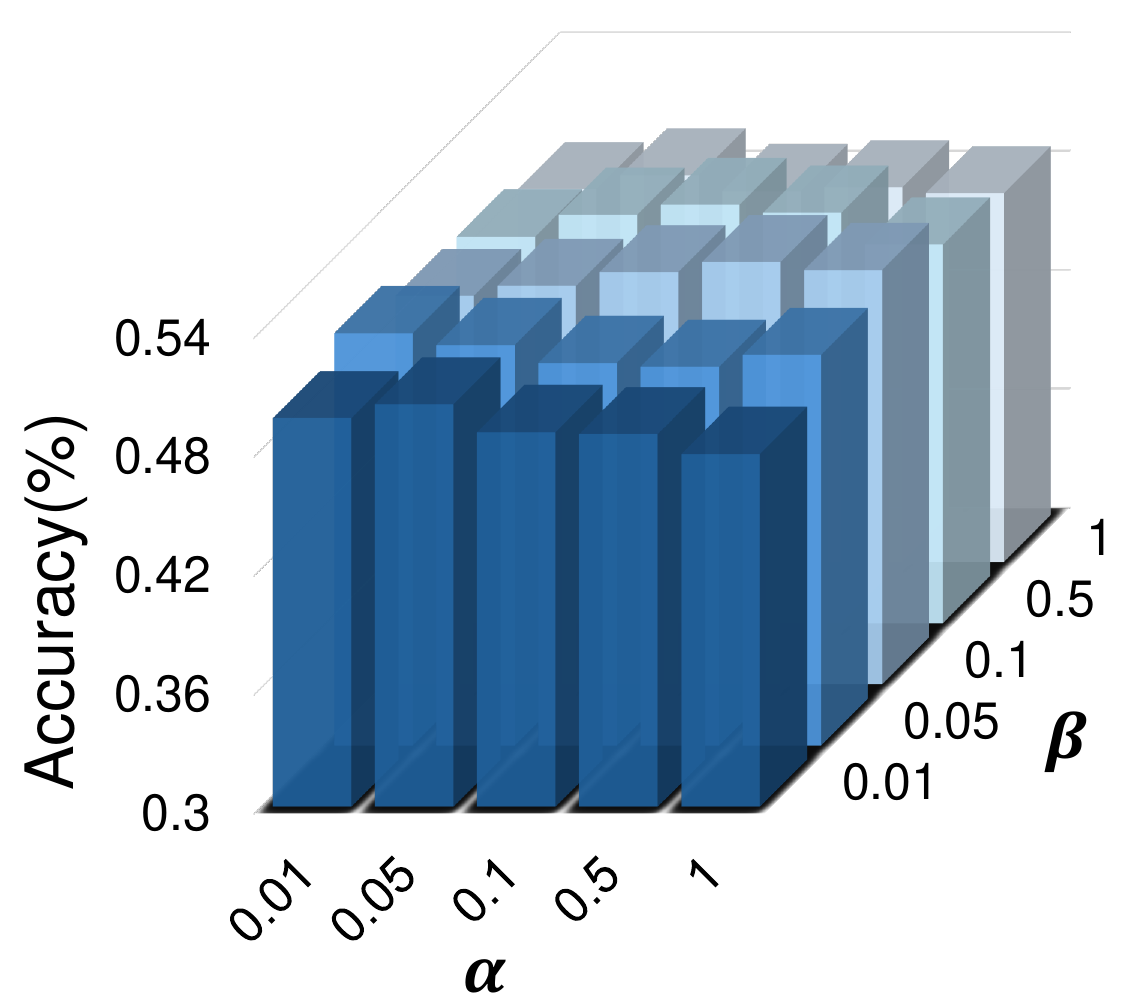}
    }
    \subfigure[B1 $\rightarrow$ B2]{
    \includegraphics[width=37mm]{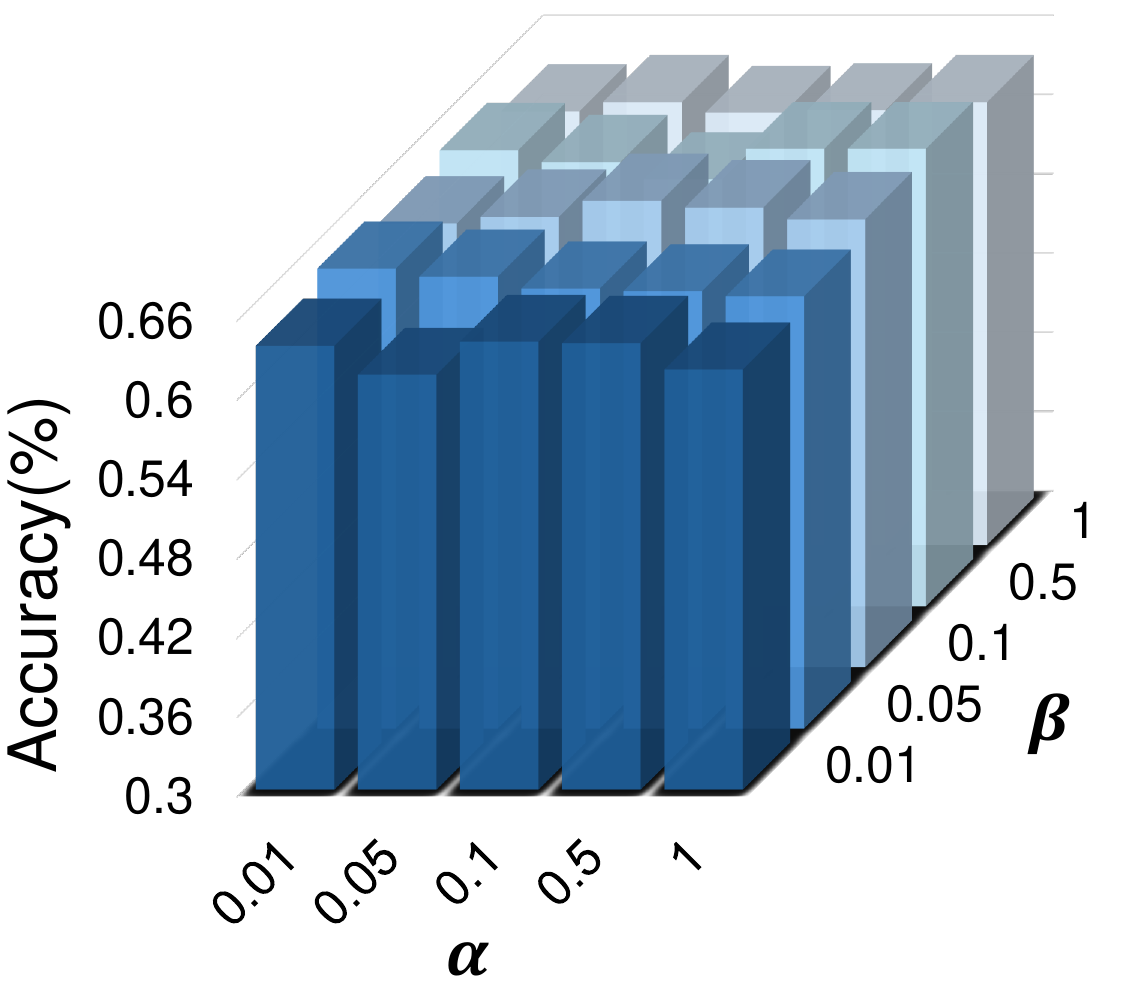}
    }
    \hspace{+1mm}
    \subfigure[C $\rightarrow$ D]{
    \includegraphics[width=37mm]{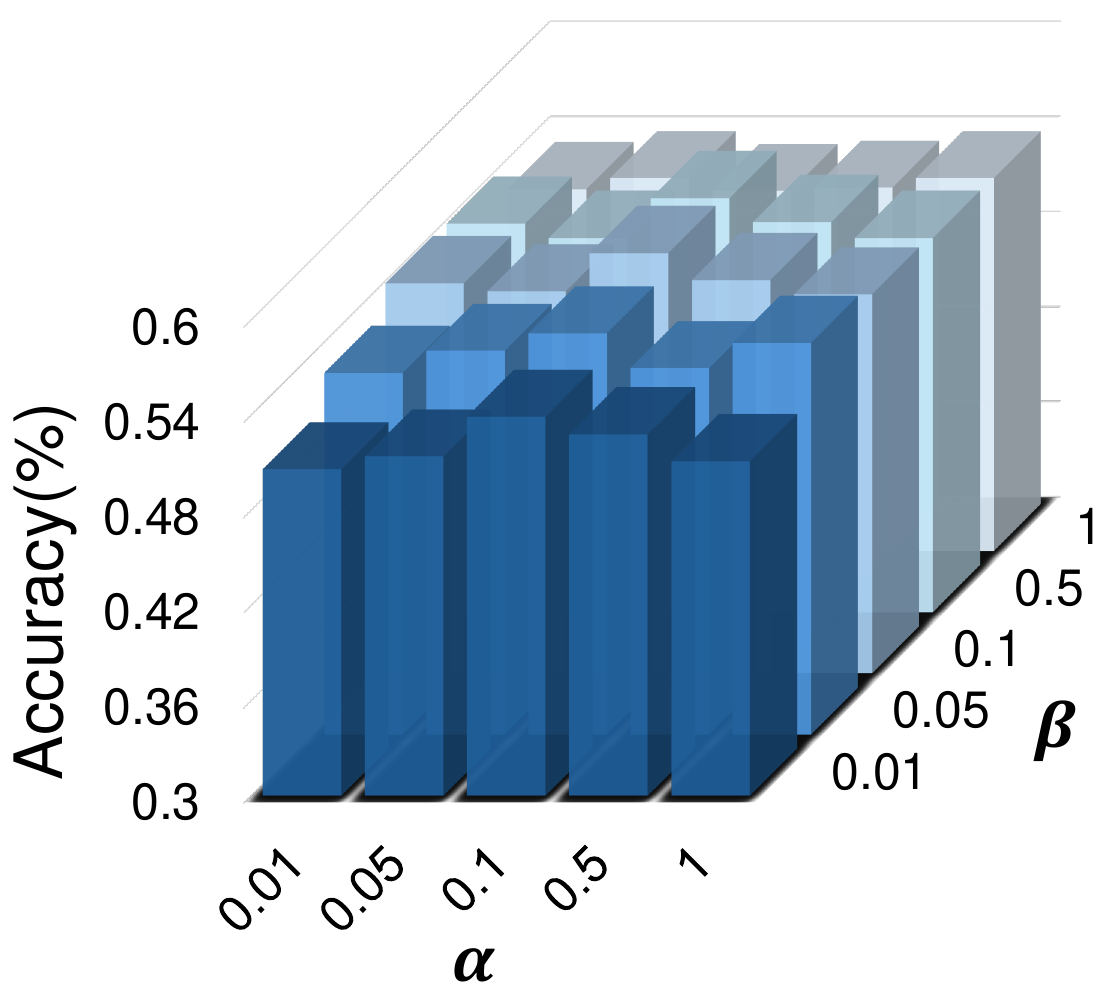}
    }
    \subfigure[CN $\rightarrow$ DE]{
    \includegraphics[width=37mm]{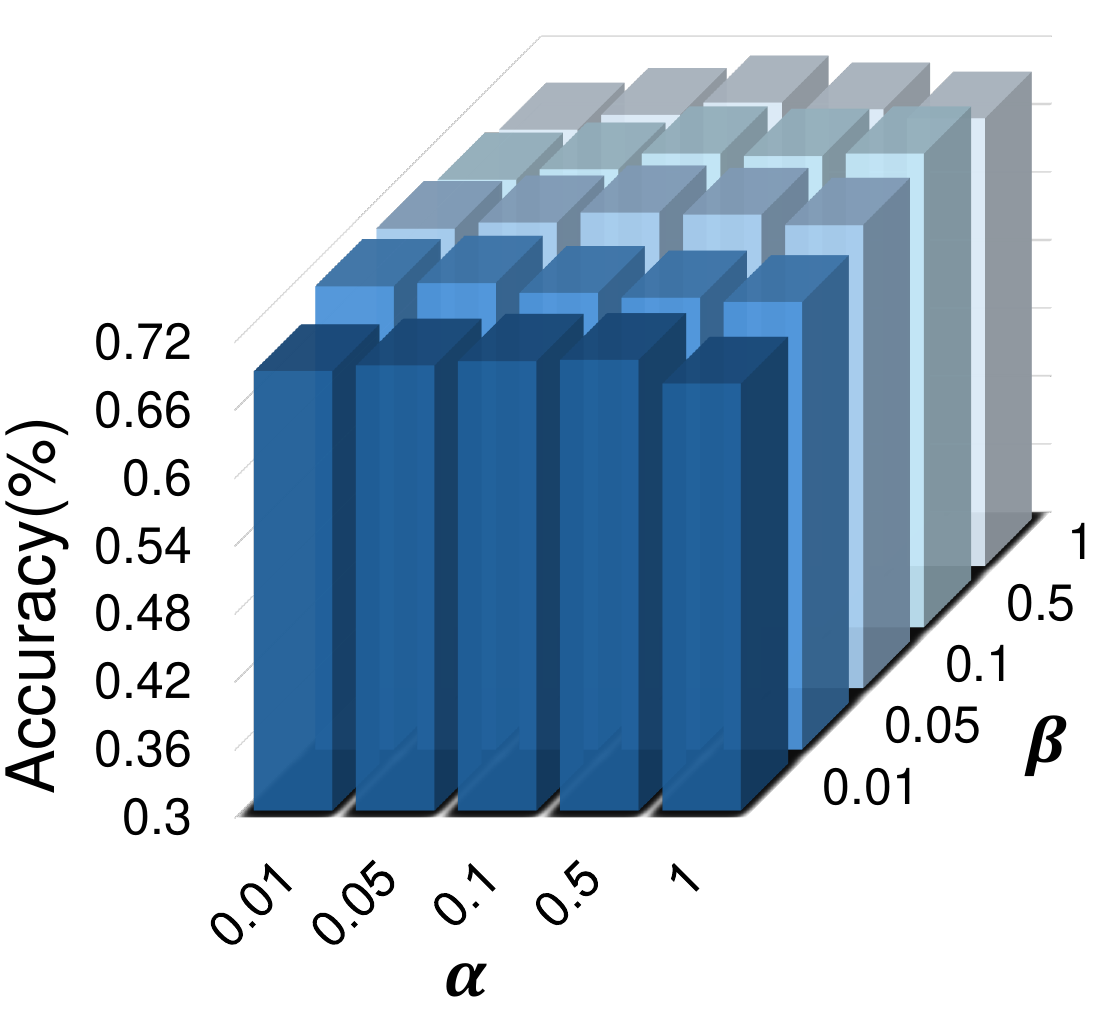}
    }
    \hspace{+1mm}
    \subfigure[CN $\rightarrow$ FR]{
    \includegraphics[width=37mm]{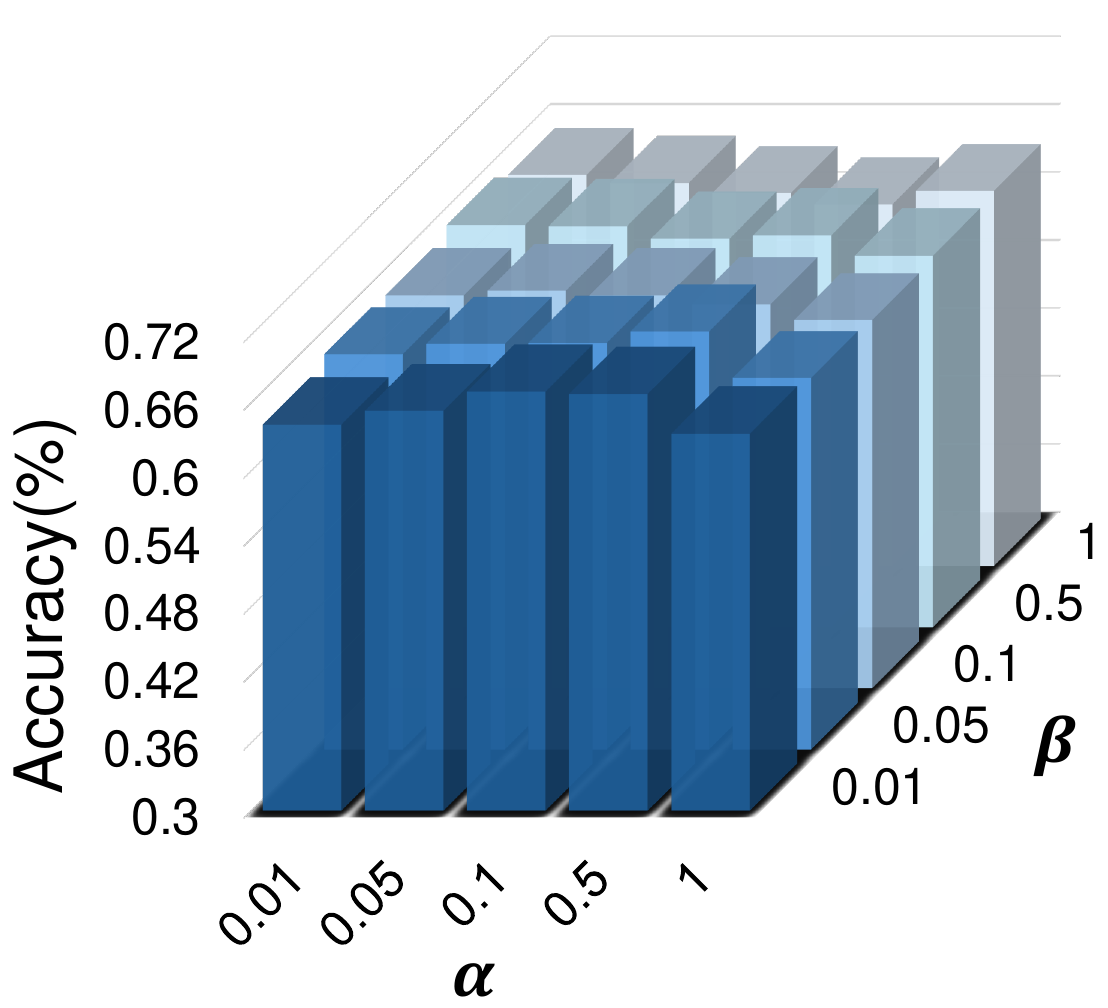}
    }
    \subfigure[CN $\rightarrow$ JP]{
    \includegraphics[width=37mm]{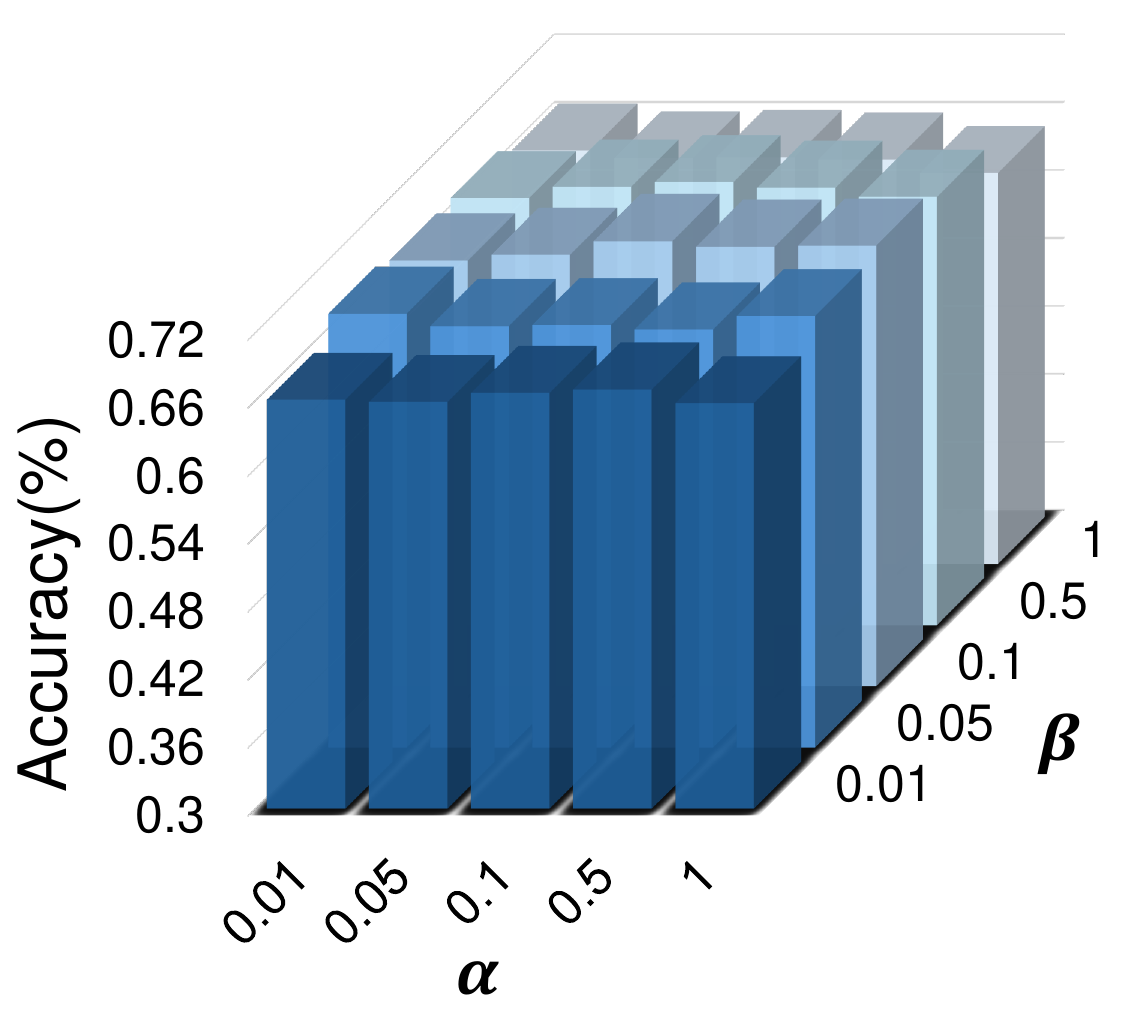}
    }
    \hspace{+1mm}
    \subfigure[CN $\rightarrow$ RU]{
    \includegraphics[width=37mm]{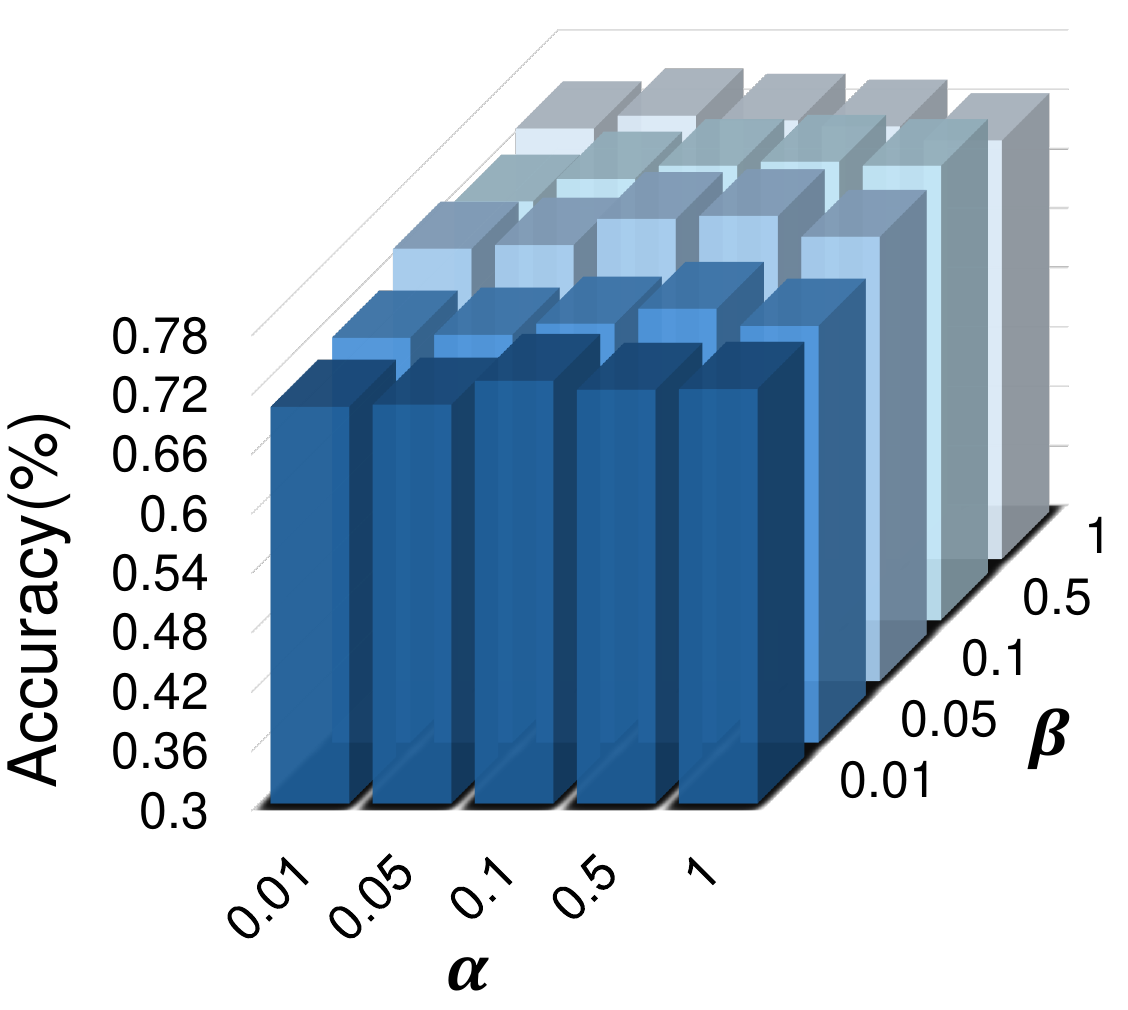}
    }
    \subfigure[DE $\rightarrow$ EN]{
    \includegraphics[width=37mm]{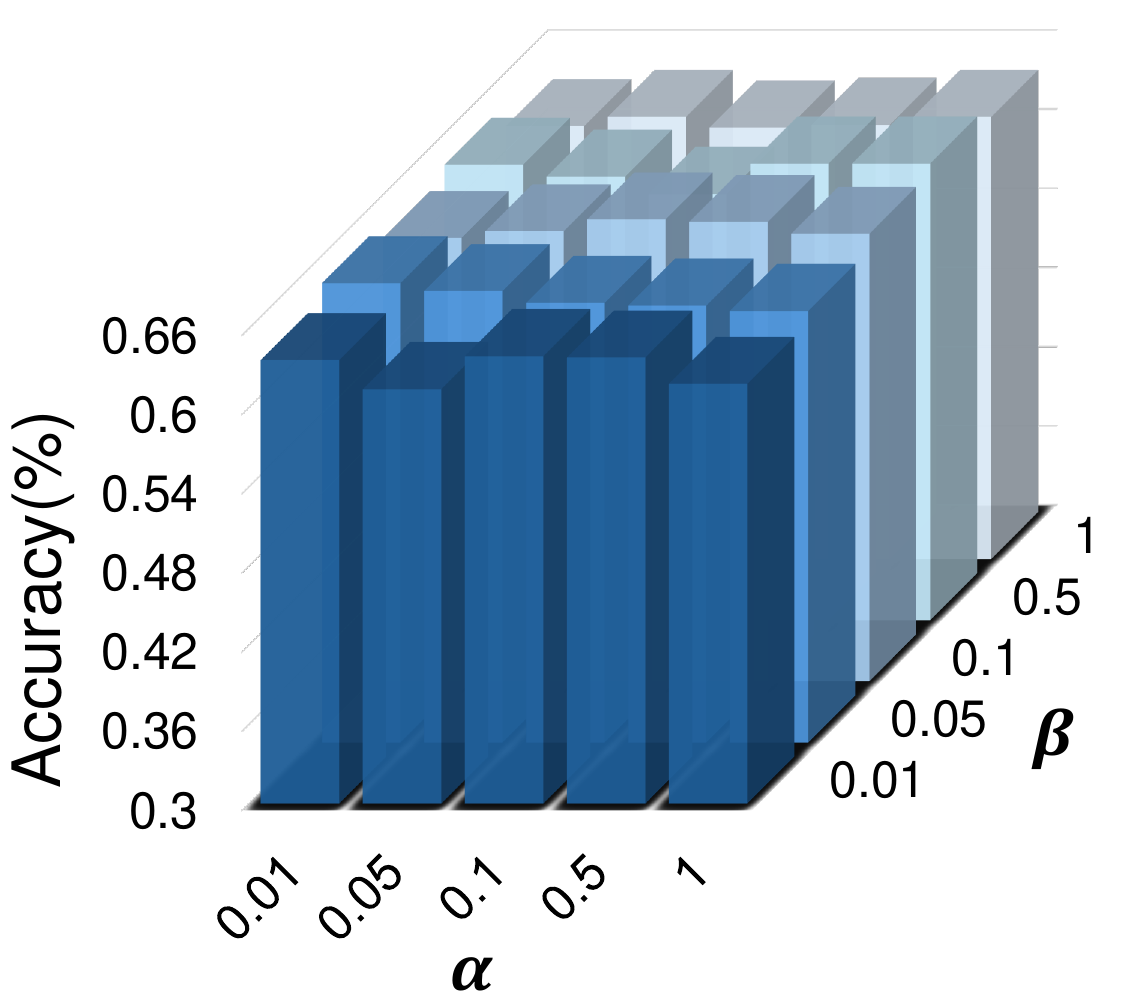}
    }
    \hspace{+1mm}
    \subfigure[DE $\rightarrow$ FR]{
    \includegraphics[width=37mm]{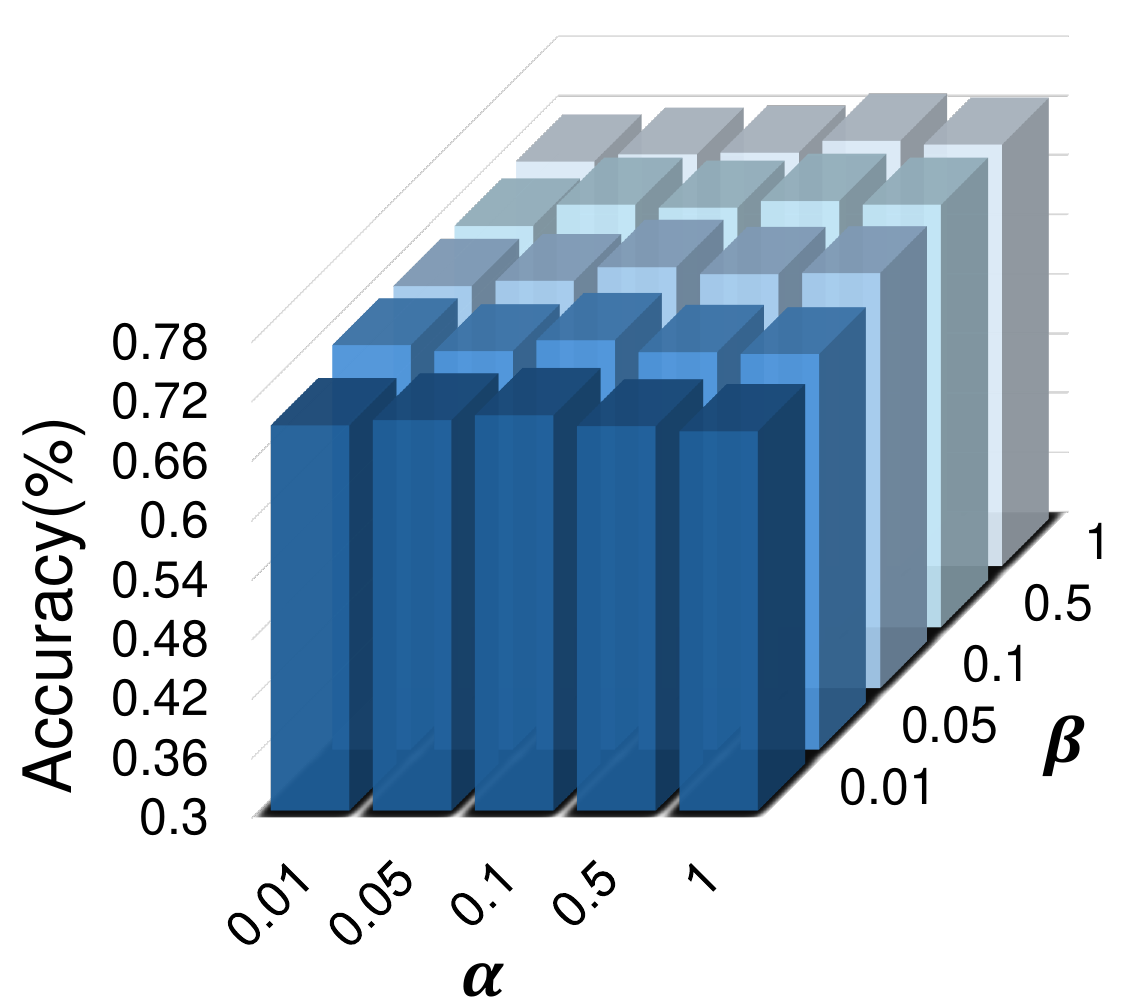}
    }
    \subfigure[FR $\rightarrow$ EN]{
    \includegraphics[width=37mm]{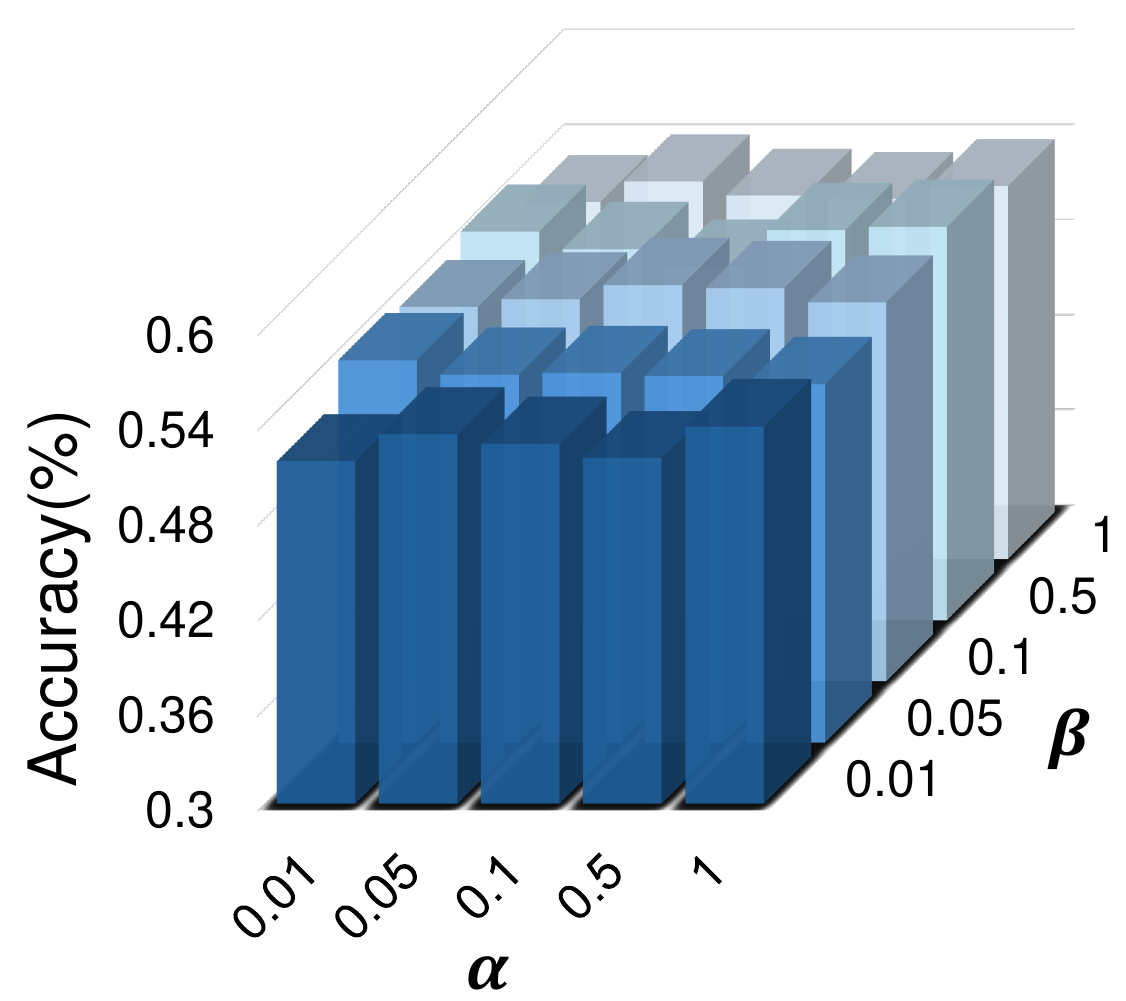}
    }
    \hspace{+1mm}
    \subfigure[U $\rightarrow$ E]{
    \includegraphics[width=37mm]{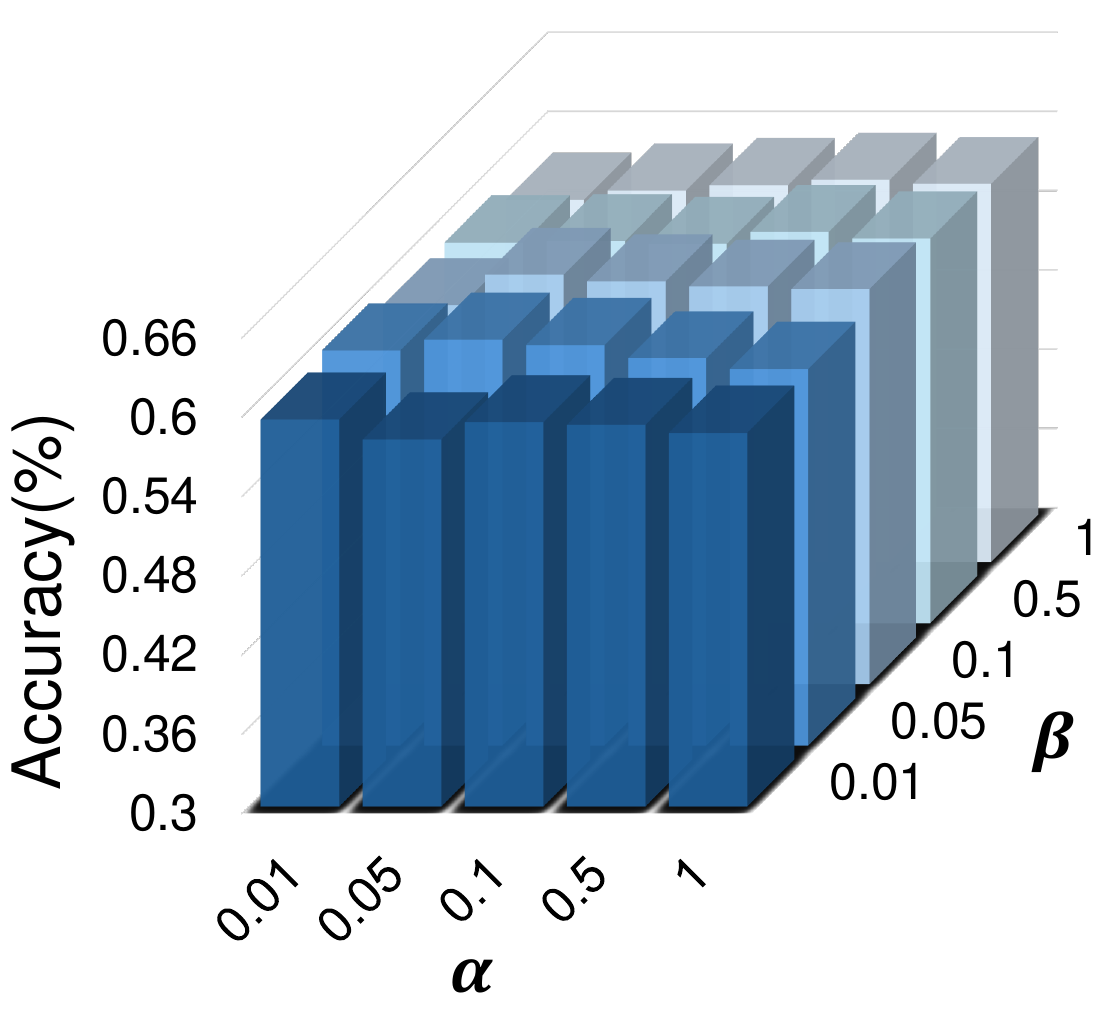}
    }
    \subfigure[US $\rightarrow$ DE]{
    \includegraphics[width=37mm]{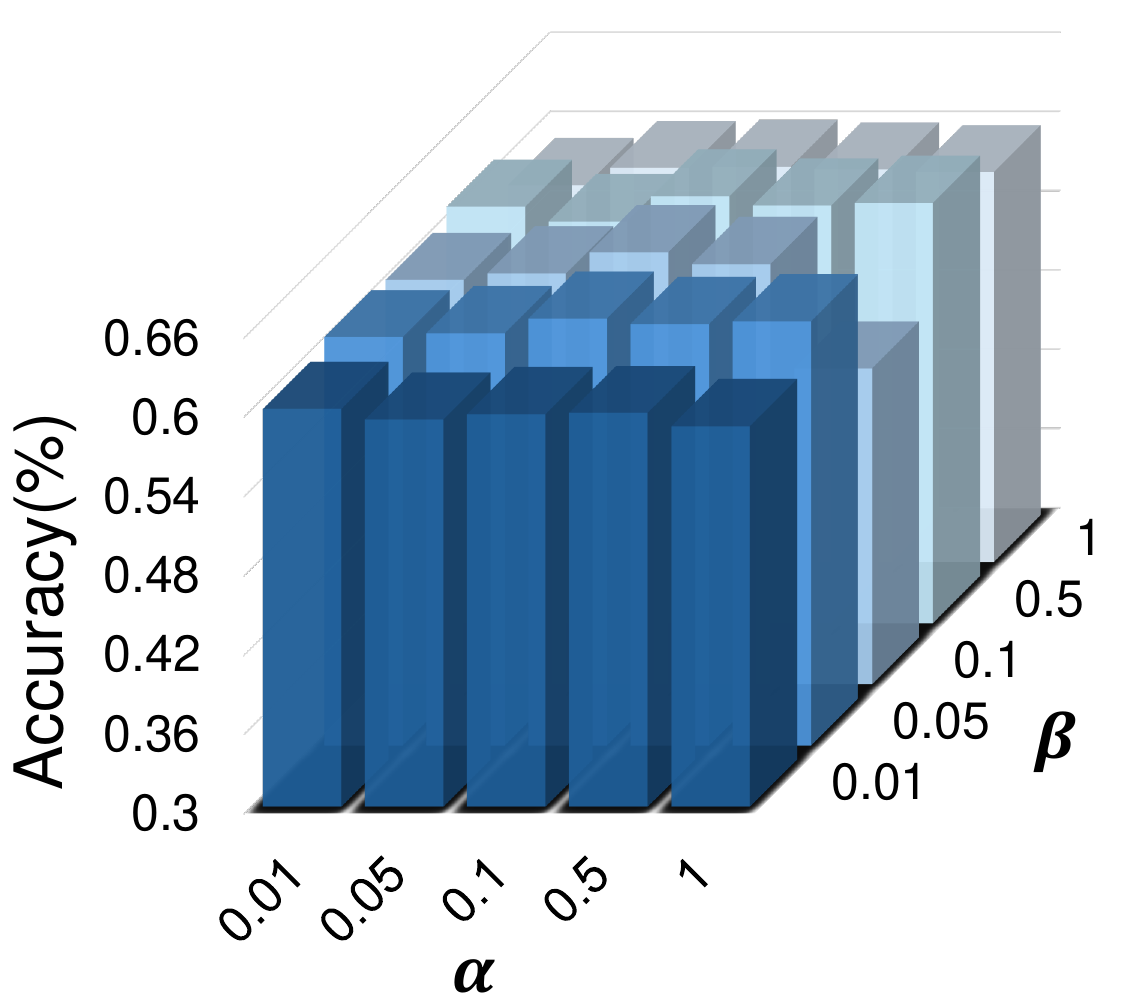}
    }
    \hspace{+1mm}
    \subfigure[US $\rightarrow$ FR]{
    \includegraphics[width=37mm]{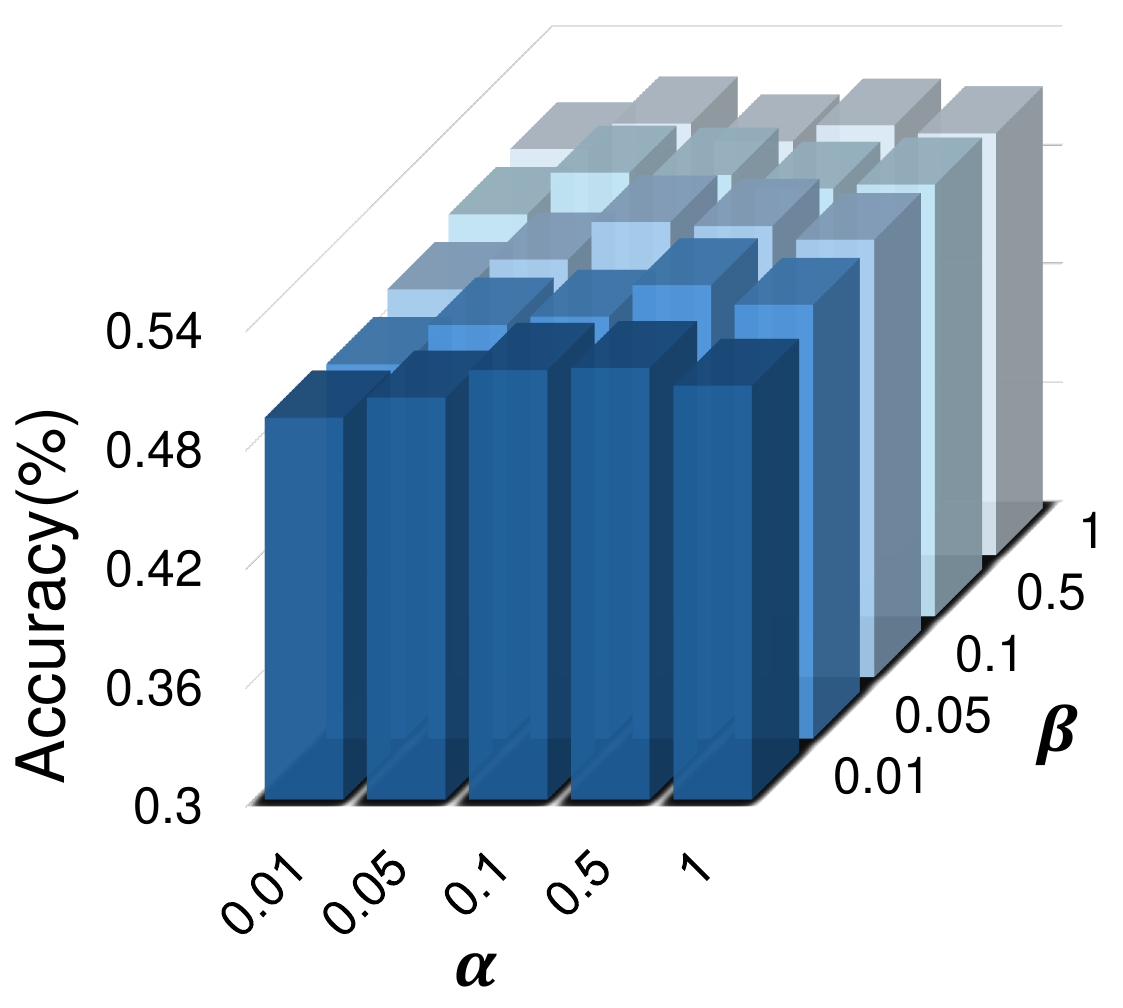}
    }
    \subfigure[US $\rightarrow$ JP]{
    \includegraphics[width=37mm]{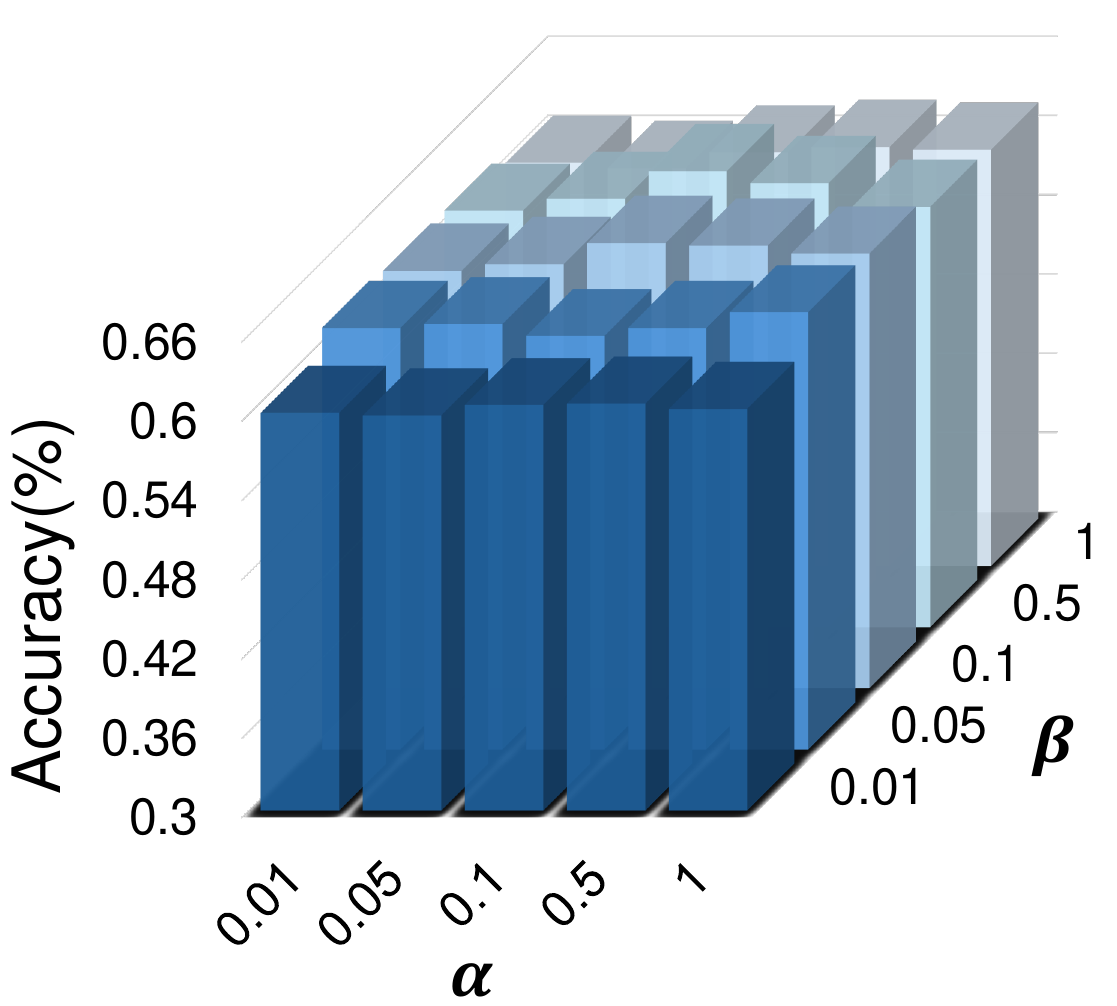}
    }
    \hspace{+1mm}
    \subfigure[US $\rightarrow$ RU]{
    \includegraphics[width=37mm]{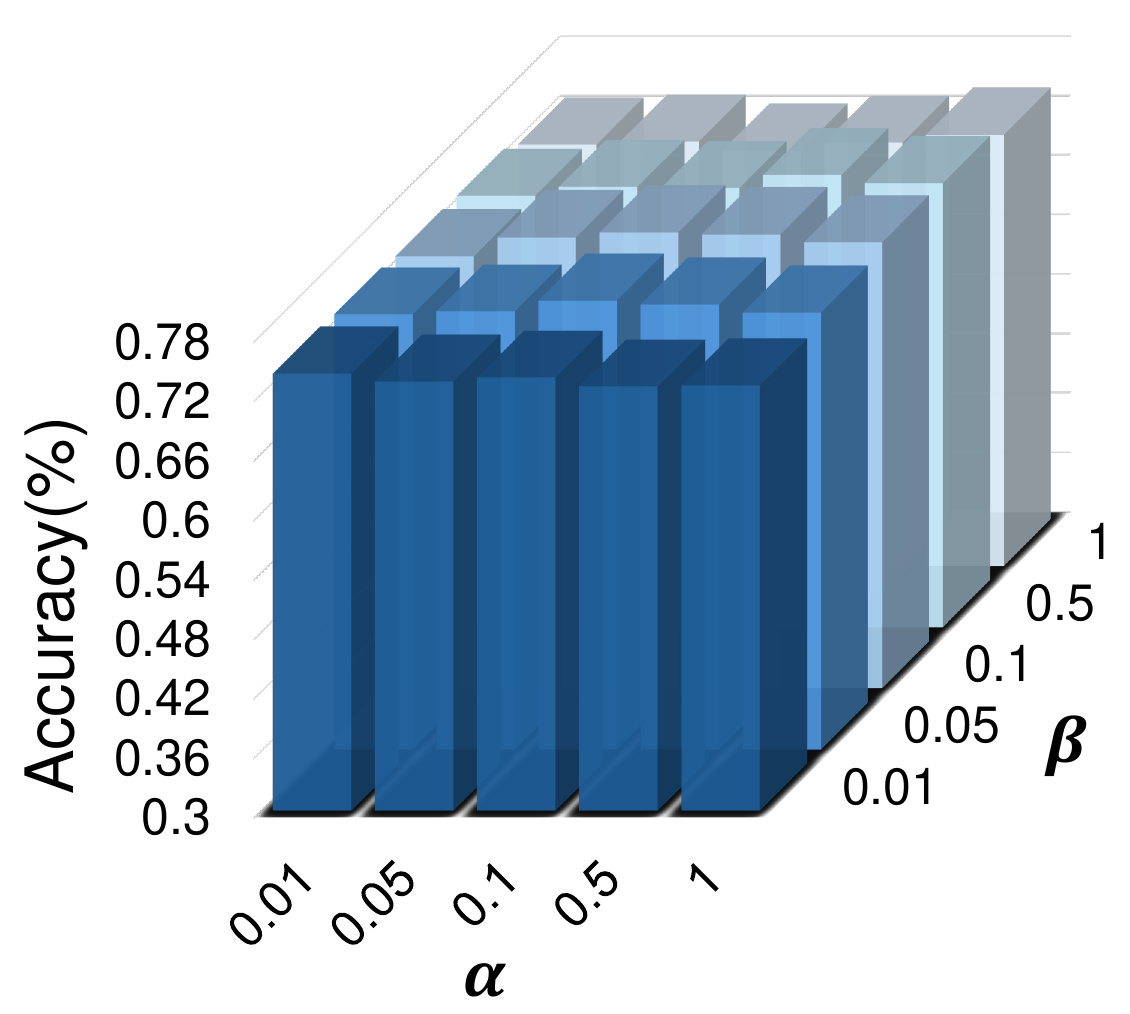}
    }
    \vspace{-2.mm}
     \caption{The influence of hyper-parameters $\alpha$ and $\beta$ on Citation, Airport, ACM and MAG datasets.}
  \label{fig: H-Parameter Analysis}
\end{figure*}

\label{Parameter Analysis}

\section{Description of Algorithm HGDA}

\begin{algorithm}
    \caption{The proposed algorithm HGDA}
    \label{algo: algo}
    \SetKwFunction{isOddNumber}{isOddNumber}
    \SetKwInput{Input}{Input}
    \SetKwInput{Output}{Output}

    \KwIn{Source node feature matrix $X^S$; source original graph adjacency matrix $A^S$; Target node feature matrix $X^T$; Target original graph adjacency matrix $A^T$
    source node label matrix $Y^S$; maximum number of iterations $\eta$}
    
    Compute the $\tilde{A}$ and $\tilde{L}$ according to $G^S$ and $G^T$ by running .
    
    \For{$it=1$ \KwTo $\eta$}{
    
    ${Z_L^S}$ = $H_L$($\tilde{A}^S,X^S$) 
    
    ${Z_F^S}$ = $H_F$($I^S,X^S$) 
    
    ${Z_H^S}$ = $H_H$($\tilde{L}^S,X^S$) 
    
    \tcp{embedding of source graph}

    ${Z_L^T}$ = $H_L$($\tilde{A}^T,X^T$) 
    
    ${Z_F^T}$ = $H_F$($I^T,X^T$) 
    
    ${Z_H^T}$ = $H_H$($\tilde{L}^T,X^T$) 
    
    \tcp{embedding of target graph}

    Homophily enhanced domain adaptive between ${Z_L^S}$ and ${Z_L^T}$, ${Z_F^S}$ and ${Z_F^T}$ and ${Z_H^S}$ and ${Z_H^T}$
    
    \tcp{adaptive in three views}
    
    Source graph classification $Z^S$ and $Y^S$
    
    Domain Adaptive Learning between $Z^S$ and $Z^T$

    Calculate the overall loss with Eq.(\ref{equ: all_loss})
    
    Update all parameters of the framework according to
the overall loss
    }
    
    Predict the labels of target graph nodes based on the trained framework.
    
    \KwOut{Classification result $\hat{Y}^T$}
\end{algorithm}

\label{Description of Algorithm HGDA}

\begin{table*}[!t]
\centering
\small
\scalebox{0.8}{\begin{tabular}{|c|c|c|c|c|c|}
\toprule[0.5pt]
Types                      & Datasets        &$\alpha$  & $\beta$            \\ 
\midrule[0.5pt]
\multirow{6}{*}{Airport}  & U$\rightarrow$B      & 0.5 & 0.5    \\
                          & U$\rightarrow$E     & 0.05 & 0.1   \\
                          & B$\rightarrow$U      & 0.1 & 0.1   \\
                          & B$\rightarrow$E      & 0.5 & 0.1  \\
                          & E$\rightarrow$U      & 0.5 & 0.5    \\
                          & E$\rightarrow$B      & 0.5 & 0.5    \\
\midrule[0.5pt]
\multirow{6}{*}{Citation} & A$\rightarrow$D      & 0.1 & 0.1    \\
                          & D$\rightarrow$A     & 0.1 & 0.1   \\
                          & A$\rightarrow$C      & 0.5 & 0.5    \\
                          & C$\rightarrow$A      & 0.1 & 0.1    \\
                          & C$\rightarrow$D      & 0.1 & 0.1    \\
                          & D$\rightarrow$C      & 0.1 & 0.1    \\

\midrule[0.5pt]
\multirow{2}{*}{Blog}     & B1$\rightarrow$B2      & 0.1 & 0.1  \\
                          & B2$\rightarrow$B1     & 0.1 & 0.1    \\

\midrule[0.5pt]
\multirow{2}{*}{Twitch}   & DE$\rightarrow$EN      & 0.1 & 0.1  \\
                          & EN$\rightarrow$DE     & 0.1 & 0.5    \\
                          & DE$\rightarrow$FR     & 0.5 & 0.5    \\
                          & FR$\rightarrow$EN     & 0.1 & 0.1    \\
\midrule[0.5pt]
\multirow{2}{*}{Blog}     & B1$\rightarrow$B2      & 0.1 & 0.1  \\
                          & B2$\rightarrow$B1     & 0.1 & 0.1    \\

\multirow{6}{*}{MAG}      & US$\rightarrow$CN      & 0.5 & 0.1   \\
                         & US$\rightarrow$DE      & 0.1 & 0.1    \\
                          & US$\rightarrow$JP      & 0.1 & 0.5    \\
                          & US$\rightarrow$RU      & 0.1 & 0.1   \\
                          & US$\rightarrow$FR      & 0.1 & 0.1   \\
                          & CN$\rightarrow$US      & 0.5 & 0.1   \\
                          & CN$\rightarrow$DE      & 0.1 & 0.1  \\
                         & CN$\rightarrow$JP      & 0.1 & 0.1 \\
                          & CN$\rightarrow$RU      & 0.5 & 0.1  \\
                          & CN$\rightarrow$FR      & 0.1 & 0.01   \\
\midrule[0.5pt]

\end{tabular}}
\caption{Experiment hyperparameter setting Value.}
\label{tab:datasets}
\end{table*}

\begin{table*}[!t]
\centering
\small
\scalebox{0.8}{
\begin{tabular}{|c|c|c|c|c|}
\toprule
Types & Datasets & \#Node & \#Edge  & \#Label \\
\midrule
\multirow{3}{*}{Airport} & USA & 1,190 & 13,599 & \multirow{3}{*}{4} \\
                          & Brazil & 131 & 1,038 & \\
                          & Europe & 399 & 5,995 & \\
\midrule
\multirow{3}{*}{Citation} & ACMv9 & 9,360 & 15,556  & \multirow{3}{*}{5} \\
                          & Citationv1 & 8,935 & 15,098  & \\
                          & DBLPv7 & 5,484 & 8,117 & \\
\midrule
\multirow{2}{*}{Blog} & Blog1 & 2,300 & 33,471  & \multirow{2}{*}{6} \\
                        & Blog2 & 2,896 & 53,836  & \\
\midrule
\multirow{2}{*}{Twitter}   & Germany    & 9,498 & 153,138  & \multirow{3}{*}{2} \\
                          & England    & 7,126 & 35,324                         &                    \\ 
                          & France    & 6,549 & 112,666                         &                    \\ 

\midrule

\multirow{2}{*}{ACM} & ACM3 & 3,025 & 2,221,699 & \multirow{2}{*}{3} \\
                        & ACM4 & 4019 & 57,853   & \\
\midrule
\multirow{6}{*}{MAG} & US & 132,558 & 697,450 & \multirow{6}{*}{20} \\
                     & CN & 101,952 & 285,561 & \\
                     & DE & 43,032 & 126,683 & \\
                     & JP & 37,498 & 90,944  & \\
                     & RU & 32,833 & 67,994  & \\
                     & FR & 29,262 & 78,222  & \\
\bottomrule
\end{tabular}}
\caption{Dataset Statistics.}
\label{tab:datasets}
\end{table*}

\end{document}